\newtheorem{theorem}{Theorem}
\newtheorem{corollary}[theorem]{Corollary}
\newtheorem{lemma}[theorem]{Lemma}
\newtheorem{fact}[theorem]{Fact}
\newtheorem{definition}[theorem]{Definition}
\newtheorem{proposition}[theorem]{Proposition}
\newif\iffull
\newcommand{\tj}[1]{#1} 
\newcommand{\labell}[1]{\label{#1}
   }
\newcommand{\lleft}{\text{left}}
\newcommand{\rright}{\text{right}}
\newcommand{\Lleft}{\text{Left}}
\newcommand{\Rright}{\text{Right}}
\newcommand{\dir}{\text{dir}}
\newcommand{\SingleRound}{{\sc SingleRound}}
\newcommand{\ReversedRound}{{\sc ReversedRound}}
\newcommand{\m}[1]{\mathcal{#1}}
\newcommand{\prob}{\text{Prob}}
\newcommand{\NAT}{{\mathbb N}}
\newcommand{\comment}[1]{}
\newcommand{\modd}{\text{mod }}
\newcommand{\ID}{\text{ID}}
\newcommand{\RI}{\text{RI}}
\begin{document}

\title{Deterministic Symmetry Breaking in Ring Networks}

\author{
\IEEEauthorblockN{
Leszek G\k{a}sieniec\IEEEauthorrefmark{1},
Tomasz Jurdzinski\IEEEauthorrefmark{2},
Russell Martin\IEEEauthorrefmark{1} and
Grzegorz Stachowiak\IEEEauthorrefmark{2} 
}
\IEEEauthorblockA{%
\IEEEauthorrefmark{1}%
Department of Computer Science, The University of Liverpool, United Kingdom}
\IEEEauthorblockA{%
\IEEEauthorrefmark{2}%
Institute of Computer Science, University of Wroc{\l}aw, Wroc{\l}aw, Poland}
}

\maketitle

\begin{abstract}
We study a distributed coordination
mechanism for uniform agents located on a circle.
The agents perform their actions in synchronised rounds. At the beginning
of each round an agent chooses the direction of its movement from
clockwise, anticlockwise, or idle, and moves at unit speed during this round.
Agents are not allowed to overpass, i.e., when an agent collides with
another it instantly starts moving with the same speed in the opposite
direction (without exchanging any information with the other agent). However, 
at the end of each round
each agent has access to limited information regarding
its trajectory of movement during this round. 
%
We assume that $n$ mobile agents are initially located on a circle unit circumference
at arbitrary but distinct positions unknown to other agents.
The agents are equipped with unique identifiers from a fixed range.
The {\em location discovery} task to be performed by each agent is to determine
the initial position of every other agent.

Our main result states that, if the only available information
about movement in a round is limited to 
distance between the initial and the final position, then there
is a superlinear lower bound on time needed to solve the location
discovery problem. Interestingly, this result corresponds
to a combinatorial symmetry breaking problem, which might
be of independent interest. If, on the other hand, an agent has
access to the distance to its first collision with another agent
in a round, we design an asymptotically efficient and close to
optimal solution for the location discovery problem.
%
\iffull
Assuming that agents are anonymous (there are no IDs distinguishing them),
our solution applied to randomly chosen IDs from appropriately 
chosen range
gives an (almost) optimal algorithm, 
improving upon the complexity of previous randomized
results. 
\fi
\end{abstract}

\begin{IEEEkeywords}
mobile robots, location discovery, bouncing
\end{IEEEkeywords}

\section{Introduction}\labell{sec:intro}
One of the most studied network topologies in the context
of distributed computation, as well as coordination
mechanisms for mobile agents, is the ring network \cite{AW-book,kkm-book,lynch-book}.
Recently, studies of geometric ring networks were initiated in the context of terrain exploration by 
agents/robots with limited communication and navigation capabilities~\cite{CzyzowiczGKKPP12,FriedetzkyGGM12}.
This refers to the concept of swarms, i.e., large groups of limited but
cost-effective entities (robots, agents) that can be deployed to perform an exploration
in a hard-to-access hostile environment. 
The usual swarm robot properties include anonymity, negligible dimensions, no explicit communication, and no common coordinate system (cf.\ \cite{SuzukiY99}). 
Some of these models 
assume limited visibility of the surrounding environment and asynchronous
operation. In most situations involving such weak robots, the fundamental research question
concerns the feasibility of solving a given task (cf.\ \cite{DasFSY10,FlocchiniPSW08}). 
The cost of the algorithm is usually measured in terms of length of
a robot's walk or the time needed to complete the task.
There are several algorithmic solutions providing efficient distributed coordination
mechanisms in a variety of models, e.g.~\cite{CieliebakFPS12,Susca07,SuzukiY99}.
The dynamics of ``beads on a ring'' and billiard systems is also of independent 
interest, e.g.~\cite{Cooley05}.

One of the fundamental tasks in ad hoc distributed environments is to determine the
actual network topology. 
This topic was studied in networks modeled as graphs~\cite{BenderS94,ChalopinFMS10,FraigniaudGKP06}, as well as
networks deployed in a geometric environment~\cite{ArkinFM00,CzyzowiczLP11,DengKP98,HoffmannIKK01}.
Most of those solutions work under the assumption that neighbors (in a graph)
can exchange messages,
or that agents have some visibility allowing them to inspect their
nearby neighborhood.

In the case of networks containing swarm robots, communication
and visibility capabilities are often severely restricted. Lack of these capabilities
in some settings can be overcome by the possibility of agents monitoring their
own trajectories, sensing collisions with other agents, or inferring some information
from the fact that all agents behave in a fixed regular fashion.
Another factor simplifying various tasks might be a restriction on the class
of environments or the allowed movement trajectories of agents.

Following \cite{FriedetzkyGGM12,CzyzowiczGKKPP12} 
we consider a model where 
the agents operate in synchronised rounds, 
and they lack direct means of communication.  
The trajectory of an agent 
in a given round is
represented as a continuous curve that connects the start and the end
points of the route adopted by the agent. While moving along their trajectories
the agents collide with their immediate neighbours, and information on the exact
location of those collisions might be recorded and further processed. 
%
When agents are located on a circle, each agent
may eventually conclude on the relative location of all agents' initial positions,
even given only limited information about its trajectory, e.g., at specified time intervals.
This, in turn, enables other distributed mechanisms based on
full synchronisation, e.g.\ equidistant distribution along the circumference of the
circle and an optimal boundary patrolling scheme. Most of the models adopted in the literature
on swarms assume that the agents are either almost or entirely oblivious,
i.e., throughout the computation process the agents follow a very simple, rarely
amendable, routine of actions. 
Such a scenario is studied in 
\cite{CzyzowiczGKKPP12,CzyzowiczKP13,CzyzowiczDKP14}, where agents are entirely
oblivious but can register all their collisions. (In \cite{CzyzowiczKP13,CzyzowiczDKP14} 
agents might have different velocities, and in \cite{CzyzowiczDKP14} they might have
different masses.)  
In this paper we adopt the model from \cite{FriedetzkyGGM12}, where even
the possibility of an agent tracking its own trajectory is severely limited.
(The model we study can also be seen as a variation of
that studied in \cite{ASY}.)
In order to overcome this weakness, more adaptivity of behavior is allowed.
So, the ultimate goal of this line of research is to determine
how much information about their trajectories agents need to 
solve some communication or exploration problems, and how
efficiently these problems can be solved.

Our focus is on deterministic solutions for these communication and 
exploration problems for agents having unique IDs, which is necessary for
symmetry breaking.  However, our results can be applied to randomly 
chosen IDs from an appropriately chosen range to improve upon the complexity of 
previous randomized results.  Due to space reasons, those adaptations
will not be discussed in this paper.  

\subsection{Model}
\newcommand{\basic}{basic}
\newcommand{\lazy}{lazy}
\newcommand{\perceptive}{perceptive}
\newcommand{\pos}{\text{dist}} 
\newcommand{\coll}{\text{coll}} 

A network $A$ is deployed on a circle with
circumference one, along which $n$ agents (i.e., the elements of $A$)
move and interact in 
synchronised rounds, where each round lasts one unit of time. 
%
%
The agents do not necessarily share the same sense of
direction, i.e., while each agent distinguishes between its own clockwise (C) and
anticlockwise (A) directions, agents may not have a coherent view on this. 
The direction ``clockwise'' is also called ``{\rright}'', and we also refer
to ``anticklockwise'' as ``{\lleft}''.
At the beginning of a round, an agent $a$ 
assigns one of the values from the set
$\{\text{idle}, \rright, \lleft\}$ to its local
variable $\dir_a$. When the option ``idle''
is chosen, the agent starts the round without moving in any
direction.
In the case that $\dir_a=\rright$ or $\dir_a=\lleft$, 
the agent starts the round
moving at unit speed on the circle in the direction $\dir_a$. We assume that
agents are not allowed to overpass each other along the circle. When two agents 
moving in the opposite directions collide with each other, they instantly start moving with the
same speed but in the opposite directions.
If an agent $a$ moving in the direction $\dir\in\{\rright,\lleft\}$ collides
with another agent $a'$ which is currently idle, then $a$ stays idle after the collision 
and $a'$ immediately starts moving in the direction $\dir$ (i.e., in the same ``objective'' direction 
in which $a$ was moving before the collision, irrespective of the fact whether 
$a$ and $a'$ have consistent senses of direction).
The agents cannot leave marks on
the ring, they have zero visibility, and they cannot exchange messages. Instead,
during each round each agent has access to some (specified) information about
its trajectory during this round. This information can be processed or stored
for further analysis. 
Since the agents never overpass, we may assume that the agents are arranged
in an implicit (i.e.\ never disclosed to the agents) periodic order from $a_1$ to $a_n$.

Each agent has access to its relative position at the end of a round; more precisely,
it knows the distance $\pos()$ to the right (according to its own sense of direction)
between its position at the beginning of the round
and the position at the end of the round, measured in the agent's
clockwise direction.  In other words, there is no ``universal'' coordinate system on the circle, 
the distance is measured relative to the starting position of an agent at the start of the round.  
%
We distinguish three variants of the model:
\begin{itemize}
\item
{\em {\basic}} -- an agent is \textbf{not} allowed to start a round idle, it has to start
moving either in the {\rright} or the {\lleft} direction;
\item
{\em {\lazy}} -- an agent is allowed to start a round idle, moving right or left;
\item
{\em {\perceptive}} (or $1$-\perceptive) -- this is the basic model with the additional feature
that an agent gets the value $\coll()$ at the end of each round, which is equal to the distance between its
position at the beginning of the round and the position of its first collision in that round.
\end{itemize}
Thus, the {\basic} model is the weakest one. The {\lazy} model extends the {\basic} model by
increasing an agent's freedom in choosing various movement options. 
The {\perceptive} model, on
the other hand, extends the {\basic} model by providing more information 
about an agent's own trajectory to itself.

\subsection{Notation and definitions}

In this paper we address deterministic algorithms which require (for symmetry breaking)
that agents have unique identifiers (IDs). We assume that each ID is a natural number in the
set $\{1,\ldots,N\}$ and each agent is aware of the value of $N$.
We also consider randomized algorithms, 
and in this case the agents are uniform
and anonymous. That is, they are indistinguishable from 
other agents; in particular, no IDs are provided in this case. 



The actual number of agents is denoted by $n$. In general, we assume that the
only information available to agents about $n$ is whether $n$ is odd or even.
Additionally, we assume that $N\ge n>4$.\iffull\footnote{At the end of the paper
we address the issue how to determine that $n\leq 4$ and how efficiently the parity of $n$ can be determined by the agents.}\fi

For an agent $a$, $\ID_a$ denotes the identifier of $a$, and $\ID_a[i]$ denotes
the $i$th bit of $\ID_a$. We also assume that at the beginning of each round, 
each agent $a$ can set a local
variable $\dir_a$ with value {\lleft}, {\rright} or idle (only in the lazy model), 
and the value $\dir_a$ (in general) determines the 
way in which $a$ starts moving in the next round.
For natural numbers $i$ and $j$, let $[i,j]=\{k\in\NAT\,|\, i\leq k\leq j\}$ and
let $[i]=[1,i]$.

By {\em right ring distance} between agents $a$ and $a'$ we mean $1$ plus 
the number of agents on the ring between $a$ and $a'$ going from $a$ to $a'$ in the clockwise direction.
The {\em left ring distance} is defined analogously.
If no common sense of direction is established, the right/left distance from the point of
view of an agent is measured according to its own sense of direction.
Observe that, by the model's restrictions,
the relative order of agents on the ring does not change. 
Thus, the ring distance between agents does not change
during executions of algorithms.
For an agent $a$, $N_a(k)$ denotes the set of agents in ring distance at 
most $k$ from $a$.

Let $\m{S}=(S_1,\ldots,S_k)$ be a sequence of subsets of $[N]$. We say that agents {\em execute}
$\m{S}$ in a sequence of $k$ rounds if the agent $a\in[N]$ sets $\dir_a={\rright}$ in the
$i$th round iff $a\in S_i$; otherwise $\dir_a=\lleft$.
Moreover, given a set $A'\subseteq A$ of ``marked'' agents we say that $\m{S}$ is {\em executed}
on $A'$ if agents from $A'$ set their directions in consecutive rounds according to $\m{S}$,
while each $a\in A\setminus A'$ sets $\dir_a$ to $\rright$ in each round.

\subsection{A basic tool}
Let an $(n_C,n_A)$-round be any round in which $n_C$ agents start the round clockwise
and $n_A$ agents start the round anticlockwise (according to some ``objective''
sense of direction).
A simple but key property of the ring networks was observed in \cite{FriedetzkyGGM12}.
\begin{lemma}\labell{lem:round}\cite{FriedetzkyGGM12}
Assume that the positions of agents $a_1,\ldots,a_n$ at the start of an $(n_C,n_A)$-round
are $p_1,\ldots,p_n$. Then,
during 
the round all agents are rotated along the initial positions by a rotation
index of $r = (n_C-n_A)\ \modd n$, i.e., the position of $a_i$ at the end of the round
is $p_{1+(i-1+r)\modd n}$. 
\end{lemma}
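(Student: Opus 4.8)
The plan is to track the motion of the bouncing agents by passing to a standard "ghost particle" (or "phantom token") picture in which particles pass through each other instead of bouncing. First I would observe that, because no two agents may overpass, the physical trajectories of the $n$ agents are exactly the $n$ continuous curves obtained as follows: launch $n_C$ point particles clockwise and $n_A$ point particles anticlockwise from the positions $p_1,\dots,p_n$, let them move at unit speed and pass freely through one another, and then, at every instant, re-sort the particle positions around the circle and re-label them $1,\dots,n$ in cyclic order. The elastic bounce of two colliding agents is indistinguishable, position-wise, from two phantom particles crossing and swapping identities, so the multiset of occupied positions at any time $t$ is the same in both pictures; since agents cannot overpass, the cyclic order of the labels is preserved, and this uniquely matches each physical agent $a_i$ to the phantom trajectory that currently sits in cyclic slot $i$.

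Next I would compute where the phantom particles are at time $t=1$, the end of the round. Each clockwise phantom starting at $p_j$ is at $p_j+1 \equiv p_j \pmod 1$ on the unit-circumference circle, i.e.\ it returns to its start; likewise each anticlockwise phantom starting at $p_j$ is at $p_j-1\equiv p_j$. Hence at the end of the round the multiset of phantom positions is again exactly $\{p_1,\dots,p_n\}$ — the phantoms have returned to the initial configuration, just with their clockwise/anticlockwise labels possibly permuted among the same points. Therefore every agent ends the round at \emph{some} initial position $p_{\sigma(i)}$, and the only thing left to determine is the permutation $\sigma$, which by the order-preservation argument above must be a cyclic shift $i\mapsto 1+((i-1+r)\bmod n)$ for some shift amount $r$.

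To pin down $r$ I would compute the net signed displacement, i.e.\ the winding. Consider the clockwise phantom launched from $p_j$: over the unit time interval it traverses the circle once in the clockwise direction, contributing $+1$ to a global "clockwise crossing count" through any fixed reference point; an anticlockwise phantom contributes $-1$. Summing over all phantoms, the total clockwise winding of the whole cloud over the round is $n_C-n_A$. Because re-sorting/re-labelling does not change this integer winding, this is also the total number of cyclic positions (mod $n$) by which the labelled sequence of agents has advanced clockwise, which is precisely $r=(n_C-n_A)\bmod n$. Combining this with the previous paragraph gives that $a_i$ ends at $p_{1+((i-1+r)\bmod n)}$, as claimed.

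The main obstacle — and the only point that needs real care rather than bookkeeping — is justifying the phantom-particle reduction rigorously in the presence of the model's collision rules, in particular simultaneous multi-agent collisions and collisions involving an idle agent (which, however, cannot occur here since in an $(n_C,n_A)$-round every agent is moving). One must argue that the bounce dynamics is well-defined at coincidence times and that the resulting physical trajectories are exactly the sorted phantom trajectories; the clean way to do this is to perturb the starting positions infinitesimally so that all collisions become pairwise and transverse, prove the identity there, and then pass to the limit using continuity of positions in the initial data. Once the reduction is in hand, the position computation and the winding argument are immediate. (Alternatively, and more in keeping with the paper's style, one can simply invoke Lemma~\ref{lem:round} itself as the already-established basic tool and omit this proof entirely.)
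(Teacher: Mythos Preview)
The paper does not prove this lemma at all: it is stated with a citation to \cite{FriedetzkyGGM12} and used as a black box. So there is no ``paper's own proof'' to compare against, and you have in fact already anticipated this in your last parenthetical remark.

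That said, your phantom-particle (ghost) argument is the standard proof of this fact and is essentially correct. Two small points. First, the step ``total winding $n_C-n_A$, hence shift $r=(n_C-n_A)\bmod n$'' deserves one more line: if you lift the trajectories to $\mathbb{R}$ with $\tilde p_1<\cdots<\tilde p_n<\tilde p_1+1$, the non-overpassing condition forces $\tilde a_1(1)<\cdots<\tilde a_n(1)<\tilde a_1(1)+1$, so the final lifted positions are $n$ \emph{consecutive} points of the lattice $\{\tilde p_j+k\}$; summing then gives $\sum_i \tilde a_i(1)=\sum_j\tilde p_j+nk_0+r$ for some integer $k_0$, and equating with $\sum_j\tilde p_j+(n_C-n_A)$ yields $r\equiv n_C-n_A\pmod n$. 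Without this, the jump from ``total winding'' to ``shift mod $n$'' is asserted rather than shown. Second, your worry about idle agents is moot here since in an $(n_C,n_A)$-round every agent is moving, as you noted; the perturbation-and-limit device is therefore not needed, and pairwise collisions (even simultaneous disjoint ones) are handled directly by the pass-through reinterpretation.
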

By the above lemma, each agent experiences the same shift by $r$ places in a round.
Therefore, we define the {\em rotation index} of a round as the number of places by which
agents move in that round in the clockwise direction. Thus, the rotation index of 
an $(n_C,n_A)$-round is equal to $(n_C-n_A)\ \modd n$.

In this paper, \SingleRound\ denotes one
round of computation in which each agent $a$ starts moving in the direction $\dir_a$.
\ReversedRound\ denotes one
round of computation in which each agent $a$ starts moving opposite to the direction $\dir_a$.
Note that, after an execution of {\SingleRound} followed by {\ReversedRound}, each agent $a$ gets
to the position occupied by $a$ before these two rounds transpired, provided agents do not change their
local variables $\dir_a$ in between the two rounds.

\subsection{Problems considered in the paper and previous results}

The main goal of this paper is to 
evaluate the feasibility and complexity of the {\em location discovery} (LD) problem 
in the models we consider.
The {\bf location discovery problem} is to determine
the initial position (i.e.\ starting position when all agents simultaneously ``wake up'' to begin the procedure) 
of every other agent\footnote{In \cite{FriedetzkyGGM12}, it is required that eventually each agent stops at
its initial position. In this paper this requirement
is ignored. A simple way to achieve this is to reverse all rounds of the
algorithm (see properties of \SingleRound\ and \ReversedRound). However, in our solutions agents 
collect information which allows them to get back on the initial positions much faster than by 
reversing all steps of an original algorithm.}.
That is, at the end of an execution of an algorithm, each agent $a\in A$ should know initial positions of 
all other agents, with respect to its own initial position.

We consider several problems which turn out to be efficient
tools for solving the location discovery problem. Moreover, they are interesting
as themselves, since they are useful in designing more complicated
communication mechanisms. 
Below, we define these problems.


\noindent{\bf Direction agreement.} The {\em direction agreement} is to agree on
which direction is {\em clockwise} and which is {\em counterclockwise}.
That is, at the end of the direction agreement procedure
all agents have coherent view on which direction
is clockwise, independent of any ``objective'' sense of direction.

\noindent{\bf Leader election.} The {\em leader election} problem 
is solved when {\em exactly} one agent
is assigned the status ``leader'' and all other agents have the status
``non-leader''. (Note that we do not require that non-leaders know the 
ID of the leader or any other information about it.)


\noindent{\bf Nontrivial move problem.}
We say that a round is a {\em trivial move} if its rotation index 
belongs to the set $\{0,n/2\}$ and it is a {\em nontrivial move} otherwise.
The {\em nontrivial move} problem is to assign to each agent $a$ its
direction $\dir_a$ such that if $a$ starts a round in the direction $\dir_a$,
then this round is a nontrivial move.

\smallskip

For the direction agreement, leader election, and the nontrivial move
problem we use the notion of {\em coordination problems}.

\smallskip

As a tool for solutions of other problems, we also consider the emptiness
testing problem.

\noindent{\bf Emptiness testing.}
Let $A\subseteq [N]$ denote the set of IDs of agents in the network.
{\em Emptiness testing} is a protocol which given $B\subseteq[N]$, determines
whether $B\cap A=\emptyset$. (That is, each agent $a\in A$ knows $B$ as an input
and it is aware of the
fact whether $A\cap B\neq\emptyset$ at the end of an execution of the protocol.)

The location discovery problem in the {\basic} and {\perceptive}
model were studied in~\cite{FriedetzkyGGM12}. 
It has been shown that there exists a randomized solution 
for anonymous networks (i.e.\ for identical
agents without IDs) working in time $O(n\log^2n)$ with high
probability in the 
{\perceptive} model. If $n$ is odd, this solution works
also under the assumptions of the {\basic} model.
In \cite{CzyzowiczGKKPP12}, oblivious algorithms are studied,
in which an agent is not allowed to change its direction at 
the beginning of a round. However, agents have access to
positions of all their collisions during a round.
It has been shown that, for some initial configurations,
the location discovery problem is infeasible in this model.
On the other hand, there is a family of initial configurations
for which the location discovery can be solved efficiently
in (sub)linear time.


\subsection{Our results}\labell{sec:res}
In this paper, we examine the complexity of deterministic leader election,
nontrivial move, direction agreement, and location discovery problems. 
We also study the impact
on the complexity of these problems 
of the parity of $n$, and whether agents initially share the same sense
of direction. In all considered settings we obtain results which are optimal
or close to optimal (see Tables \ref{tab:gen} and \ref{tab:common}). 

First, we show that the complexity of all coordination problems is asymptotically
equal up to an additive $O(\log N)$ factor. This gives an efficient and simple
solution for location discovery when $n$ is odd (Section~\ref{sec:red}).


%

The key technical contribution of the paper states that lack of the common 
sense of direction for even $n$ substantially changes 
the complexity of all considered problems, at least in
the {\basic} and {\lazy} model.
That is, the complexity of all coordination problems and position discovery is superlinear 
with respect to $n$ for $n=O(N^{1-\epsilon})$ and
constant $\epsilon>0$. 
More precisely, all considered  problems 
require $\Omega(n\log(N/n)/\log n)$ rounds in this setting (see Table~\ref{tab:gen}).
%
%
%
The reason for these large lower bounds is that the considered tasks 
require the solution of a kind of ``symmetry-breaking'' problem.
%
We define a purely combinatorial notion of a {\em distinguisher} 
(see Section~\ref{sec:basic}) to describe this symmetry-breaking problem 
which we think might be of independent interest.
Using the probabilistic method,
we also show that this bound is 
tight.

For the perceptive model, we provide a construction which
solves the nontrivial move problem in $O(\sqrt{n}\log N)$ rounds,
thus the
lower bound $\Omega(n\log(N/n)/\log n)$
does not hold for this case.

We also show that using solutions of the coordination 
problems considered in the paper, 
the location discovery problem
can be solved in $n+o(n)$ rounds in the lazy model (or basic model with odd $n$) and
in $n/2+o(n)$ rounds in the perceptive model, provided $\log N=o(\sqrt{n})$
(see the last columns of Tables~\ref{tab:gen} and \ref{tab:common} for details). 
These results are optimal up to additive $o(n)$ factors (using Lemma~\ref{l:disc:low} described
later).

\begin{table*}
\caption{Deterministic solutions in general setting} 
\label{tab:gen}
\begin{center}
\begin{tabular}{|c|c|c|c|c|}
\hline
& leader & nontrivial & direction & location\\
& election & move & agreement & discovery\\
\hline
\hline
odd $n$ & $O(\log N)$ & $\Theta(\log(N/n))$ & $O(1)$ & $n+O(\log N)$\\
\hline
basic model, even $n$ & $\Theta(\frac{n\log(N/n)}{\log n})$ & $\Theta(\frac{n\log(N/n)}{\log n})$ &$\Theta(\frac{n\log(N/n)}{\log n})$ & not solvable\\
\hline
lazy model, even $n$ & $\Theta(\frac{n\log(N/n)}{\log n})$ & $\Theta(\frac{n\log(N/n)}{\log n})$ &$\Theta(\frac{n\log(N/n)}{\log n})$ & $n+\Theta(\frac{n\log(N/n)}{\log n})$\\
\hline
perceptive model, even $n$ & $O(\sqrt{n}\log N)$ & ${O}(\sqrt{n}\log N)$ & ${O}(\sqrt{n}\log N)$ & $\frac{n}{2}+{O}(\sqrt{n}\log^2 N)$\\
\hline
\end{tabular}
\end{center}
\end{table*}

\begin{table*}
\caption{Deterministic solutions {\em with} common sense of direction}
\label{tab:common}
\begin{center}
\begin{tabular}{|c|c|c|c|c|}
\hline
& leader & nontrivial & location\\
& election & move & discovery\\
\hline
\hline
odd $n$ & $O(\log N)$ & $\Theta(\log(N/n))$ & $n+O(\log N)$\\
\hline
basic model, even $n$ & $O(\log^2N)$ & $O(\log^2 N)$  & not solvable\\
\hline
lazy model, even $n$ & $O(\log N)$ & $O(\log N)$  & $n+O(\log N))$\\
\hline
perceptive model, even $n$ & $O(\log N)$ & $O(\log N)$  & $\frac{n}{2}+{O}(\sqrt{n}\log N)$\\
\hline
\end{tabular}
\end{center}
\end{table*}

\iffull
Interestingly, our results can be applied to improve the best randomized
solutions in anonymous networks (without IDs of agents) \cite{FriedetzkyGGM12}
by assigning
IDs 
randomly from the range $[m^3]$, for an appropriately
chosen approximation, $m$, of the size $n$.
\else
Due to space limitations, proofs are omitted from this conference 
version.  \tj{They will be presented in the full version of the paper 
available on arXiv.} The Appendix contains proofs of the results 
in Section~\ref{sec:basic}
to give a flavor of the symmetry-breaking mechanism required for
the solution of these coordination problems.  
\fi

\subsection{Structure of the paper} 
%
First, in Section~\ref{sec:tools}, we provide some basic
facts and tools regarding the considered model which will
be used throughout the paper.
In Section~\ref{sec:red}, we establish relationships between asymptotic complexities
of coordination problems, summarized in Theorem~\ref{ABC}. We also discuss 
consequences of these reductions when the size $n$ of a network
is odd.

In Section~\ref{sec:basic}, the complexity
of 
the nontrivial move problem
in the basic model is examined. In particular, a superlinear
lower bound on the complexity 
of nontrivial move 
is shown, and an (almost) matching
upper bound is provided. 
\iffull
In Section~\ref{sec:lazy} we show that the lower bound on symmetry breaking from Section~\ref{sec:basic} applies in the {\lazy} model as well. 
\fi
In Section~\ref{sec:perc}, a construction allowing us to reduce
the complexity of loation discovery to $n/2+o(n)$ is described
in the {\perceptive} model. 

\iffull
As we assume that $n>4$ in (most of) this paper, and
often require that the parity of $n$ is known (e.g., to determine whether location
discovery is solvable or not), we complement the paper by discussing the problem 
about determining the parity of $n$, and solving the considered 
problems when $n\leq 4$. These issues are presented in Section~\ref{sec:special}. 
Then,
in Section~\ref{sec:random} we discuss how our solutions can be applied 
to build efficient randomized algorithms. 
\else
We assume that $n>4$ in (most of) this paper, and
often require that the parity of $n$ is known (e.g., to determine whether location
discovery is solvable or not).
The problem of determining the parity of $n$ will be discussed in the full version of
this paper, as will the case when $n \leq 4$.  Our solutions can be applied 
to build efficient randomized algorithms, but these issues are not discussed in
this version. 
\fi

\iffull
Finally, conclusions and open problems are
presented in Section~\ref{sec:open}.
\fi

\section{Basic Properties of the Model}\labell{sec:tools}
In this section we make a few observations regarding
features and limitations of the model studied in the paper.
\iffull
First, we provide tools allowing agents to infer some
knowledge from information about its traversed distances and observed
positions of collisions. Then, basic lower bounds on the 
complexity of location discovery are stated.
\fi
%

\begin{lemma}\labell{l:opp:ind}
All agents can determine in $O(1)$ rounds whether a rotation index of a given
round is $0$, $n/2$, larger than $n/2$ or smaller than $n/2$ (according to 
their own senses of directions).
\end{lemma}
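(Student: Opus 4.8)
The plan is to use two carefully chosen rounds and compare the distances reported by a single agent (or, equivalently, by all agents, since by Lemma~\ref{lem:round} they all experience the same shift). The key idea is that a round with rotation index $r$ rotates every agent clockwise by $r$ places, so the reported distance $\pos()$ equals the clockwise distance from an agent's start position $p_i$ to $p_{1+(i-1+r)\bmod n}$. For $r=0$ this distance is exactly $0$ for every agent; this case is therefore detected immediately by a single round of any nontrivial move (e.g., all agents go {\rright}, giving $r=n\bmod n=0$ only in the degenerate case, so instead pick a round with $n_C-n_A$ not divisible by $n$). More robustly, first I would have the agents execute \SingleRound\ followed by \ReversedRound\ with some fixed direction assignment: any agent whose reported displacement after \SingleRound\ is $0$ concludes $r=0$; this is consistent across all agents by Lemma~\ref{lem:round}.

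Next, to distinguish $r=n/2$ from other nonzero values, I would exploit the self-inverse property of the half-rotation: a round with rotation index $n/2$ has the property that applying it twice yields rotation index $n\equiv 0\pmod n$, i.e., the identity. So the plan is: run a round $R$ (with some fixed direction pattern), then run the \emph{same} pattern again; if every agent is back at its start position (reported cumulative displacement $0$, detectable because the two reported $\pos()$ values sum to $0$ or $1$ appropriately around the circle) but the single round $R$ was not the identity, then $r\in\{n/2\}$ — more precisely $2r\equiv 0$ and $r\neq 0$ forces $r=n/2$ since $n$ is even (and if $n$ is odd, $r=n/2$ is impossible, which the agents know from the parity of $n$). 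This takes $O(1)$ rounds.

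Finally, to decide whether a nonzero $r\notin\{0,n/2\}$ is larger or smaller than $n/2$ according to the agent's own sense of direction, I would compare the clockwise displacement against a round known to have a larger rotation index, or simply observe the following: a round with rotation index $r$ and the \emph{reversed} round (rotation index $n-r$) produce displacements in the same direction iff $r<n-r$, i.e., iff $r<n/2$. Concretely, run \SingleRound\ (index $r$) and separately \ReversedRound\ (index $n-r$) from the same configuration; each agent sees two clockwise-measured distances $d_r$ and $d_{n-r}$. Since $d_r$ corresponds to moving $r$ places clockwise and $d_{n-r}$ to moving $n-r$ places clockwise, and these are measured on the same circle of circumference one, the agent can compare which is the ``shorter'' clockwise hop versus ``longer''; $r<n/2$ exactly when the $r$-hop covers less than half the agents, which — although the agent cannot count agents — is faithfully reflected in which displacement, read consistently, is smaller, because the underlying inter-agent gaps are the same set traversed in complementary order. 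The main obstacle I anticipate is making this last comparison rigorous without the agent knowing $n$ or the individual gap lengths: one must argue that the \emph{sign} of $d_r - d_{n-r}$ (or a comparison of $d_r$ with $1-d_r$, using that $d_{n-r} = 1 - d_r$ when the two rounds start from the same configuration) determines $r \lessgtr n/2$. Indeed $d_r + d_{n-r} = 1$ since together the two hops circumnavigate the circle exactly once, so the test reduces to checking whether $d_r < 1/2$, $d_r = 1/2$, or $d_r > 1/2$ — but this is not quite $r \lessgtr n/2$ unless the gaps are uniform, so the honest fix is to instead iterate: compose the $r$-round with itself $k$ times and detect when the accumulated index first reaches or exceeds $n/2$ (wraps past the antipode), which still costs $O(1)$ rounds only if we are allowed to also use a round of known index $1$ or use binary-search-style doubling; I would present the doubling version, using that after $\lceil\log n\rceil = O(\log N)$ — wait, that exceeds $O(1)$, so the clean argument must be the displacement-direction one, and the key lemma to prove is that the reported $\pos()$ after a round of index $r$, compared with that after the reversed round of index $n-r$, has $\pos_r < \pos_{n-r}$ exactly when $r < n/2$, which holds because the two multisets of traversed gaps are complementary within the full circle and hence the shorter arc corresponds to the smaller number of places.
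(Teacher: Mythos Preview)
Your proposal contains the right idea but you fail to extract it, and you then wander into incorrect territory.

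In step~2 you already run the round twice and look at the sum of the two $\pos()$ values. That single observation is the whole proof, and it is exactly what the paper does: if the rotation index is $r$, then after two identical rounds an agent has moved from $p_i$ to $p_{i+2r}$ (indices mod $n$), and the sum $s=\pos_1+\pos_2$ is the total clockwise arc traversed. Hence $s=0$ iff $r=0$, $s=1$ iff $2r=n$, $s<1$ iff $0<r<n/2$, and $s>1$ iff $n/2<r<n$. Two rounds, one comparison, done. You state this for the case $r=n/2$ but do not notice that the same sum $s$ already separates $r<n/2$ from $r>n/2$.

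Your step~3 is then unnecessary, and the reasoning you give there is partly wrong. The claim ``$d_r+d_{n-r}=1$ since together the two hops circumnavigate the circle exactly once'' is false: both $d_r$ and $d_{n-r}$ are measured clockwise from the same start $p_i$, so $d_r$ covers gaps $x_i,\ldots,x_{i+r-1}$ while $d_{n-r}$ covers gaps $x_i,\ldots,x_{i+n-r-1}$; these overlap rather than complement. Your final assertion that $d_r<d_{n-r}$ iff $r<n/2$ happens to be true, but for a different reason than the one you give: it holds because when $r<n-r$ the sum $d_r$ is a proper prefix of the sum $d_{n-r}$ (both starting at $x_i$), not because the multisets are ``complementary''. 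So the salvageable version of your step~3 works, but only after replacing your justification, and it is strictly more complicated than just reusing the two-round sum from step~2.
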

\iffull
\begin{proof}
Assume that an algorithm runs two consecutive rounds with the same directions $\dir_a$
of all agents. Then, the sum $s$ of distances $\pos()$ on which a vertex v is shifted in these rounds
is larger than $1$ if and only if the rotation index of such a round is larger than $n/2$.
Similar relationships hold for other values of $s$ and rotation indexes.
\end{proof}
\else
\fi
For a fixed set of agents $A$, we define the {\em rotation index} $\RI(B)$ of a set $B$
as the rotation index of a round in which all elements of $B\cap A$ start the
round moving right (clockwise) and the remaining agents start the round moving left
(anticlockwise). 
(Note that we assume an objective sense of direction when
talking about agents which start a round moving clockwise/anticlockwise.)
Thus, $\RI(B)=(|B|-(n-|B|))\ \modd n=2|B|\ \modd n$. 
\iffull
\else
Below, we state some properties which can be proved using similar reasoning to that
in the proof of Lemma~\ref{l:opp:ind}.
\fi
\begin{lemma}\labell{l:ri}

\begin{enumerate}[(a)]
\item $\RI(B)=0$ if and only if $|B|\in\{0,n/2,n\}$.
\item If $\RI(B)\neq 0$, then $0<|B|<n$.
\item If $\RI(B)\neq 0$, and $B=B_1\cup B_2$ for disjoint $B_1,B_2$,
then $\RI(B_1)\neq 0$ or $\RI(B_2)\neq 0$.
\end{enumerate}
\end{lemma}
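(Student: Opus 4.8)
The plan is to work directly from the closed-form expression $\RI(B) = 2|B| \bmod n$, which reduces all three parts to elementary number theory about when $2k \equiv 0 \pmod n$.

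For part (a), I would argue that $\RI(B)=2|B|\bmod n = 0$ iff $n \mid 2|B|$. Since $0 \le |B| \le n$, we have $0 \le 2|B| \le 2n$, so the multiples of $n$ in this range are exactly $0$, $n$, and $2n$, giving $|B| \in \{0, n/2, n\}$ (with $n/2$ only an option when $n$ is even, which is consistent since the value $n/2$ appears only in that case). First I would state the divisibility equivalence, then enumerate the multiples of $n$ in the interval $[0,2n]$.

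Part (b) is an immediate consequence of part (a): if $\RI(B)\neq 0$ then $|B| \notin \{0, n/2, n\}$, so in particular $|B| \neq 0$ and $|B| \neq n$, hence $0 < |B| < n$.

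For part (c), suppose $\RI(B)\neq 0$ and $B = B_1 \cup B_2$ with $B_1, B_2$ disjoint, so $|B| = |B_1| + |B_2|$. Assume for contradiction that $\RI(B_1) = \RI(B_2) = 0$. By part (a), each of $|B_1|, |B_2|$ lies in $\{0, n/2, n\}$; in all cases $n \mid 2|B_i|$, so $n \mid 2|B_1| + 2|B_2| = 2|B|$, which means $\RI(B) = 2|B| \bmod n = 0$, a contradiction. Hence $\RI(B_1)\neq 0$ or $\RI(B_2)\neq 0$.

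I do not expect any real obstacle here: once the formula $\RI(B) = 2|B|\bmod n$ is in hand (already derived in the excerpt), everything is a short divisibility argument, and the only mild subtlety is being careful that the value $n/2$ is only meaningful when $n$ is even — but since the statement of part (a) simply lists $n/2$ as a possible cardinality, this is automatically handled.
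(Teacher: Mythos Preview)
Your proposal is correct and matches the paper's approach: parts (a) and (b) are treated as immediate from the formula $\RI(B)=2|B|\bmod n$, and part (c) is proved by contradiction via part (a). The only cosmetic difference is that for (c) the paper enumerates the possible pairs $(|B_1|,|B_2|)\in\{0,n/2,n\}^2$ and checks none sums to an admissible $|B|$, whereas your divisibility step ($n\mid 2|B_1|$ and $n\mid 2|B_2|$ imply $n\mid 2|B|$) shortcuts that case analysis.
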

\iffull
\begin{proof}
Items (a) and (b) are obvious (see Lemma~\ref{lem:round}). Assume 
to the contrary that
$\RI(B)\neq 0$ and $\RI(B_1)=\RI(B_2)=0$ for a partition
$B_1,B_2$ of $B$. Then, (a) and (b) imply that
$|B_1|,|B_2|\in\{0,n/2,n\}$, $0<|B|<n$, and $|B|\neq n/2$. Moreover,
$|B_1|+|B_2|=|B|$. One can easily check that it is
impossible to satisfy all these relationships 
simultaneously.
\end{proof}
\fi


%

Now, we make an observation regarding information which can be inferred
by an agent using the distance between its starting position and 
the first collision in a round
(i.e., $\coll()$).
\begin{proposition}\label{prop:bounce}
Assume that an agent $b_0$ starts moving in a round in the direction $\dir_{b_0}$,
and let consecutive agents in the direction $\dir_{b_0}$ from $b_0$ be 
denoted $b_1,\ldots,b_{n-1}$.
Moreover, let the geometric
distance (on the ring) between $b_{i-1}$ and $b_i$ be $x_{i-1}$.
If $b_1,\ldots,b_k$ start the round in the direction $\dir_{b_0}$ for $k<n-1$,
and $b_{k+1}$ starts in the opposite direction to $\dir_{b_0}$, 
then the relative position of the
first collision of $b_0$ 
is equal to $(x_1+\cdots+x_{k})/2$.
\end{proposition}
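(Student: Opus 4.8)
The plan is to follow the ``reversal wave'' that travels inward from $b_{k+1}$ through the agents moving alongside $b_0$. First I would fix coordinates on the (unrolled) ring with $b_0$ starting at $0$ and $\dir_{b_0}$ pointing in the direction of increasing coordinate; reasoning on a line rather than on the circle is harmless because, as the computation will show, $b_0$'s first collision happens before time $\frac{1}{2}$, so nothing wraps around. Then $b_i$ starts at $x_0+\dots+x_{i-1}$ for $1\le i\le k+1$, agents $b_0,\dots,b_k$ have initial velocity $+1$, and $b_{k+1}$ has initial velocity $-1$.

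Next I would observe that while none of $b_0,\dots,b_k$ has yet been reversed, these $k+1$ agents all move right at unit speed keeping their original pairwise gaps, so no two of them meet; the earliest collision among $b_0,\dots,b_{k+1}$ is thus $b_k$ against the oncoming $b_{k+1}$, at time $x_k/2$, which turns $b_k$ around. I would then prove by downward induction on $m\in\{k,k-1,\dots,1\}$ that $b_m$ gets reversed (by colliding with $b_{m+1}$) at time $t_m:=\frac{1}{2}(x_m+\dots+x_k)$, while $b_0,\dots,b_{m-1}$ are still all outgoing with their original gaps intact: once $b_{m+1}$ has turned around at time $t_{m+1}$ it and the still-outgoing $b_m$ are at distance $x_m$ and close at relative speed $2$, so they meet at $t_{m+1}+x_m/2=t_m$. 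The case $m=1$ gives that $b_1$ turns around at time $\frac{1}{2}(x_1+\dots+x_k)$ and then closes the remaining gap $x_0$ to the still-outgoing $b_0$ at relative speed $2$; since $b_0$ has moved right at unit speed throughout, its first collision is at relative position $\frac{1}{2}(x_0+x_1+\dots+x_k)=\frac{1}{2}\,d(b_0,b_{k+1})$. To see this really is $b_0$'s \emph{first} collision, note that while $b_0$ keeps velocity $+1$ its other neighbour $b_{n-1}$ can never approach it faster than $b_0$ itself moves, so the collision on that side cannot come first.

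A slicker alternative is the classical pass-through reduction: replace the bouncing agents by non-interacting ``phantoms'' started at the same positions and velocities. A unit-speed head-on collision merely swaps the two velocities, so relabelling the colliding pair at each collision identifies the real system with the phantom one; hence the multiset of real positions always equals that of the phantoms, and since real agents never overpass, $b_0$'s position is always the \emph{minimum} over all phantom positions (and $b_1$'s the second smallest). The phantom of $b_0$ rises along $t\mapsto t$; among the phantoms that descend, the one starting closest to $b_0$ on the clockwise side is that of $b_{k+1}$, along $t\mapsto(x_0+\dots+x_k)-t$, so it is the first to reach the running minimum, doing so at $t^{\star}=\frac{1}{2}(x_0+\dots+x_k)$. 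For $t<t^{\star}$ the minimum is attained by $b_0$'s phantom alone, so $t^{\star}$ is precisely the first instant at which the two smallest phantom positions coincide --- i.e.\ $b_0$'s first collision --- giving the same value.

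The obstacle is bookkeeping rather than ideas: in the first route one has to push the gaps correctly through the whole cascade of bounces (the inductive step) and certify that the episode finishes before time $\frac{1}{2}$, so that wrap-around of the ring and agents outside $\{b_0,\dots,b_{k+1}\}$ are genuinely irrelevant; the pass-through formulation makes this last point automatic, at the price of having to justify the relabelling equivalence and the fact that the $i$-th real agent always occupies the $i$-th position in sorted order.
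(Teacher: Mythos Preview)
Your first route---the downward induction tracking the ``reversal wave'' from $b_{k+1}$ back to $b_0$---is exactly the paper's argument: the paper's one-line proof states the inductive claim that $b_{k-j}$ meets $b_{k-j+1}$ at distance $(x_{k-j}+\cdots+x_k)/2$ from $b_{k-j}$'s start, which is your $t_m$ with $m=k-j$. You are simply more careful than the paper in checking that nothing interferes from the other side of $b_0$ and that the cascade finishes before time $1/2$.

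Note that both you and the paper's own induction (take $j=k$) arrive at $(x_0+x_1+\cdots+x_k)/2$, not the $(x_1+\cdots+x_k)/2$ printed in the statement; the latter is a typo (the $k=0$ case already shows the answer must be $x_0/2$, and the proposition is applied elsewhere in the paper with the $x_0$ term present). So your discrepancy with the displayed formula is the paper's fault, not yours.

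Your pass-through/phantom alternative is a genuinely different and cleaner argument that the paper does not give; it sidesteps the cascade bookkeeping entirely, at the cost of the order-preservation lemma you mention.
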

\iffull
\begin{proof}
One can easily prove by induction on $j\geq 0$ the following fact:
$b_{k-j}$ collides with $b_{k-j+1}$ in distance
$(x_{k-j}+\cdots+x_k)/2$ from the initial position of $b_{k-j}$.
\end{proof}
\fi

\subsection{Lower bounds on the complexity of location discovery}
As observed by Friedetzky et al.~\cite{FriedetzkyGGM12}, location discovery
cannot be solved in the {\basic} model when $n$ is even. 
\begin{lemma}\labell{l:disc:imp}\cite{FriedetzkyGGM12}
It is impossible to solve the location discovery problem in the basic model
with even $n$.
\end{lemma}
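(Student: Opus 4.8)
The plan is to exhibit two distinct initial configurations that no agent can ever distinguish, so that location discovery is information-theoretically impossible. The key structural fact is Lemma~\ref{lem:round}: in the \basic{} model every round is an $(n_C,n_A)$-round for some $n_C+n_A=n$, and the only effect of a round on the global picture is a rigid rotation by the index $r=(n_C-n_A)\bmod n$. Crucially, each agent observes only $\pos()$, the signed distance it travelled in its own clockwise direction; by Lemma~\ref{lem:round} this equals the gap sum it is rotated across, and it depends only on $r$ and on the multiset of gaps between consecutive agents, \emph{not} on where the configuration sits on the circle.

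First I would fix the agents $a_1,\dots,a_n$ in their implicit cyclic order with gaps $x_1,\dots,x_n$ (here $x_i$ is the arc from $a_i$ to $a_{i+1}$, indices mod $n$), and consider a second configuration with gaps $x_1',\dots,x_n'$ obtained by ``reflecting'' the gap sequence — e.g.\ $x_i' = x_{n+1-i}$ — which is the mirror image of the original placement. Since $n$ is even, I would also flip every agent's sense of direction in the mirrored configuration (agent $a_i$ now calls clockwise what it used to call anticlockwise); because $n$ is even this is globally consistent with the parity assumptions an algorithm may rely on. The claim is that in these two scenarios every agent makes exactly the same sequence of direction choices and receives exactly the same sequence of $\pos()$ values in every round, hence ends in the same internal state, yet the two configurations are genuinely different as labelled point sets on the circle (the cyclic gap sequence read clockwise differs), so at least one agent fails to correctly report the others' positions.

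The heart of the argument is an induction on rounds showing the two executions are indistinguishable. Inductively assume that after round $t$ each agent has the same state in both executions, so it picks the same $\dir$; but since senses of direction are flipped in the mirrored run, ``the set of agents moving objectively clockwise'' in run~2 is exactly ``the set moving objectively anticlockwise'' in run~1. Hence the $(n_C,n_A)$ of run~1 becomes $(n_A,n_C)$ in run~2, so the rotation index of run~2 is $-r\bmod n$. Combining this with the reflected gap sequence, one checks that the arc agent $a_i$ traverses in run~2, measured in its (flipped) clockwise direction, equals the arc it traversed in run~1 in its clockwise direction — the two sign reversals (direction of measurement, sign of rotation) cancel against the reversal of the gap order. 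Therefore $\pos()$ is identical for every agent in round $t+1$, closing the induction. In the \perceptive{} model this breaks, which is why the lemma is specific to \basic{}: the first-collision distance $\coll()$ would expose the asymmetry; so I would make sure the argument uses only $\pos()$.

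The main obstacle is purely bookkeeping: getting the orientation conventions right so that the reflected gap sequence together with the negated rotation index really does reproduce each agent's observed displacement, and verifying that the mirrored configuration is not accidentally identical to the original (which could happen only for gap sequences that are palindromic up to rotation — these form a measure-zero, easily-avoided family, and one can simply pick a generic starting configuration, or note that two distinct labelled configurations suffice for the impossibility claim). One should also confirm that the parity of $n$ is all the algorithm is told, so flipping all orientations cannot be detected from the allowed side information; this is where evenness of $n$ is essential, since for odd $n$ a global reflection changes the parity of $n_C-n_A$ relative to $n$ and the symmetry breaks — consistent with the fact, noted elsewhere in the paper, that odd $n$ is solvable.
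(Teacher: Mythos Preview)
Your mirror construction does not yield an impossibility argument: the two instances you build are \emph{the same instance} from every agent's local frame of reference, so there is nothing to distinguish. Concretely, if you reflect the circle and simultaneously flip every agent's sense of direction, then each agent $a_i$'s list ``positions of all other agents, measured in my own clockwise direction from my own starting point'' is identical in the two runs. (Check it on a small example: with $n=4$, $x_1=0.1,x_2=0.2,x_3=0.3,x_4=0.4$, after reflect-and-flip each agent still sees the very same gaps in the very same subjective order.) Hence the \emph{correct} location-discovery output is the same in both runs, and the fact that the executions coincide is not a contradiction --- it is exactly what a correct algorithm would do. Your check that ``the cyclic gap sequence read clockwise differs'' compares the two embeddings in an external objective frame, which is irrelevant: the required output is relative to each agent's own position and orientation.

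A red flag is that nothing in your inductive step genuinely uses the evenness of $n$. The claim that ``for odd $n$ a global reflection changes the parity of $n_C-n_A$'' is false (reflection negates $n_C-n_A$, which preserves parity), and ``because $n$ is even this is globally consistent with the parity assumptions'' does not pin down any concrete obstruction. If your argument were valid it would apply to odd $n$ as well, contradicting solvability there.

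The paper's (one-line) argument is different and is the actual mechanism: in the \basic{} model every round has $n_C+n_A=n$, so the rotation index $r=(n_C-n_A)\bmod n=(2n_C-n)\bmod n$ is always \emph{even} when $n$ is even. Consequently every $\pos()$ value an agent ever sees is a sum of an even number of consecutive gaps $x_j$, and such sums cannot separate the individual $x_j$'s --- for instance, the two configurations $(x_i)_i$ and $(x_i+(-1)^i\varepsilon)_i$ produce identical $\pos()$ sequences for every agent in every round, yet require different outputs. That is the indistinguishability pair you want; the reflection symmetry is a dead end.
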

The reason of this impossibility result follows from the fact that, when $n$ is 
even, the rotation
index of any round in the {\basic} model is always even. Therefore,
an agent can only visit positions of agents having even ring distance from itself.

\iffull
Below, we make an observation regarding lower bounds on the complexity
of location discovery.
\else
Below, we state the lower bounds on complexity of the location discovery problem.
Intuitively, they follow from the fact that each round gives one linear equation
with variables equal to distances between agents in the {\basic} and {\lazy} model,
while it provides two linear equations in the {\perceptive} model (as two distances
are given to an agent).
\fi
\begin{lemma}\labell{l:disc:low}
\begin{enumerate}
\item
The location discovery problem in the {\basic} and {\lazy} model cannot
be solved in less than $n-1$ rounds in the worst case.
\item
The location discovery problem in the {\perceptive} model cannot
be solved in less than $n/2$ rounds in the worst case.
\end{enumerate}
\end{lemma}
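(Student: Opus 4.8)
The plan is to use a dimension/information-counting argument based on linear algebra over the reals. Fix the $n$ agents in their cyclic order $a_1,\ldots,a_n$, and let $x_i$ denote the (geometric) gap between $a_i$ and $a_{i+1}$ on the circle, indices taken mod $n$. These $n$ gaps satisfy the single linear constraint $x_1+\cdots+x_n=1$, so the configuration space of distinct initial positions (modulo a global rotation, which the agents cannot observe anyway) is an $(n-1)$-dimensional object: knowing the initial positions of all agents relative to one's own is, up to the global rotation, exactly equivalent to knowing the vector $(x_1,\ldots,x_n)$. Hence to solve location discovery an agent must be able to recover all $n-1$ independent coordinates of this vector.

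First I would argue that the information an agent extracts in a single round is (at most) one linear functional of the gap vector in the basic and lazy models, and at most two in the perceptive model. By Lemma~\ref{lem:round} the combinatorial effect of a round (which agent ends up where, hence which senses of direction, which collisions occur, and in what order) depends only on the directions chosen and not on the actual gap lengths — the rotation index $r=(n_C-n_A)\bmod n$ is purely combinatorial. Given that combinatorial pattern is fixed, the number $\pos()$ reported to an agent at the end of the round is the signed distance it has travelled, which is an alternating sum of consecutive gaps: a fixed $\pm 1/2$-integer linear combination of the $x_i$'s (the bounce point of a ``beads on a ring'' trajectory is always a half-sum of gaps, cf. Proposition~\ref{prop:bounce}). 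So round $t$ contributes one linear equation $\ell_t(x)=c_t$ in the basic/lazy model; in the perceptive model the agent additionally learns $\coll()$, another such half-sum by Proposition~\ref{prop:bounce}, for two equations per round. Crucially, an adaptive adversary strategy is just a decision tree whose branches are selected by the real-valued readings, but along any fixed root-to-leaf path the directions in each round are determined, so the accumulated knowledge after $t$ rounds is the affine subspace cut out by $t$ (resp. $2t$) linear equations.

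Next I would close the argument by an adversary/indistinguishability construction. Suppose some algorithm claims to solve location discovery in $t\le n-2$ rounds (basic/lazy) in the worst case. Run it; after $t$ rounds the agent's state is consistent with an affine subspace of the gap space of dimension at least $(n-1)-t\ge 1$ (intersected with the simplex $\sum x_i=1$, $x_i>0$, which still leaves a positive-dimensional set of valid configurations). Pick two distinct gap vectors $x\ne x'$ in that set; by construction every $\pos()$ reading the agent received is identical for $x$ and $x'$ (the combinatorial pattern is the same since directions were the same, and the linear readings agree by choice of the subspace), so the agent cannot distinguish the two initial configurations, yet they have different relative positions — contradiction. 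The perceptive bound $t\ge n/2$ follows identically, replacing $t$ linear equations by $2t$: $2t<n-1$ forces a positive-dimensional ambiguity, so $2t\ge n-1$, i.e. $t\ge n/2$ (using $t$ integer). One small point to handle cleanly: one must check that the two chosen configurations can be taken with all gaps strictly positive and distinct positions, which is immediate since the feasible region is relatively open in the subspace and positive-dimensional, hence infinite.

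The main obstacle — and the part deserving the most care in the full write-up — is rigorously justifying the claim that adaptivity buys nothing, i.e. that conditioning on the real-valued observations does not secretly give the agent more than the linear equations. The resolution is that the combinatorial trajectory pattern in a round is piecewise-constant in the gap lengths (collision orderings change only on a measure-zero set), so on the open dense part of any branch the readings are genuinely affine-linear in $x$ with the \emph{same} coefficients across an open neighborhood; an adversary can therefore always steer within the current consistent affine subspace. One should either restrict to generic gap vectors to avoid the degenerate collision-coincidence cases, or note that the lower bound only needs \emph{one} bad pair of configurations and such a pair can be chosen generically. Once this ``no free adaptivity'' lemma is nailed down, both parts of Lemma~\ref{l:disc:low} are a two-line dimension count.
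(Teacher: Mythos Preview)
Your approach is essentially the same as the paper's: a linear-algebra dimension count on the gap vector $(x_1,\ldots,x_n)$ with $\sum x_i=1$, observing that $\pos()$ contributes one linear equation per round (two with $\coll()$ in the perceptive model), so fewer than $n-1$ (resp.\ $n/2$) rounds leave a positive-dimensional family of indistinguishable configurations. One small correction: by Lemma~\ref{lem:round}, $\pos()$ is simply a sum of $r$ \emph{consecutive} gaps (not an alternating $\pm 1/2$ combination), but this only simplifies your argument; your treatment of the adaptivity issue is in fact more careful than the paper's own, which flags the concern but resolves it only informally.
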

\iffull
\begin{proof}
The goal of each
agent is to determine real numbers $x_0,\ldots,x_{n-1}$, i.e.\ the 
distances between consecutive agents. At the beginning of the procedure, 
it is only known that $\sum_{i=0}^{n-1}x_i=1$.
%
The only information which
an agent gets in a round, is the sum $x_i+x_{(i+1)\modd n}+\cdots+x_{j\modd n}$ for some $1\leq i,j< n$ corresponding
to the relative distance between its starting and final positions in a round, 
i.e.\ the value of \pos() (see Lemma~\ref{lem:round}).
In the perceptive model, an agent gets additionally another linear 
combination, \coll(), of $x_1,\ldots,x_n$
equal to the distance to the first collision in a round.
This number is $1/2$ times the sum $x_i+x_{(i+1)\ \modd n}+\cdots+x_{j\ \modd n}$ 
for some $0\leq i,j< n$, by Proposition~\ref{prop:bounce}.
Therefore, the basic facts from linear algebra imply that the necessary
number of rounds to accomplish the task of position discovery is
$n-1$ in the {\basic} and {\lazy} models and $n/2$ in the perceptive model.
%
%
Indeed, otherwise an agent would have to determine $x_1,\ldots,x_n$ uniquely
from $m<n$ linear equations on variables $x_1,\ldots,x_n$.

\tj{NOT SURE IF THE FOLLOWING IS NECESSARY? CAN WE ARGUE SIMPLER ABOUT THAT?
One may argue that there is the following flaw in the above reasoning.
As agents can adapt their behavior on the basis of partial knowledge about
$x_0,\ldots,x_{n-1}$, the set of linear equations is not fixed in advance.
And, the agents can somehow encode information about their knowledge by
choosing their initial direction in consecutive rounds. 
However, 
%
let us consider (irrational) $x_0,\ldots,x_{n-1}$
such that $\sum_{i=0}^{n-1}x_i=1$ and the exact values of $x_0,\ldots,x_{n-1}$ 
%
cannot be determined uniquely from $n-2$ linear equations
on $x_1,\ldots,x_n$ of the type 
$x_i+x_{(i+1)\modd n}+\cdots+x_{j\modd n}=C$ for some $1\leq i,j< n$ 
and constant $C$.
(Assume that $(x_0,\ldots,x_{n-1})$ is obtained 
by choosing $x_i=1/n+\epsilon_i$ for $i<n-1$ and 
$x_{n-1}=1-\sum_{j<n-1}x_j$,
where $\epsilon_i$ is a random real number chosen with uniform distribution
from the interval $[-1/n^3,1/n^3]$. Then, with probability $1$, $x_0,\ldots,x_{n-1}$
satisfy the above conditions.)
Then, as long as the number of rounds is smaller than $n-1$ in the {\basic}
or {\lazy} model and smaller than $n/2$ in the {\perceptive} model,
each agent has insufficient information to determine all values from
$\{x_0,\ldots,x_{n-1}\}$.
}
\end{proof}
\fi

\section{Reductions between considered problems}\labell{sec:red}
In this section we establish reductions between the coordination problems.
The results 
\iffull
proved in this section 
\fi
are illustrated in 
Figures~\ref{fig:red:strong} and \ref{fig:red:weak}, and are summarized in 
Theorem~\ref{ABC}.
They work for arbitrary $n$,
provided $n>4$.

\begin{figure}[h]
\begin{center}
  \includegraphics[scale=0.85]{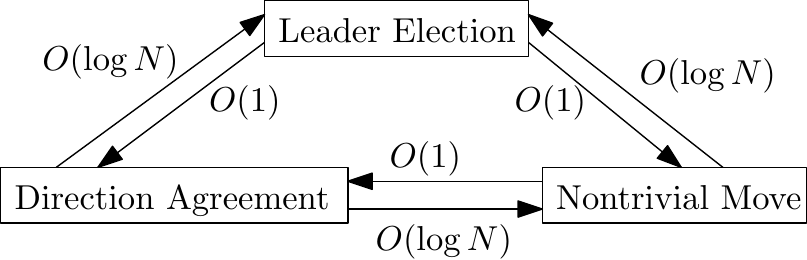}
  \caption{Complexity of reductions among coordination problems if 
	$n$ is odd or the model is either {\perceptive} or {\lazy}.}
\label{fig:red:strong}
\end{center}
\end{figure}

\begin{figure}[h]
\begin{center}
  \includegraphics[scale=0.85]{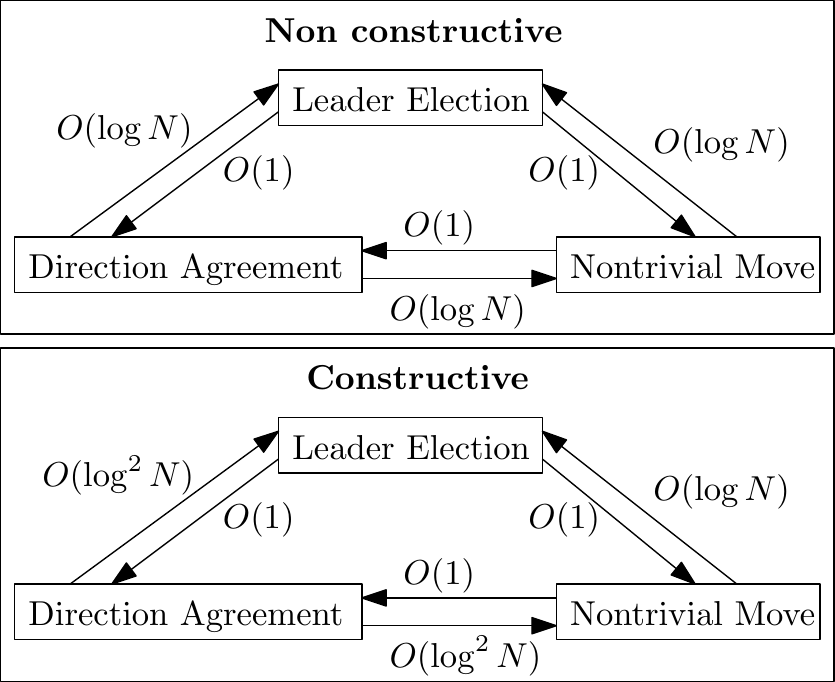}
  \caption{Complexity of reductions among coordination problems in the {\basic} model (even $n$).}
\label{fig:red:weak}
\end{center}
\end{figure}
\begin{theorem}\labell{ABC}
For each model considered in the paper (\basic, \lazy, \perceptive)
the asymptotic complexity of all coordination problems
(direction agreement, leader election, nontrivial move)
are equal up to an additive term $O(\log N)$.
\end{theorem}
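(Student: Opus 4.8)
The plan is to establish a cycle of reductions showing that any one of the three coordination problems --- direction agreement, leader election, nontrivial move --- can be solved using a solution to any other, with only an additive $O(\log N)$ overhead. Since the relation ``is reducible to, up to additive $O(\log N)$'' is transitive (composing two reductions adds at most $O(\log N)+O(\log N)=O(\log N)$ rounds), it suffices to exhibit the arrows drawn in Figures~\ref{fig:red:strong} and~\ref{fig:red:weak}; I will treat the general case first and note where the {\basic} model with even $n$ forces a detour.

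First I would reduce \emph{direction agreement to nontrivial move}. Given directions $\dir_a$ producing a round of rotation index $r\notin\{0,n/2\}$, run \SingleRound; by Lemma~\ref{lem:round} every agent is shifted by exactly $r$ places clockwise, so each agent measures the same traversed distance $\pos()$, and --- crucially --- the two agents that disagree on orientation would measure distances summing to (a known function involving) $1$ rather than seeing equal values; more precisely, since $r\neq n/2$, the shift is ``genuinely directional'', so agents can compare the sign/magnitude of their measured displacement (via Lemma~\ref{l:opp:ind}, distinguishing $<n/2$ from $>n/2$) against a globally fixed convention (e.g.\ ``the direction in which the rotation index is smaller than $n/2$ is clockwise''). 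This pins down a common sense of direction in $O(1)$ extra rounds. Next, \emph{leader election reduces to direction agreement plus emptiness testing}: once a common orientation is fixed, the agents can run the standard emptiness-testing protocol bit-by-bit over the ID range $[N]$ to locate the minimum present ID in $O(\log N)$ rounds, electing its owner; here emptiness testing is realized by Lemma~\ref{l:ri}, since a designated subset $B\subseteq[N]$ has $\RI(B)\neq 0$ witnessing $B\cap A\neq\emptyset$, and Lemma~\ref{l:ri}(c) lets us binary-search down to a single ID. Finally, \emph{nontrivial move reduces to leader election}: with a unique leader, the leader sets $\dir=\rright$ and instructs (implicitly, by the fixed protocol) all others to set $\dir=\lleft$, except this naive choice gives rotation index $2\cdot 1-n\equiv 2-n$, which may be trivial; instead the leader uses the bits of its own ID together with $O(\log N)$ probe rounds to find a subset whose cardinality $c$ satisfies $2c\not\equiv 0\pmod n$, which exists and is findable in $O(\log N)$ rounds by the halving argument of Lemma~\ref{l:ri}(c).

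For the {\basic} model with even $n$, direction agreement is no longer ``free'' from the others (the rotation index is always even, so orientation information degrades), and the reductions instead route through the picture of Figure~\ref{fig:red:weak}; there I would replace the direct $O(1)$ orientation step by an $O(\log N)$ procedure that re-derives orientation from a nontrivial move together with the parity structure forced by evenness, using Lemma~\ref{l:opp:ind} to read off whether the rotation index exceeds $n/2$ and Proposition~\ref{prop:bounce} only where collision data is available. The key point is merely that every edge in both figures costs $O(\log N)$ or less, so closing the cycle gives asymptotic equality up to the stated additive term.

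The main obstacle I expect is the nontrivial-move step: constructing, from a single leader (or a fixed orientation), a set $B$ of agents with $\RI(B)\notin\{0,n/2\}$ when the agents only know $N$ and the parity of $n$ --- not $n$ itself. One must argue that $O(\log N)$ adaptive probe rounds suffice to ``calibrate'' against the unknown modulus $n$, using Lemma~\ref{l:ri}(c) to guarantee that a bisection of any nontrivial set preserves nontriviality on one side, so that a single present ID can always be isolated and then perturbed (by toggling it in or out of $B$) to escape the two forbidden residues; handling the boundary cases $|B|\in\{0,n/2,n\}$ cleanly, especially disentangling $n/2$ from $0$ when $n$ is even, is where the argument needs care.
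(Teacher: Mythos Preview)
Your overall plan --- close a cycle of reductions among the three problems, each arrow costing at most $O(\log N)$ additional rounds --- is exactly the paper's strategy, and your reduction of direction agreement to nontrivial move via Lemma~\ref{l:opp:ind} is precisely Lemma~\ref{l:move:comm}. Two of your arrows, however, are either under- or over-engineered.

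First, your reduction of leader election to direction agreement invokes Lemma~\ref{l:ri} for emptiness testing, but $\RI(B)=0$ does \emph{not} certify $B\cap A=\emptyset$: it may equally well be that $|B\cap A|=n/2$. Likewise, Lemma~\ref{l:ri}(c) only applies once you already hold a set of nonzero rotation index, and the initial candidate set (all of $A$) has rotation index zero, so the bisection cannot start. The paper closes this gap via Lemma~\ref{lem:emptiness}, which --- assuming a common sense of direction --- performs genuine emptiness testing in $O(\log N)$ rounds in the {\basic} model and $O(1)$ rounds otherwise, disambiguating the $n/2$ case by iterating over the bits of IDs. With that tool in hand, the bit-by-bit search for the minimum present ID (Lemma~\ref{lem:emptiness:leader}) goes through as you intend.

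Second, and more importantly, what you flag as ``the main obstacle'' --- obtaining a nontrivial move from an elected leader without knowing $n$ --- is in the paper a two-line $O(1)$ argument (Lemma~\ref{l:leader:move}), not an $O(\log N)$ calibration. Run two rounds: in the first every agent sets $\dir=\rright$; in the second every agent sets $\dir=\rright$ except the leader, who sets $\dir=\lleft$. Regardless of how the individual senses of direction are distributed, flipping exactly one agent changes the rotation index by exactly $\pm 2\pmod n$, so the two rounds have indices differing by $2$. Since $n>4$, two residues that differ by $2$ cannot both lie in $\{0,n/2\}$, hence at least one of the rounds is nontrivial. Your proposed probe-and-bisect scheme is not only unnecessary but, as written, circular: Lemma~\ref{l:ri}(c) requires a set of nonzero rotation index to get started, which is exactly what you are trying to produce.

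Finally, your remarks on the {\basic} model with even $n$ are slightly misdirected: nontrivial move $\to$ direction agreement remains $O(1)$ there (Lemma~\ref{l:move:comm} makes no use of parity). The extra cost reflected in Figure~\ref{fig:red:weak} appears in the emptiness-testing subroutine of Lemma~\ref{lem:emptiness}, not in the orientation step, and Proposition~\ref{prop:bounce} plays no role in any of these reductions.
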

\iffull
\else
\comment{\tj{Constructive variants of reductions 
are obtained by changing the sets of transmitters based on
values of consecutive bits of IDs (this approach faces
difficulty because of the fact that $(n/2,n/2)$-round and
$(0,n)$-round are indistinguishable in {\basic} and {\lazy} model). The nonconstructive variants apply the probabilistic
method.
}}
\fi

\subsection{The setting with the nontrivial move problem solved}
In this section, we assume that the nontrivial move problem is solved.

\iffull
\else

\fi
\begin{lemma}\labell{l:move:comm}
If the nontrivial move problem is solved, the direction agreement problem
can be solved in $O(1)$ rounds, also in the case that agents do not have
assigned IDs.
\end{lemma}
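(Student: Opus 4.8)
The plan is to use the solved nontrivial move problem to let each agent perform a round with a known, nonzero rotation index (not equal to $n/2$), and then read off a consistent global direction from the sign of the observed displacement.

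First I would have every agent $a$ set $\dir_a$ to the value guaranteed by the nontrivial move solution, and then execute \SingleRound. By the definition of nontrivial move, the rotation index $r$ of this round satisfies $r\notin\{0,n/2\}$. By Lemma~\ref{lem:round}, every agent is shifted by exactly $r$ places clockwise (with respect to the objective sense of direction). Now I invoke Lemma~\ref{l:opp:ind}: in $O(1)$ additional rounds each agent can determine, according to its own sense of direction, whether the rotation index of this round is $0$, $n/2$, larger than $n/2$, or smaller than $n/2$. Since $r\notin\{0,n/2\}$ is objectively fixed, every agent whose sense of direction agrees with the objective one will see ``$r<n/2$'' or ``$r>n/2$'' consistently, and every agent whose sense is reversed will see the complementary answer (a clockwise shift by $r$ looks like an anticlockwise shift by $r$, i.e.\ a clockwise shift by $n-r$, which is on the opposite side of $n/2$). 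Hence the boolean ``is the observed rotation index $<n/2$?'' partitions the agents into exactly the two orientation classes.

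Next, to actually agree, each agent adopts as its ``clockwise'' the direction in which it just moved if it observed rotation index $<n/2$, and the opposite direction otherwise; equivalently, agents redefine their local orientation so that the objective rotation index $r$ is perceived as lying in $(0,n/2)$. All agents now have a coherent view of clockwise versus anticlockwise. Note this uses only the $\pos()$ values (sums of displacements over two consecutive identical rounds, as in the proof of Lemma~\ref{l:opp:ind}), so no IDs are needed anywhere — the construction is entirely symmetric — and it runs in $O(1)$ rounds on top of the nontrivial move solution.

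The main obstacle to watch for is the degenerate case $r=n/2$, which is exactly why the round must be a \emph{nontrivial} move rather than merely nonzero: a shift by $n/2$ is perceived identically under both orientations, so it carries no orientation information. This is precisely ruled out by hypothesis. A secondary subtlety is ensuring that Lemma~\ref{l:opp:ind} can be applied — it requires running the round (and its repetition) with fixed local directions $\dir_a$, which is compatible with how the nontrivial move solution assigns $\dir_a$; I would simply keep the $\dir_a$ values unchanged across the two rounds needed by Lemma~\ref{l:opp:ind}. I expect the remaining bookkeeping (that ``$n-r$'' lands strictly above $n/2$ iff $r$ is strictly below, using $n>4$ and $r\neq 0,n/2$) to be routine.
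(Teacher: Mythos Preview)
Your proposal is correct and follows essentially the same approach as the paper: run the nontrivial move twice, use Lemma~\ref{l:opp:ind} to classify the perceived rotation index relative to $n/2$, and flip the orientation of exactly those agents on one side of the threshold (the paper's rule is ``flip if $d_1+d_2>1$'', which is your ``perceived index $>n/2$''). Your observation that $r\notin\{0,n/2\}$ is precisely what makes the two orientation classes distinguishable, and that no IDs are used, matches the paper's argument.
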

\iffull
\begin{proof}
Recall that, given a nontrivial move $t$, each agent can check whether
the rotation index of $t$ is larger than $n/2$ or smaller than $n/2$
according to its sense of direction using Lemma~\ref{l:opp:ind} (note that
the rotation index of a nontrivial move is not $0$ nor $n/2$).
Note that agents $a_1,a_2$ with opposite senses of direction
experience rotation indexes $r$ and $n-r$ respectively for some $r<n$.
Thus, the agents $a_1,a_2$ experience the rotation index of $t$ as larger than $n/2$
iff $a_1$ and $a_2$ have the same sense of direction. Therefore,
if agents experiencing the rotation index of $t$ as larger than $n/2$ change their
sense of direction, the direction agreement problem is solved.
\end{proof}
\fi
%


\iffull
For greater clarity, we provide a pseudocode of the above described solution as Algorithm~\ref{alg:agr:odd}. 
\else
\tj{The result stated in Lemma~\ref{l:move:comm} is obtained by the direction agreement protocol described in Alg.~\ref{alg:agr:odd}.}
\fi
\begin{algorithm}[]
	\caption{DirAgr($a$)}
	\label{alg:agr:odd}
	\begin{algorithmic}[1]
    \State Assign $\dir_a$ as in a nontrivial move \iffull\Comment{Assumption: nontrivial move solved}\fi
    \State \SingleRound
    \State $d_1\gets \pos()$
    \State \SingleRound
    \State $d_2\gets \pos()$
        \If{$d_1+d_2>1$}
            \State change sense of direction
        \EndIf
    \end{algorithmic}
\end{algorithm}
%
%
%

\begin{lemma}\labell{l:ntm:le}
Assume that the nontrivial move problem is solved.
Then, it is possible to solve the leader election problem
in $O(\log N)$ rounds.
\end{lemma}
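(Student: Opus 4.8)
The plan is to use the solved nontrivial move as a source of a globally consistent ``clock'' that lets agents compare IDs bit by bit and eliminate candidates until exactly one leader remains. First I would invoke Lemma~\ref{l:move:comm} to agree on a common sense of direction in $O(1)$ rounds; from now on all agents have a coherent notion of clockwise, so for any set $B\subseteq[N]$ the rotation index $\RI(B)$ is well defined and, by Lemma~\ref{l:opp:ind}, every agent can determine in $O(1)$ rounds whether $\RI(B)=0$, or $=n/2$, or is in $(0,n/2)$, or in $(n/2,n)$; in particular every agent can test whether $\RI(B)\neq 0$. Combining this with Lemma~\ref{l:ri}(a), a single such test reveals whether $|B\cap A|\in\{0,n/2,n\}$ or not --- this is the primitive I would build the election on.

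Next I would run a bit-by-bit elimination over the (at most $\lceil\log N\rceil$) bit positions of the IDs, maintaining a set $C\subseteq A$ of ``still-alive'' candidates (initially $C=A$, represented implicitly: agent $a$ keeps a flag ``alive''). For bit position $i=1,2,\ldots$, let $C_0$ and $C_1$ be the alive agents whose $i$th ID bit is $0$ and $1$ respectively. The agents execute a round in which the alive agents with $\ID_a[i]=1$ start clockwise and everyone else (alive with bit $0$, or already dead) starts anticlockwise, and by Lemma~\ref{l:opp:ind} everyone reads off the resulting rotation index category; repeating with a constant number of auxiliary rounds (e.g.\ also testing the complementary set, and using \ReversedRound\ to restore positions) lets every agent learn enough about $|C_1|$ and $|C_0|$ to decide: if exactly one of $C_0,C_1$ is empty, keep the nonempty one and everyone in the empty one dies; if both are nonempty, by convention keep $C_1$ and kill $C_0$; and if the test is inconclusive because $|C_1|\in\{n/2,n\}$ (so $\RI=0$ even though $C_1\neq\emptyset$), then one extra probe --- e.g.\ also moving a single fixed half or using the fact that the full alive set $C$ has known parity behaviour --- disambiguates. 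Since IDs are distinct, after processing all bits the alive set has size exactly $1$, and that agent declares itself leader; all others are non-leaders. The total cost is $O(1)$ rounds per bit times $O(\log N)$ bits, i.e.\ $O(\log N)$ rounds, plus the $O(1)$ for the initial direction agreement.

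The main obstacle I expect is precisely the degeneracy already flagged in the paper: in the {\basic} and {\lazy} models an $(n/2,n/2)$-round and a $(0,n)$-round are indistinguishable, so the rotation-index primitive does not directly tell an agent whether a candidate set is empty or of size exactly $n/2$. Handling this requires a little care: one fix is to always test a set $B$ together with a reference set of known size (for instance the whole current alive set, whose behaviour is tracked as the algorithm proceeds, or a designated single-ID probe when $N\ge n$), so that $|B\cap A|=n/2$ can be detected and separated from $|B\cap A|=0$; another is to note that across the whole bit-elimination the invariant $|C_0|+|C_1|=|C|$ holds and the parity of $|C|$ is known, which rules out some of the bad cases. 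Once the emptiness test is made robust in this way, the rest of the argument is routine, and I would phrase it cleanly by first isolating an ``emptiness testing in $O(\log N)$ rounds given a solved nontrivial move'' statement (cf.\ the emptiness testing problem defined in the paper) and then deriving leader election from $\lceil\log N\rceil$ applications of it.
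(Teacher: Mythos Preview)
Your overall framework (agree on direction via Lemma~\ref{l:move:comm}, then eliminate candidates bit by bit) matches the paper, but there is a genuine gap in how you initialise and maintain the candidate set. You start with $C=A$, which has $\RI(C)=0$, and then you must repeatedly decide between $C_0$ and $C_1$ using only rotation-index information. As you yourself note, when $\RI(C_1)=\RI(C_0)=0$ an agent in $C_0$ cannot distinguish the configuration $(|C_0|,|C_1|)=(n/2,n/2)$ from $(n,0)$; in the first case your convention keeps $C_1$, in the second it must keep $C_0$, and no $O(1)$-round probe you describe resolves this (``reference set of known size'' and ``parity of $|C|$'' do not help here, and no agent has a globally known single-ID probe). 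Your fallback of first proving ``emptiness testing in $O(\log N)$ rounds'' and then applying it $\log N$ times yields $O(\log^2 N)$, not $O(\log N)$.

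The paper sidesteps this obstacle by a different choice of \emph{initial} candidate set: take $X$ to be the set of agents that start right in the given nontrivial move. By definition of nontrivial move, $\RI(X)\neq 0$. The invariant maintained is not ``$X$ is nonempty'' but the stronger ``$\RI(X)\neq 0$'', which by Lemma~\ref{l:ri}(b) implies $0<|X|<n$. Now Lemma~\ref{l:ri}(c) does the work: splitting $X$ into $X_0,X_1$ by the $i$th ID bit, at least one of $\RI(X_0),\RI(X_1)$ is nonzero, and testing ``$\RI(\cdot)\neq 0$'' is a single $O(1)$-round check (Lemma~\ref{l:opp:ind}). Picking the half with nonzero rotation index preserves the invariant, and after $\lceil\log N\rceil$ iterations all bits are fixed while $X$ is still nonempty, so $|X|=1$. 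The point you are missing is precisely this use of the nontrivial move as the \emph{seed} of the candidate set, which is what makes the per-bit test $O(1)$ rather than $O(\log N)$.
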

\iffull
\begin{proof}
The idea of our solution is as follows.
Given a nontrivial move, we solve the direction agreement
problem in time $O(1)$ (Lemma~\ref{l:move:comm}).
Then, let $X$ be the set of agents which start moving right
in a known nontrivial move
(according to agreed common sense of direction).
Thus, $0<|X|<n$ and we choose $X$ as the initial set of candidates
for the leader.
Then we iterate over consecutive bits of IDs and gradually decrease
the set of candidates by fixing consecutive bits of candidates.
More precisely, in the $i$th step, we check whether the rotation
index of $X_0=\{b\,|\, b\in X, \ID_b[i]=0\}$ or
$X_1=\{b\,|\, b\in X, \ID_b[i]=1\}$ is non-empty.
By Lemma~\ref{l:ri}.2, at least one of them is nonempty
and we restrict $X$ to this subset ($X_0$ or $X_1$). In this way, we eventually
get a nonempty set of agents with all bits of IDs fixed, i.e., the set with one element
(the leader).
\end{proof}
\else
\fi
\iffull
For greater clarity, we provide a pseudocode as Algorithm~\ref{alg:lead:nontrivial}.
\else
\tj{
The result stated in Lemma~\ref{l:ntm:le} is obtained by the leader election protocol described in Alg.~\ref{alg:lead:nontrivial}.}
\fi
\begin{algorithm}[h]
	\caption{LeaderWithNMove($a$)}
	\label{alg:lead:nontrivial}
	\begin{algorithmic}[1]
		\State Solve the direction agreement problem\iffull\Comment{assumption: nontrivial move solved; see Lemma~\ref{l:move:comm}}\fi
    \State $X\gets$ all agents starting {\rright} in a nontrivial move
    \For{$i=1,2,\ldots\log N$}
            \State $X_0\gets\{b\,|\, b\in X, \ID_b[i]=0\}$\iffull\Comment{i.e., set $a\in X_0$ iff $a\in X$ and $\ID_a[i]=0$}\else\Comment{i.e., set $a\in X_0$ iff \phantom{0000000000000000000000000000}   $a\in X$ and $\ID_a[i]=0$}\fi
            \If{$\RI(X_0)\neq 0$} 
                \State $X\gets X_0$\Comment{i.e., set $a\in X$ iff $a\in X_0$}
						\Else
								\State $X\gets X\setminus X_0$\Comment{i.e., set $a\in X$ iff $a\not\in X_0$}
            \EndIf
    \EndFor
    \State Set the status of $a$ as {\em leader} iff $a\in X$.
    \end{algorithmic}
\end{algorithm}

\subsection{The setting with the chosen leader}
In this section, we assume that (exactly) one agent in a network has the
status ``leader''.
\begin{lemma}\labell{l:leader:move}
If the leader is chosen, one can solve the nontrivial move problem
in $O(1)$ rounds.
\end{lemma}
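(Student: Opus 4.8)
\textbf{Proof plan for Lemma~\ref{l:leader:move}.}

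The plan is to exploit the leader to break the symmetry between a $(0,n)$-round and an $(n/2,n/2)$-round, since these are the only obstacle to turning an arbitrary round into a nontrivial one. First I would have the leader be the unique agent that starts a round moving {\rright} while all other agents start moving {\lleft} (using the leader's own sense of direction for ``{\rright}''); call this set $B$, so $|B|=1$ and $\RI(B)=2\ \modd n$. If $n>4$ this is nonzero and not equal to $n/2$, hence this single round is already a nontrivial move, and every agent simply records its $\dir$ accordingly. The only subtlety is that the leader must \emph{announce} or rather \emph{propagate} which physical direction counts as {\rright}; but in fact no announcement is needed, because the direction assignment ``leader goes one way, everyone else goes the other way'' is well defined purely locally once the leader knows it is the leader. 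The non-leaders do not need to agree on a global orientation — they each just set $\dir_a=\lleft$ in their own frame, and the leader sets $\dir_a=\rright$; the resulting round has, in any fixed objective frame, either $n_C=1$ or $n_C=n-1$, giving $\RI\in\{2\ \modd n,\ (n-2)\ \modd n\}$, both nontrivial for $n>4$.

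The key steps in order: (1) the leader sets $\dir_{\text{leader}}=\rright$ and every non-leader sets $\dir_a=\lleft$; (2) invoke Lemma~\ref{lem:round} to compute that the rotation index of this configuration equals $\RI(\{\text{leader}\})=2\ \modd n$ or its negation $n-2$, depending on the non-leaders' collective orientation, which need not be known; (3) check that $2\ \modd n\notin\{0,n/2\}$ and $(n-2)\ \modd n\notin\{0,n/2\}$ whenever $n>4$ — indeed $2\equiv 0$ forces $n\in\{1,2\}$, and $2\equiv n/2$ forces $n=4$, both excluded, and symmetrically for $n-2$; (4) conclude that with these choices of $\dir_a$, running \SingleRound\ is a nontrivial move, so the nontrivial move problem is solved, and this used $O(1)$ rounds (in fact zero rounds of actual movement are needed to \emph{set} the directions — the problem only asks for the assignment — but one may run a constant number of rounds if one wants the agents to also verify via Lemma~\ref{l:opp:ind} that the move is nontrivial).

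The main obstacle I anticipate is the orientation issue just flagged: one must be careful that ``the leader moves right, everyone else moves left'' is a legitimate direction assignment in a model where agents lack a common sense of direction. The resolution is that the nontrivial move problem, as defined, only requires each agent to pick a value $\dir_a\in\{\rright,\lleft\}$ such that the resulting round is nontrivial; it does \emph{not} require the agents to know the rotation index or to agree on orientation. Since exactly one agent (the leader) picks {\rright}, the configuration is, up to the unknown global orientation, the single-element configuration $B=\{\text{leader}\}$ or its complement, and by Lemma~\ref{l:ri}(a) together with the explicit value $\RI(B)=2|B|\ \modd n=2\ \modd n$, the move is nontrivial for all $n>4$. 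Thus the lemma follows immediately, with the only real content being this small case analysis on $n$.
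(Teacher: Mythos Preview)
Your argument has a genuine gap at exactly the point you flagged as ``the main obstacle'': the orientation issue is \emph{not} resolved by your proposed fix. When each non-leader sets $\dir_a=\lleft$ in its \emph{own} frame, there is no reason these agents all move in the same objective direction, because the model explicitly allows agents to have inconsistent senses of direction. Concretely, if $m$ of the non-leaders happen to have the opposite orientation from the leader, then in an objective frame the number of clockwise movers is $1+m$, giving rotation index $(2m+2-n)\ \modd n$; taking $m=n/2-1$ (which is perfectly possible for even $n$) yields rotation index $0$, a trivial move. So the claim that the resulting round ``has either $n_C=1$ or $n_C=n-1$'' is simply false, and the single-round scheme does not work.

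The paper's proof sidesteps this by running \emph{two} candidate rounds: (1) every agent sets $\dir_a=\rright$, and (2) every agent sets $\dir_a=\rright$ except the leader, who sets $\dir_a=\lleft$. The point is that whatever the (unknown) objective orientations are, these two rounds differ only in the leader's direction, so their rotation indices differ by exactly $2$ modulo $n$. Two numbers in $\{0,n/2\}$ cannot differ by $2$ when $n>4$, hence at least one of the two rounds is a nontrivial move, and agents can detect which one via Lemma~\ref{l:opp:ind}. The key idea you are missing is this differential argument: you cannot control the absolute rotation index of any single round without a common orientation, but you \emph{can} control the difference between two rounds that agree everywhere except at the leader.
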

\begin{proof}
Assume that the leader $a$ is chosen. Consider two assignments of directions:
(1)~$\dir_b=$~right for each $b\in A$ and (2)~$\dir_b=$~right for each $b\neq a$ and
$\dir_a=$~left.
The rotation indexes $r_1,r_2$ of such two rounds differ by $2$ modulo $n$ (Lemma~\ref{lem:round}).
As $n>4$, at least one of two numbers which differ by $2$ modulo $n$ does not belong to
$\{0,n/2\}$. Thus, the nontrivial move problem is solved.
\end{proof}

\begin{corollary}\labell{c:leader:comm}
If the leader is chosen, one can solve the direction agreement problem
in $O(1)$ rounds.
\end{corollary}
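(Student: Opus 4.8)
The plan is to chain together the two results just established. If the leader is chosen, then by Lemma~\ref{l:leader:move} we can solve the nontrivial move problem in $O(1)$ rounds. Once the nontrivial move problem is solved, Lemma~\ref{l:move:comm} gives us a direction agreement protocol running in $O(1)$ additional rounds. Composing these two bounds, the total round complexity is $O(1)+O(1)=O(1)$, which is exactly the claim.

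So the proof is essentially a one-line observation: $\textsf{leader} \xrightarrow{O(1)} \textsf{nontrivial move} \xrightarrow{O(1)} \textsf{direction agreement}$. I would write it as: "By Lemma~\ref{l:leader:move}, the chosen leader allows solving the nontrivial move problem in $O(1)$ rounds; then, by Lemma~\ref{l:move:comm}, having solved the nontrivial move problem we can solve the direction agreement problem in $O(1)$ rounds. The two protocols run in sequence, giving $O(1)$ rounds in total." It is worth noting that Lemma~\ref{l:move:comm} is stated to hold even when agents have no IDs, so this corollary inherits that robustness — though here the hypothesis already assumes a leader, which is the relevant symmetry-breaking primitive.

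There is essentially no obstacle to this proof — it is a pure transitivity argument over two already-proven reductions. The only thing to be slightly careful about is sequencing: after running the nontrivial move subroutine, the agents have set their $\dir_a$ variables appropriately, and the direction-agreement subroutine of Lemma~\ref{l:move:comm} then uses a further constant number of rounds (essentially \textsc{SingleRound} executed twice, as in Algorithm~\ref{alg:agr:odd}, plus comparing accumulated distances against $1$). Since each stage is $O(1)$ and they do not interfere, the composition is immediate. One could also phrase this without invoking Lemma~\ref{l:move:comm} explicitly — directly take the nontrivial move produced from the leader and apply the distance-doubling test — but citing the already-stated lemmas is cleaner and shorter.

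\begin{proof}
By Lemma~\ref{l:leader:move}, if the leader is chosen then the nontrivial move problem can be solved in $O(1)$ rounds. By Lemma~\ref{l:move:comm}, once the nontrivial move problem is solved the direction agreement problem can be solved in $O(1)$ additional rounds. Running these two protocols in sequence solves the direction agreement problem in $O(1)$ rounds in total.
\end{proof}
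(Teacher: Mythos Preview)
Your proof is correct and matches the paper's own argument essentially line for line: the paper also invokes Lemma~\ref{l:leader:move} to obtain a nontrivial move in $O(1)$ rounds and then applies Lemma~\ref{l:move:comm} to get direction agreement in $O(1)$ further rounds.
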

\begin{proof}
Given the leader, we obtain a nontrivial move in $O(1)$ rounds (Lemma~\ref{l:leader:move}).
Next, we apply the solution from Lemma~\ref{l:move:comm} to obtain a common
sense of direction in $O(1)$ rounds.
\end{proof}

\subsection{The setting with the common sense of direction}
In this section we consider the setting that agents have the common sense
of direction. 
\iffull
We show simple efficient solutions for leader election and 
nontrivial move in the basic model which rely on the subroutine for emptiness
testing presented at the beginning of the section.
\else
We show simple efficient solutions for leader election and 
nontrivial move in the basic model which rely on the emptiness
testing result from the following lemma.
\fi
%
\begin{lemma}\labell{lem:emptiness}
Assuming all agents share a common sense of direction,
the emptiness testing problem
can be solved in $\log N$ rounds in the basic model, 
and in
one round in the {\lazy} and {\perceptive} model. Moreover,
if $n$ is odd, the emptiness testing is solvable in one round
in the {\basic} model as well.
\end{lemma}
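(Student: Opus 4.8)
The plan is to test whether $B \cap A = \emptyset$ by having the agents in $B \cap A$ move in one direction and everyone else in the other, and then reading off the rotation index. Concretely, consider a round in which every agent $a$ with $\ID_a \in B$ sets $\dir_a = {\rright}$ and every other agent sets $\dir_a = {\lleft}$ (using the common sense of direction, which all agents share by assumption). By Lemma~\ref{lem:round}, the rotation index of this round equals $\RI(B) = 2|B \cap A| \bmod n$. By Lemma~\ref{l:ri}(a), $\RI(B) = 0$ precisely when $|B \cap A| \in \{0, n/2, n\}$. The difficulty is that $\RI(B) = 0$ does not immediately certify emptiness: it could be that $B$ contains exactly half, or all, of the agents.

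For the {\lazy} and {\perceptive} models this is resolved in a single round. In the {\lazy} model, have every agent with $\ID_a \in B$ set $\dir_a = {\rright}$ and every other agent set $\dir_a = \text{idle}$; then the rotation index is $|B \cap A| \bmod n$, which is $0$ if and only if $|B \cap A| = 0$, giving the test directly. (Equivalently, run two rounds with the $B$-agents moving and the others idle in opposite senses, as in \SingleRound/\ReversedRound, to amplify; but one round suffices.) In the {\perceptive} model, run the basic round described above with $B$-agents moving {\rright} and the rest {\lleft}; if $\RI(B) \ne 0$ we are done, and if $\RI(B) = 0$ we use the additional $\coll()$ information: by Proposition~\ref{prop:bounce}, an agent whose neighbourhood on one side contains only same-direction agents receives as $\coll()$ a genuine positive partial sum of inter-agent distances, and by comparing $\coll()$ against $\pos()$ an agent can detect whether it had a collision at all in that round; $B \cap A = \emptyset$ is equivalent to no agent experiencing any collision (every agent moves {\lleft} uniformly), which each agent can check from its own $\coll()$/$\pos()$ readout. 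Since all agents observe the same global situation, each correctly concludes whether $B \cap A = \emptyset$.

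For the {\basic} model with general $n$, we need $\log N$ rounds. The idea is a divide-and-conquer refinement: when $\RI(B) = 0$, the ambiguous cases are $|B\cap A| \in \{n/2, n\}$, both of which force $B \cap A$ to be nonempty and in fact large. So partition $B = B_0 \cup B_1$ by the first bit of the ID (say $B_c = \{x \in B : x[1] = c\}$) and apply Lemma~\ref{l:ri}(c): if $\RI(B) \ne 0$ then $B \cap A \ne \emptyset$ and we halt; otherwise, if $\RI(B) = 0$ but $B \cap A \ne \emptyset$, then $\RI(B_0) \ne 0$ or $\RI(B_1) \ne 0$, so at least one half reveals nonemptiness. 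Recurse into the halves (this requires running the two half-tests, each one basic round); after $\log N$ levels every surviving set has all ID bits fixed, i.e. is a singleton, whose rotation index $2 \bmod n \ne 0$ (since $n > 4$) unambiguously indicates whether that ID is present. Thus after $O(\log N)$ rounds the agents know whether $B \cap A = \emptyset$. Finally, for odd $n$ in the {\basic} model, Lemma~\ref{l:ri}(a) gives $\RI(B) = 0$ iff $|B \cap A| \in \{0, n\}$; running both the round for $B$ and the round for $[N] \setminus B$ (one extra round) distinguishes $|B \cap A| = 0$ from $|B \cap A| = n$, so one round (or two) suffices.

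The main obstacle is the $|B \cap A| = n/2$ ambiguity in the {\basic} model with even $n$ — it is exactly the phenomenon that makes $(n/2,n/2)$-rounds and $(0,n)$-rounds indistinguishable — and the bit-by-bit splitting together with Lemma~\ref{l:ri}(c) is what I expect to be the crux that circumvents it, at the cost of the $\log N$ factor.
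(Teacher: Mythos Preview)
Your treatment of the {\lazy} and {\perceptive} cases is essentially the paper's, modulo one recurring slip: when the rotation index comes out zero you forget the possibility $|B\cap A|=n$. In the {\lazy} paragraph you assert that $|B\cap A|\bmod n=0$ iff $|B\cap A|=0$, and in the odd-$n$ {\basic} paragraph you propose testing $[N]\setminus B$ to separate $0$ from $n$; but $\RI(B)=\RI([N]\setminus B)=0$ in both cases, so that extra round tells you nothing. The fix the paper uses is simpler and costs no rounds: every agent $a$ already knows whether $\ID_a\in B$, so once the ambiguity is reduced to $|B\cap A|\in\{0,n\}$, each agent resolves it locally (if $\ID_a\in B$ then $B\cap A\neq\emptyset$, else $B\cap A=\emptyset$). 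This self-check is what makes the one-round bounds actually hold.

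The real gap is in the {\basic} model with even $n$. You invoke Lemma~\ref{l:ri}(c) to conclude that if $\RI(B)=0$ but $B\cap A\neq\emptyset$ then $\RI(B_0)\neq 0$ or $\RI(B_1)\neq 0$. That lemma has the hypothesis $\RI(B)\neq 0$, not $\RI(B)=0$, and your claimed implication is false: take $n=8$, $|B\cap A|=4$, and suppose all four agents in $B\cap A$ share the same first bit; then $|B_0\cap A|=4$, $|B_1\cap A|=0$, and both rotation indices are $0$. So one level of splitting need not reveal anything, and recursing into both halves at every level costs $2+4+\cdots+2^{\log N}=\Theta(N)$ rounds, not $\log N$. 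The paper's argument is not a recursion but a linear scan: for each bit $i\in[\log N]$ it tests the set $\{a\in B:\ID_a[i]=1\}$ in a single round. The point is that if $|B\cap A|=n/2$ then, because IDs are distinct, some bit $i$ is not constant on $B\cap A$, hence $0<|\{a\in B\cap A:\ID_a[i]=1\}|<n/2$ and that round has nonzero rotation index. The remaining ambiguity $|B\cap A|\in\{0,n\}$ is again resolved by the local self-check. This gives exactly $\log N$ rounds.
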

\iffull
\begin{proof}
First, assume that 
the considered
model is lazy/perceptive or $n$ is odd.
In order to test whether $B'=B\cap A$ is empty:
\begin{itemize}
\item
every $a\in B$ moves right,
\item
other agents
move left in the basic model and perceptive model 
(they cannot choose ``idle'' in these models);
and they start as idle agents in the lazy model.
\end{itemize}
If the agents' positions at the end of the round with these directions are different
from their starting positions, then $B'\neq\emptyset$.
Otherwise,
\begin{itemize}
\item
in the basic/perceptive model: either $B'=\emptyset$, or $|B'|=n/2$ or $|B'|=n$ .
\item
in the lazy model: either $|B'|=0$ or $|B'|=n$.
\end{itemize}
As for the distinction between the cases $|B'|=0$ and $|B'|=n$, notice that each
agent $a\in A$ knows whether $\ID_a\in B'$. Thus, it can also distinguish
whether $B'=\emptyset$: if $\ID_a\in B'$ then $|B'|=n$ and $|B'|=0$ otherwise.
This settles the problem in the {\lazy} model, where we knew that either
$|B'|=0$ and $|B'|=n$
(when the position of an agent at the beginning of the round and at the
end of the round are equal).

In the {\basic} and {\perceptive} model, it might be the case that $|B'|=n/2$
and thus the elements from $A\setminus B'$ are still not able to distinguish
the cases $|B'|=0$ and $|B'|=n/2$:
However,
\begin{itemize}


\item
Perceptive model:

If $|B'|=n/2$, then each agent has at least one collision during a round,
while there are no collisions if $|B'|=0$ or $|B'|=n$ (as all agents start the round
with the same direction). Thus, in the perceptive model, an agent can distinguish
the case $|B'|=n/2$ from $|B|\in\{0,n\}$ by observing, whether it has noticed
a collision at all during a round.

\item
Basic model:

Here, we assume that $n$ is odd and therefore the case $|B'|=n/2$ does not happen.
Thus, each agent $a$ knows that $B'\neq\emptyset$ if and only if $\ID_a\in B$.
\end{itemize}

Regarding even $n$ and the basic model, we apply Algorithm~\ref{alg:empty:odd}.
It relies on the fact that, if $|B'|=n/2$, there is 
$i\in[\log N]$ and $j\in\{0,1\}$ such that
$0<|\{a\,|\, a\in B', \ID_a[i]=j\}|<n/2$.
%
\begin{algorithm}[]
	\caption{Emptiness($a$, $B$)\Comment{assumption: even $n$, basic model, comm.\ direction}}
	\label{alg:empty:odd}
	\begin{algorithmic}[1]
    \State If $a\in B$ then $\dir_a\gets \rright$ else $\dir_a\gets \lleft$
    \State \SingleRound
    \State If $\pos()\neq 0$: return $B\cap A$ is not empty
    \For{$i=1,2,\ldots\log N$}
            \State If $a[i]=1$ and $a\in B$ then $\dir_a\gets \rright$ else $\dir_a\gets\lleft$
            \State \SingleRound
            \State If $\pos()\neq 0$: return $B\cap A$ is not empty
    \EndFor
    \State If $a\in B$ return $B\cap A$ is not empty else return $B\cap A$ is empty.
    \end{algorithmic}
\end{algorithm}
As before, let $B'=B\cap A$.
Note that if $\pos()\neq 0$ after the first round, then $B'$ is certainly not empty.
Otherwise, $|B'|\in\{0,n,n/2\}$. Each agent $a\in B$ knows that $B'\neq\emptyset$,
so we only need to allow the agents outside $B$ to
distinguish between $|B'|=0$ and $|B'|=n/2$ (note that no agents outside $B$ attend
the protocol if $|B'|=n$). If $|B'|=n/2$, then there exists a bit $i\in[\log N]$ such that
$0<\{a\,|\, a\in X', a[i]=1\}<n/2$, and therefore the $i$th round in the for-loop
gives a nonzero rotation index, i.e., $\pos()\neq 0$ for each agent after this round.
On the other hand, if $|B'|=0$, each round of the for-loop will give the rotation
index $0$.
(Note that, as before, from the point of view of an external observer, the cases
$|B'|=0$ and $|B'|=n$ might be indistinguishable, since it is possible that
each round gives rotation index $0$ when $|B'|=n$. However, as agents know that
$|B'|\neq n/2$ and $|B'|\in\{0,n\}$, it is enough that they check if their own ID
is in $B$.)
\end{proof}
\else
\fi

With help of the emptiness testing protocol, we devise a solution to the leader
election problem. The idea of our solution is based on a binary search approach
similar to that from Lemma~\ref{l:ntm:le}. The main obstacle here
is that, without a nontrivial move, the initial set of candidates for the leader
is just $X=A$ and it has size $n$, thus its rotation index is $0$. And, the case that it is split in two
subsets $X_1,X_2$ of size $n/2$ is indistinguishable from the case that it is split in 
$X_1=X$ and $X_2=\emptyset$ (or vice versa), at least on the basis of rotation indexes of
appropriate sets. Therefore, we use the more sophisticated emptiness testing
from Lemma~\ref{lem:emptiness}.
\begin{lemma}\labell{lem:emptiness:leader}
Assuming all agents share common sense of direction,
the  leader election problem can be solved
in $O(\log^2N)$ rounds in the {\basic} model (with even $n$)
and in $\log N$ rounds in other settings.
\end{lemma}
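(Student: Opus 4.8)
The plan is to do a binary search over the set of agents, gradually narrowing down to a single leader, using the emptiness testing protocol from Lemma~\ref{lem:emptiness} as the basic query. Since all agents share a common sense of direction, $\RI(B)$ and $\pos()$-based tests are meaningful without the direction-agreement preamble. Maintain a current candidate set $X\subseteq A$ which is guaranteed nonempty; initialize $X=A$. Iterate over the bit positions $i=1,\ldots,\log N$ of the IDs. At step $i$, let $X_0=\{b\in X:\ID_b[i]=0\}$; use the emptiness testing protocol to decide whether $X_0$ (intersected with $A$, but $X_0\subseteq A$ already) is empty. If $X_0\neq\emptyset$, set $X\gets X_0$; otherwise set $X\gets X\setminus X_0$, which is then nonempty. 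After all $\log N$ bits are processed, all bits of the IDs of agents in $X$ are fixed, so $|X|=1$; that agent is the leader. This is exactly the structure of Lemma~\ref{l:ntm:le}, except that there the initial set had $0<|X|<n$ so plain rotation-index tests sufficed, whereas here $|X|=n$ initially (rotation index $0$) and the degenerate ambiguity between a split into two halves of size $n/2$ and a trivial split into $X$ and $\emptyset$ forces us to use the stronger emptiness test.

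For the running time: in the {\lazy} and {\perceptive} models, and in the {\basic} model with odd $n$, emptiness testing costs $O(1)$ rounds by Lemma~\ref{lem:emptiness}, so the whole binary search costs $O(\log N)$ rounds. In the {\basic} model with even $n$, each emptiness test costs $O(\log N)$ rounds (Algorithm~\ref{alg:empty:odd}), and we perform $\log N$ of them, giving $O(\log^2 N)$ rounds. This matches the bound claimed in the statement. I would present the algorithm as pseudocode analogous to Algorithm~\ref{alg:lead:nontrivial}, with the line $\RI(X_0)\neq 0$ replaced by a call to the emptiness-testing subroutine, and without the direction-agreement line (since the common sense of direction is assumed here).

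The correctness argument has two parts. First, the invariant ``$X\cap A\neq\emptyset$'' (equivalently $X\neq\emptyset$ since $X\subseteq A$) is preserved: if the emptiness test reports $X_0\neq\emptyset$ we take $X_0$; otherwise $X_0=\emptyset$ so $X\setminus X_0=X\neq\emptyset$. Second, at each step the set of still-free bit positions of the candidates shrinks by one, and after $\log N$ steps every bit of every ID in $X$ is determined, so $X$ cannot contain two distinct IDs; combined with $X\neq\emptyset$ this gives $|X|=1$. The agent knowing its own ID can then check whether it lies in $X$, because every membership decision in the search depended only on bits of IDs that the agent knows and on the publicly-agreed outcome of each emptiness query.

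The main obstacle — and the reason we cannot simply reuse the rotation-index test of Lemma~\ref{l:ntm:le} — is precisely the degenerate case when $|X\cap A|=n/2$ after a split. In that situation $\RI(X_0)=\RI(X_1)=0$, so a rotation-index test cannot tell whether $X_0$ is empty or contains exactly $n/2$ agents; the agents outside $X_0$ would be stuck. This is exactly what the multi-round emptiness test of Lemma~\ref{lem:emptiness} resolves: it uses either the extra collision information (perceptive model), the idle option (lazy model), oddness of $n$, or, in the hardest {\basic}-even case, an inner loop over bit positions that forces some $(n_C,n_A)$-round with $0<n_C<n/2$ and hence a nonzero $\pos()$. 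Once we have this subroutine as a black box, the binary search itself is routine; the only care needed is to confirm that the query passed to the subroutine is always a subset of $A$ so the invariant ``nonempty'' is exactly ``$\cap A$ nonempty''.
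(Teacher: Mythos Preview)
Your proposal is correct and matches the paper's own proof essentially line for line: the paper also initializes $X=A$, iterates over bit positions, queries the emptiness-testing subroutine of Lemma~\ref{lem:emptiness} on $Y=\{b\in X:\ID_b[i]=0\}$, keeps $Y$ if nonempty, and outputs the unique surviving agent; the motivating obstacle you describe (the $|X|=n/2$ ambiguity forcing the stronger emptiness test instead of a bare rotation-index check) is exactly the paper's stated reason for using Lemma~\ref{lem:emptiness} here.
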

\iffull
\begin{proof}
We implement a binary search
approach with help of the emptiness testing, where each round fixes one bit in
the binary representation of a (nonempty) set of candidates for the leader.
For completeness we describe an algorithm which employs this idea as 
Algorithm~\ref{alg:lead:odd}.
\begin{algorithm}[]
	\caption{Leader($a$)\Comment{assumption: common sense of direction}}
	\label{alg:lead:odd}
	\begin{algorithmic}[1]
    \State $X\gets$ all agents\Comment{i.e., set $a\in X$}
    \For{$i=1,2,\ldots\log N$}
            \State $Y\gets\{b\,|\, b\in X, \ID_b[i]=0\}$\Comment{i.e., set $a\in Y$ iff $a\in X$ and $\ID_a[i]=0$}
            \If{$Y$ is not empty}\Comment{use Emptiness(a,Y) -- Lemma~\ref{lem:emptiness}}
                \State $X\gets Y$\Comment{i.e., set $a\in X$ iff $a\in Y$}
            \EndIf
    \EndFor
    \State Set the status of $a$ as {\em leader} iff $a\in X$.
    \end{algorithmic}
\end{algorithm}
The proof relies on the fact that $X\neq\emptyset$ after each iteration and
it contains element with fixed leftmost $i$ bits of their IDs after the $i$th
iteration of the for-loop. 
Thus, all bits of IDs of elements of $X$ are fixed after the
for-loop.
Thus, $|X|=1$ by uniqueness of IDs.
%
\end{proof}
\else
\fi
An efficient solution for the nontrivial move problem can be easily obtained
from Lemma~\ref{lem:emptiness:leader} and Lemma~\ref{l:leader:move}.
\begin{corollary}\labell{cor:move:common}
If all agents have the same sense of direction, the nontrivial move problem can be solved
in $O(\log^2N)$ rounds in the {\basic} model (with even $n$)
and in $\log N$ rounds in other settings.
\end{corollary}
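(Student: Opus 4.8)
The plan is to obtain a nontrivial move by first electing a leader and then applying the two-round reduction of Lemma~\ref{l:leader:move}. Concretely, under the common-sense-of-direction hypothesis I would run the leader election protocol of Lemma~\ref{lem:emptiness:leader}; this fixes a unique agent $a^\star$ with status \emph{leader} in $O(\log^2 N)$ rounds in the {\basic} model with even $n$, and in $\log N$ rounds in every other setting (odd $n$, or the {\lazy}/{\perceptive} model). Note that the common-sense-of-direction assumption is exactly what Lemma~\ref{lem:emptiness:leader} already requires, so no extra preprocessing is needed.

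Once $a^\star$ is known, I would run the two candidate rounds from the proof of Lemma~\ref{l:leader:move}: (i) every agent sets $\dir = \rright$; and (ii) every agent sets $\dir = \rright$ except $a^\star$, which sets $\dir = \lleft$. By Lemma~\ref{lem:round} the rotation indices of these two rounds differ by $2$ modulo $n$, and since $n>4$ two integers differing by $2$ modulo $n$ cannot both lie in $\{0,n/2\}$; hence at least one of the two assignments is a nontrivial move. Using Lemma~\ref{l:opp:ind}, each agent can detect, in $O(1)$ further rounds, which of the two candidate rounds has rotation index $0$ or $n/2$; since all agents share a sense of direction and the procedure is deterministic, they all reach the same verdict and adopt the corresponding assignment of $\dir$ (choosing assignment (i), say, whenever both turn out to be nontrivial). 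This common assignment is a nontrivial move, so the problem is solved.

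The round count is dominated by the leader election step, yielding the claimed $O(\log^2 N)$ bound in the {\basic} model with even $n$ and the $\log N$ bound in the remaining settings; the two probing rounds together with the $O(1)$ test of Lemma~\ref{l:opp:ind} contribute only an additive constant. There is essentially no obstacle beyond this bookkeeping — the substantive content (sophisticated emptiness testing for even $n$ in the {\basic} model, the $n>4$ parity argument) has already been established in the cited lemmas, and this corollary is merely their composition.
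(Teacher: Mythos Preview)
Your proposal is correct and follows exactly the approach the paper takes: the corollary is stated as an immediate consequence of Lemma~\ref{lem:emptiness:leader} (leader election under a common sense of direction) composed with Lemma~\ref{l:leader:move} (nontrivial move in $O(1)$ rounds given a leader). Your added detail about using Lemma~\ref{l:opp:ind} to decide between the two candidate assignments is fine, though under the common-sense-of-direction hypothesis assignment~(i) has rotation index $0$, so assignment~(ii) is in fact always the nontrivial one and no test is strictly needed.
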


\iffull
Below we show that the 
\else 
We note that the
\fi
nontrivial move problem can also be solved in $O(\log N)$
rounds in the {\basic} model with even $n$, thus strengthening the $O(\log^2N)$
from Corollary~\ref{cor:move:common} for the basic model, and matching the bound
from this corollary for other models. However, the result in the following lemma
is weaker, as this is based only on a nonconstructive proof using the probabilistic
\iffull
method.
\else
method (omitted in this conference version).
\fi
\begin{lemma}\labell{lem:move:common}
If all agents have the same sense of direction, the nontrivial move problem can be solved in
$O(\log N)$ rounds.
\end{lemma}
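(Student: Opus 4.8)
The plan is to first reduce the nontrivial move problem to the task of producing a single set $B\subseteq[N]$ with nonzero rotation index, and then to obtain such a $B$ with a short random family of rounds whose good behaviour is guaranteed by the probabilistic method. For the reduction: suppose we are handed a set $B$ with $\RI(B)\neq 0$, i.e.\ $|B\cap A|\notin\{0,n/2,n\}$. If moreover $\RI(B)\notin\{0,n/2\}$ we already have a nontrivial move; otherwise $\RI(B)=n/2$, and I would run a binary search that preserves the invariant $\RI(B)\neq 0$. Split $B$ into disjoint lower and upper halves (by ID value) $B=B_1\cup B_2$; by Lemma~\ref{l:ri}(c) at least one of $\RI(B_1),\RI(B_2)$ is nonzero. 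Test $B_1$ with Lemma~\ref{l:opp:ind} in $O(1)$ rounds: if its rotation index lies in $(0,n/2)\cup(n/2,n)$ stop (this round is a nontrivial move), if it equals $n/2$ set $B\gets B_1$, and if it equals $0$ set $B\gets B_2$ (then $\RI(B_2)\neq 0$ by Lemma~\ref{l:ri}(c)). After at most $\log N$ steps $B$ is a singleton $\{a\}$ with $\RI(\{a\})\neq 0$, so $a\in A$ and $\RI(\{a\})=2$, which is nontrivial because $n>4$ forces $2\notin\{0,n/2\}$. The agents stay synchronised throughout since they share a sense of direction and the outcome of each Lemma~\ref{l:opp:ind} test is common to all. (For odd $n$ the same argument works, and is simpler since rotation index $n/2$ never occurs.) This phase costs $O(\log N)$ rounds, so it remains to find one set with nonzero rotation index in $O(\log N)$ rounds.

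To get that initial set I would use the probabilistic method. Fix a constant $c\ge 2$, let $k=O(\log N)$, draw $S_1,\dots,S_k\subseteq[N]$ independently and uniformly (each element placed in each $S_i$ with probability $1/2$), partition them into $k/c$ disjoint groups of size $c$, and for a group $g$ set $B_g=\bigcap_{i\in g}S_i$. The claim is that with positive probability the family $\mathcal S$ has the property: for every $A\subseteq[N]$ with $|A|=n$ even and $4<n\le N$, some group satisfies $|B_g\cap A|\notin\{0,n/2,n\}$, equivalently $\RI(B_g)\neq 0$. Fixing such an $\mathcal S$ once and for all (this is the only nonconstructive step), the algorithm runs each $B_g$ in turn, reading off its rotation index by Lemma~\ref{l:opp:ind} in $O(1)$ rounds; the claim says some group yields $\RI(B_g)\neq 0$, and that set is passed to the reduction above. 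Altogether $O(\log N)$ rounds.

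The substance of the argument is showing $k=O(\log N)$ suffices. Fix $A$ with $|A|=n$ and a group $g$: then $|B_g\cap A|\sim\mathrm{Bin}(n,2^{-c})$, which is concentrated about $n2^{-c}$, so each of $0$, $n/2$ and $n$ is a constant fraction away from the mean and $\Pr[\,|B_g\cap A|\in\{0,n/2,n\}\,]\le C\rho^{\,n}$ for absolute constants $C$ and $\rho<1$ (the two dominant contributions are $(1-2^{-c})^n$ and $\binom{n}{n/2}\bigl(2^{-c}(1-2^{-c})\bigr)^{n/2}$, and the second is exponentially small precisely because $2^{-c}(1-2^{-c})<1/4$ for $c\ge 2$; one checks $\rho<1$ holds with room to spare for all $n\ge 6$). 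The groups are disjoint, hence independent for a fixed $A$, so the probability that all $k/c$ groups fail for $A$ is at most $(C\rho^{\,n})^{k/c}$, and a union bound over all even $n\in(4,N]$ and all $\binom{N}{n}\le N^{n}$ choices of $A$ gives
\[
\sum_{n=6}^{N}\binom{N}{n}\,(C\rho^{\,n})^{k/c}\ \le\ \sum_{n\ge 6}\bigl(C^{k/c}\,\rho^{\,nk/c}\,N^{\,n}\bigr)<1
\]
once $k$ is a large enough constant times $\log N$, because the exponential-in-$n$ decay of $\rho^{\,nk/c}$ dominates $N^{n}$.

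The hard part is exactly this calibration. With single random sets ($c=1$) the balanced value $n/2$ is hit with probability $\binom{n}{n/2}2^{-n}=\Theta(1/\sqrt n)$, only polynomially small, and then the union bound forces $k=\Omega\!\left(n\log(N/n)/\log n\right)$ — the superlinear bound of Section~\ref{sec:basic}. Replacing single sets by intersections of a constant number of random sets biases the relevant binomial away from its balanced point, so that ``$|B_g\cap A|$ is balanced'' becomes exponentially rare; this is what makes a $\log N$-length family enough, and nailing down the constant $\rho<1$ uniformly for $n\ge 6$ is essentially the only real computation.
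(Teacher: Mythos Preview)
Your proof is correct, and its core mechanism --- biasing the random sets so that the balanced value $|B\cap A|=n/2$ becomes exponentially rather than $\Theta(1/\sqrt n)$ unlikely, then closing a union bound over all $A$ with $O(\log N)$ rounds --- is exactly the paper's. The paper, however, dispenses with your reduction step: it simply takes each $S_i$ with inclusion probability $1/16$ (i.e.\ your $c=4$), so that by a Chernoff bound $|S_i\cap A|\in[n/32,3n/32]$ with probability $1-2^{-\Theta(n)}$, and this interval already avoids all of $\{0,n/4,n/2,3n/4,n\}$; hence each such round is directly a nontrivial move with exponentially high probability, and the union bound finishes as in your last display. Your binary-search reduction from ``$\RI(B)\neq 0$'' to a genuine nontrivial move is a correct extra ingredient, and it is precisely what lets you get away with the milder bias $c=2$ (for which $|B_g\cap A|=n/4$ is only $\Theta(1/\sqrt n)$-rare and the one-shot argument would fail); with $c\ge 3$ the reduction is superfluous, and the paper just picks the bias small enough from the start. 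The paper also treats small $n$ ($4<n<16$) by a separate union bound over the polynomially many sets of that size; your uniform-in-$n$ estimate with a single $\rho<1$ is a slightly cleaner way to cover the whole range at once.
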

\iffull
\begin{proof}
We randomly choose sets $S_1,\ldots,S_k$
such that each $a\in[N]$ is an element of $S_i$ with probability $1/16$, where all choices
are independent.
Assume that $n\geq 16$.
Then, in round $i$, the agents from $S_i$ move {\rright} and the remaining ones
move {\lleft}. The expected number of agents moving {\rright} is $n/16$ and,
by Chernoff bounds, the actual number of agents moving {\rright} in a round
is in $[n/32,3n/32]$ with probability $\ge 1-2^{-\Theta(n)}$. Thus, the $i$th round
gives a nontrivial move with probability $1-2^{-cn}$ for some constant $c$.
Let $T_{N,k}$ be an event that
a sequence $S_1,\ldots,S_k$ does not give a nontrivial move for at least one
set $X\subset[N]$ such that $|X|\geq 16$.
Then,
$$\prob(T_{N,k})< \sum_{j=16}^N{N\choose j}2^{-ckj}\leq \sum_{j=16}^N 2^{j(\log(Ne/j)-ck)}$$
If $k>2(\log N)/c$, then $\log(Ne/j)-ck\leq -1$ and
$$\prob(T_{N,k})<\sum_{j=16}^N 2^{-j}<1/2^{15}.$$

Finally, we inspect the case that $4<n<16$.
Note that there are only polynomially many subsets of $[N]$ of size at most
$15$.
Moreover, for each set $A$ of size $n$ in the range $(4,16)$ and each $i$, the probability
that $S_i$ gives a nontrivial move is at least some positive constant $c$,
independent of $N$ (e.g.\ this probability is not smaller than the probability
that exactly one element of $A$ belongs to $S_i$). 
%
Thus, one can derive a constant $c'$ such that each set $A$ of size in the interval
$(4,16)$ has a nontrivial move in $S_1,\ldots,S_{c\log N}$ with probability at least $1/2$.

The above facts give, by the probabilistic method, the result stated in the lemma.
\end{proof}
\else
\fi

\subsection{Application of coordination problems for location discovery}
Given the reductions summarized in Figure~\ref{fig:red:strong}
and Figure~\ref{fig:red:weak} (see also Theorem~\ref{ABC}), one can simply solve the location
discovery problem in the {\lazy} model, irrespective of the parity of $n$, or in
the basic model for odd $n$. 
\iffull
\else
\tj{This is the case, since given the common sense of direction and
the leader, we can obtain rotation index $1$ in the {\lazy} model and $2$ in the {\basic} model (all agents but the leader
move right at the beginning of a round).}
\fi 

\begin{lemma}\labell{l:from:disc}
Assume that (at least) one among the following problems is solved:
nontrivial move, leader election, direction agreement.
Then, location discovery can be solved in $n+O(\log N)$ rounds
in the {\lazy} model with arbitrary $n$ and in the {\basic} model
with odd $n$.
\end{lemma}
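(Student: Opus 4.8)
The plan is to reduce everything to the case where we have both a leader and a common sense of direction, then exhibit a small rotation index that lets us run an efficient circular traversal. By Theorem~\ref{ABC} (Figures~\ref{fig:red:strong} and~\ref{fig:red:weak}), any one of the three coordination problems being solved yields all of them with only an $O(\log N)$ additive overhead; in particular we obtain a nontrivial move, then a common sense of direction via Lemma~\ref{l:move:comm}, and a leader via Lemma~\ref{l:ntm:le}. So after $O(\log N)$ rounds of preprocessing we may assume WLOG that all agents agree on ``clockwise'' and exactly one agent $\ell$ is the leader.

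Next I would produce a round of rotation index $1$ in the {\lazy} model (resp.\ rotation index $2$ in the {\basic} model with odd $n$). In the {\lazy} model let the leader be idle and every other agent move {\rright}: by Lemma~\ref{lem:round} the rotation index is $(n-1)-0 = n-1 \equiv -1 \pmod n$; equivalently, reversing this round (using \ReversedRound) gives rotation index $1$. In the {\basic} model the leader cannot be idle, so instead the leader moves {\lleft} and everyone else moves {\rright}, giving rotation index $(n-1)-1 = n-2 \equiv -2 \pmod n$, and reversing gives $2$; since $n$ is odd, $2$ is a generator of $\mathbb{Z}_n$, so iterating this round visits all $n$ cyclic positions. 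In the {\lazy} case a rotation index of $1$ already generates $\mathbb{Z}_n$ regardless of parity.

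The core of the argument is then the location discovery itself. Fix the shift round $\sigma$ just constructed (with rotation index $1$ in the {\lazy} case, $2$ in the {\basic}/odd case). Each agent $a$ wants the real distances $x_0,\dots,x_{n-1}$ between consecutive agents. I would have all agents repeatedly execute $\sigma$; before and after each execution every agent reads $\pos()$, which by Lemma~\ref{lem:round} equals a sum of consecutive $x_i$'s (a ``clockwise arc'') whose left endpoint advances by the rotation index each time. After $n-1$ executions, agent $a$ has learned $n-1$ arcs $x_i, x_i+x_{i+1}, \dots$ that, together with the global constraint $\sum x_i = 1$, form a triangular (hence invertible) linear system in $x_0,\dots,x_{n-1}$; in fact consecutive arcs differ by a single $x_j$, so each $x_j$ is recovered by one subtraction. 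Thus agent $a$ reconstructs all inter-agent distances and hence every agent's initial position relative to its own. The total cost is $O(\log N)$ for preprocessing plus $O(1)$ to build $\sigma$ plus $n-1$ applications of $\sigma$, i.e.\ $n + O(\log N)$ rounds.

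The main obstacle is the bookkeeping that guarantees the collected arcs actually pin down the $x_i$ uniquely — i.e.\ that the rotation-index-$1$ (or $-2$, with $n$ odd) shifts make the left endpoints of the observed arcs cycle through all residues, so that the $n-1$ equations plus normalization have full rank. This is exactly where oddness of $n$ is used in the {\basic} model: a rotation index of $2$ generates $\mathbb{Z}_n$ only when $\gcd(2,n)=1$, matching the impossibility of Lemma~\ref{l:disc:imp} for even $n$. A secondary subtlety is that in the {\basic} model we never have an ``idle'' option, so the shift must be realized via the leader reversing direction rather than standing still; one should double-check that the agents can still \emph{identify} which of the two candidate rotation indices $\{n-1,n\}$ or $\{n-2,n\}$ actually occurred, but since the leader knows it is the leader and all others know the protocol, the intended index is known a priori and Lemma~\ref{l:opp:ind} is not even needed here.
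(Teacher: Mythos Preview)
Your proof is correct and follows essentially the same approach as the paper: reduce to a leader plus a common sense of direction via the coordination reductions, then iterate a rotation-index-$1$ shift (lazy) or rotation-index-$2$ shift (basic, odd $n$) to read off the inter-agent distances. One small imprecision: in the rotation-index-$2$ case the recorded $\pos()$ values are pair sums $x_j+x_{j+1}$, so consecutive observations do not differ by a single $x_j$ and the system is circulant rather than triangular---but it is still uniquely solvable precisely because $n$ is odd, which is the point you correctly identify via $\gcd(2,n)=1$.
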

\iffull
\begin{proof}
Given a solution to any of the coordination problems, we can solve the leader
election problem and the direction agreement problem in $O(\log N)$ rounds 
(see Figure~\ref{fig:red:strong} and \ref{fig:red:weak}).

Then, in the lazy model, the leader starts every round moving right and other
agents start as idle. The rotation index is equal to $1$ and each agent knows
the relative positions of all other agents after $n$ rounds.

In the {\basic} model with odd $n$,  
each agent sets its direction to {\lleft}, except the leader which
sets the direction to {\rright}. The rotation index of each round is equal to $2$
in such a case which implies that if we repeat {\SingleRound} with this setting
several times, each agent gets back to its original position after exactly
$n$ rounds and it can determine the initial positions of all other agents at this moment,
on the basis of distances between its positions at the end of consecutive rounds.
Indeed, it we denote the distances by $x_1, x_2,\ldots, x_n$, where $x_1$ is the
distance to the closest agent in the direction {\rright}, then the consecutive
distances are $x_1+x_2,x_3+x_4,\ldots,x_n+x_1$ in the first $\lceil n/2\rceil$ rounds
and $x_2+x_3, x_4+x_5,\ldots, x_{n-1}+x_n$ in next $\lfloor n/2\rfloor$ rounds.
\end{proof}
\else
\fi
Note that the above result for the {\basic} model applies in the stronger
{\perceptive} model as well. However, we provide more efficient solutions for this
model later.

%

\subsection{Solutions for the case that $n$ is odd}\labell{sub:sec:odd}
\iffull
In this section we study the actual complexity of our considered problems
in the case that $n$ is odd.
\else
\fi

The crucial difference between the cases of odd and even $n$ follows from
the following observation:  If $n_C\neq 0$ and $n_A\neq 0$ in a round then
the round is nontrivial in the case of odd $n$.
On the other hand, this is not necessarily the case for even $n$,
as, e.g., $0\neq n_C=n_A=n/2$ or $n_C\in\{\frac34n,\frac14n\}$, $n_A=n-n_C$ do not give
a nontrivial move.
\begin{proposition}\label{prop:agree}
The direction agreement problem can be solved in $O(1)$ time in the basic
model, provided $n$ is odd.
\end{proposition}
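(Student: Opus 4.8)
The plan is to exploit the fact that for odd $n$, any round with at least one agent moving clockwise and at least one moving counterclockwise is automatically a nontrivial move, and then invoke the reduction already established (Lemma~\ref{l:move:comm}) that turns a nontrivial move into direction agreement in $O(1)$ rounds. So the whole task reduces to producing, deterministically and in $O(1)$ rounds in the basic model, a single round in which the objective counts satisfy $n_C\neq 0$ and $n_A\neq 0$.

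First I would observe why such a round is easy to guarantee when $n$ is odd: the rotation index of a round is $(n_C-n_A)\bmod n$ with $n_C+n_A=n$, hence it equals $(2n_C-n)\bmod n = 2n_C \bmod n$. Since $n$ is odd, the map $n_C\mapsto 2n_C\bmod n$ is a bijection on $\{0,\dots,n-1\}$, and the only way to get rotation index $0$ is $n_C\in\{0,n\}$ (there is no $n/2$ since $n$ is odd). Likewise the ``trivial'' value $n/2$ does not exist. Consequently, \emph{any} round in which not all agents move in the same objective direction is a nontrivial move. Now, in the basic model each agent must pick $\rright$ or $\lleft$; the agents do not share a sense of direction, so a naive ``everybody goes $\rright$'' need not be monochromatic in the objective sense — but that cuts the wrong way for us. The clean deterministic trick: have the (at least two) agents split by some fixed bit of their IDs. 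Concretely, since $n>4$ and IDs are distinct, there is a bit position $i\le\log N$ on which not all present IDs agree; but the agents don't know which one. Instead I would simply iterate: for $i=1,\dots,\log N$, run one round where $a$ sets $\dir_a=\rright$ iff $\ID_a[i]=1$ (relative to $a$'s own sense of direction — note the objective split is the same up to swapping the two classes, which is irrelevant). For the first $i$ on which the ID bits differ, at least one agent goes each objective way, so $n_C,n_A\in(0,n)$ and the round is a nontrivial move; agents detect this via Lemma~\ref{l:opp:ind} (the rotation index is neither $0$ nor $n/2$, and for odd $n$ ``rotation index $0$'' simply means $\pos()=0$ for everyone after two identical rounds).

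Wait — that gives $O(\log N)$, not $O(1)$. To get $O(1)$ I would instead use the leader-free direction-agreement route more directly, or better: observe that we can use the agents' own (possibly incoherent) sense of direction in a smarter way. Run one round with every agent moving $\rright$ according to its own sense of direction. If the senses are all coherent, this is $(n,0)$ and nothing happens; then run a second round where exactly the agents with $\ID_a$ odd move $\rright$. If in the first round some agent observes $\pos()\neq 0$, the senses were \emph{not} coherent, so $n_C,n_A\in(0,n)$ and we already have a nontrivial move. If $\pos()=0$ for all, the senses are coherent; then the second round has objective $n_C=|\{a:\ID_a\text{ odd}\}|$, which lies strictly between $0$ and $n$ unless all IDs have the same parity. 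In the latter case, a third round splitting on ``$\ID_a\bmod 4\in\{0,1\}$'' works, and so on — but since we only need a \emph{constant} number of such tests to be guaranteed to hit a non-monochromatic split \emph{in the worst case over which fixed partition we use}, this still is not constant.

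So the honest $O(1)$ statement must rest on a genuinely constant-round gadget. The right one, I expect, is: run the single round ``$a$ moves $\rright$ iff $\ID_a$ is odd'' \emph{simultaneously with} the fallback that breaks coherent-sense monochromaticity — namely combine it with the round of Lemma~\ref{l:leader:move}-style $\pm2$ shifting. Concretely: round 1, everyone $\rright$ (own sense); round 2, everyone $\rright$ except agents with the smallest ID bit pattern differing — this is where the main obstacle lies, and I would resolve it by noting that \emph{for odd $n$ it suffices to make any two agents move oppositely}, and two agents with distinct IDs differ on \emph{some} bit, but to avoid the $\log N$ we instead use the map $\ID_a\mapsto \ID_a\bmod 3$ (three classes) together with the observation that among any two distinct residues mod a well-chosen constant the split is non-monochromatic unless all IDs are congruent — and then a \emph{single} additional round with split ``$\lfloor \ID_a / M\rfloor$ parity'' for a constant number of scales $M\in\{1,2,4,\dots\}$ bounded by... still $\log N$. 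I therefore expect the actual proof in the paper to either (a) allow the $O(\log N)$ and note it is absorbed in the $O(\log N)$ terms elsewhere, or (b) bootstrap from leader election for odd $n$ (which is $O(\log N)$ by Lemma~\ref{lem:emptiness:leader}) and then use Corollary~\ref{c:leader:comm}; to actually get $O(1)$ one must already have a nontrivial move, and the cleanest source of one in $O(1)$ rounds in the basic model with odd $n$ is simply: \emph{run one round with everyone moving $\rright$ (own sense) and one round with everyone moving $\lleft$ (own sense); if the senses are coherent both are trivial, but then run the round "odd-ID agents go $\rright$, others $\lleft$"}, and the key point is that \emph{at most three such fixed splits, chosen by residues modulo $2$, $3$, $5$, are enough to guarantee one is non-monochromatic for any set of $n>4$ distinct IDs} — no, that is false in general (all IDs could be $\equiv 1$ mod each).

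Given this, the main obstacle I foresee is genuinely getting down to $O(1)$: it forces using the agents' local sense-of-direction incoherence as the symmetry-breaker rather than ID bits. The clean argument is: if the agents' senses of direction are \emph{not} all coherent, then the single round ``everyone moves $\rright$'' already has $n_C,n_A\in(0,n)$, hence (odd $n$) is a nontrivial move, and by Lemma~\ref{l:move:comm} direction agreement follows in $O(1)$. If the senses \emph{are} all coherent, then there is nothing to solve — direction agreement already holds — and agents can verify coherence in $O(1)$ rounds via Lemma~\ref{l:opp:ind} applied to the ``everyone $\rright$'' round (rotation index $0$ means $\pos()=0$ for everyone; distinguishing $n_C=n$ from $n_C=0$ is immaterial, and distinguishing from $n_C=n/2$ is impossible for odd $n$). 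The only subtlety is that a coherent-sense, all-$\rright$ round and an incoherent round that happens to have $n_C=n$ objectively look identical; but if senses are incoherent there must exist \emph{some} agent going $\lleft$ objectively, a contradiction with $n_C=n$. Hence coherent iff all-$\rright$ gives rotation index $0$, which is locally checkable. This yields: in $O(1)$ rounds, either we detect incoherence and harvest a nontrivial move (then finish via Lemma~\ref{l:move:comm}), or we certify that direction agreement already holds. Either way the problem is solved in $O(1)$ time, completing the proof.
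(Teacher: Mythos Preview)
Your final paragraph is correct and is exactly the paper's proof: run a single round in which every agent moves $\rright$ according to its own sense; for odd $n$ the rotation index is $0$ iff all agents already share the same sense of direction (since $n/2\notin\NAT$), and otherwise the round is a nontrivial move, so Lemma~\ref{l:move:comm} finishes in $O(1)$. The preceding paragraphs of ID-bit splitting, residues modulo small primes, and $O(\log N)$ attempts are all dead ends and should simply be deleted --- the key insight you eventually reach (use the incoherence of senses itself as the symmetry-breaker, not the IDs) is the entire content of the proof.
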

\iffull
\begin{proof}
Observe that, if $n$ is odd, the rotation index of a round is zero only when all
agents start the round moving in the same direction (since $n/2\not\in\NAT$). 
Thus, one can ``test'' whether
all agents share the same sense of direction in a round in which each agent $a$ starts
with direction $\dir_a$=right.
Then if the rotation index of such round is zero, 
all agents have the same sense of direction.
Otherwise, this setting of directions gives a nontrivial move (as $n$ is odd).
Thus, we can apply Lemma~\ref{l:move:comm} to obtain a common sense of direction.
\end{proof}
\else
\fi

\begin{corollary}
If the number of agents $n$ is odd, the leader election problem and the 
nontrivial move problem can be solved in time $O(\log N)$.  The location
discovery problem can be solved in $n+O(\log N)$ rounds.
\end{corollary}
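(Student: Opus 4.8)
The plan is to derive this corollary directly from the reductions established earlier in the section together with the special structure of odd $n$. The first observation is that Proposition~\ref{prop:agree} already gives us the direction agreement problem solved in $O(1)$ rounds in the basic model when $n$ is odd. So we are in the ``one coordination problem solved'' regime covered by Figure~\ref{fig:red:strong}. From the direction agreement solution, Corollary~\ref{cor:move:common} (together with the fact that for odd $n$ emptiness testing takes one round, Lemma~\ref{lem:emptiness}) yields leader election in $\log N$ rounds, and then Corollary~\ref{cor:move:common} / Lemma~\ref{l:leader:move} (or simply Figure~\ref{fig:red:strong}) yields the nontrivial move problem in $O(\log N)$ rounds as well. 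Theorem~\ref{ABC} confirms that all three coordination problems are equivalent up to an additive $O(\log N)$, so once any one is solved in $O(1)$ the rest follow in $O(\log N)$.

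Next I would invoke Lemma~\ref{l:from:disc}: since the nontrivial move (equivalently leader election) problem is solved, location discovery can be solved in $n + O(\log N)$ rounds in the basic model for odd $n$. Concretely, after electing a leader and agreeing on direction (all within $O(\log N)$ rounds), the leader starts each round moving right and all other agents start moving left, so by Lemma~\ref{lem:round} the rotation index of each such round is $(1-(n-1)) \bmod n = 2$. Because $n$ is odd, $2$ and $n$ are coprime, so repeating this $n$ times cycles each agent through all positions and returns it to its start; from the recorded $\pos()$ values — which are consecutive sums $x_i + x_{i+1}$ of inter-agent gaps — each agent reconstructs all of $x_1,\dots,x_n$ and hence every other agent's initial position. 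This gives the claimed $n + O(\log N)$ bound.

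The main obstacle, such as it is, is really just bookkeeping: one must make sure that the $O(\log N)$ overhead from the reductions (direction agreement $\to$ leader election $\to$ nontrivial move) does not accumulate into something worse, and that the additive constants in Theorem~\ref{ABC} are genuinely additive rather than multiplicative. Since Figure~\ref{fig:red:strong} and Theorem~\ref{ABC} are stated for arbitrary $n > 4$ and the odd case only makes the reductions cheaper (Proposition~\ref{prop:agree} replaces a potentially expensive step by an $O(1)$ one, and Lemma~\ref{lem:emptiness} gives single-round emptiness testing), there is no real difficulty — the corollary is essentially a corollary. I would therefore keep the proof to a couple of sentences: cite Proposition~\ref{prop:agree} for direction agreement, Theorem~\ref{ABC} (or Figure~\ref{fig:red:strong}) for propagating this to leader election and nontrivial move in $O(\log N)$ rounds, and Lemma~\ref{l:from:disc} for the $n + O(\log N)$ location discovery bound.
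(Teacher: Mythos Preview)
Your proposal is correct and follows essentially the same route as the paper: invoke Proposition~\ref{prop:agree} to get direction agreement in $O(1)$ rounds, use Theorem~\ref{ABC} (the reductions of Figure~\ref{fig:red:strong}) to propagate this to leader election and nontrivial move in $O(\log N)$ rounds, and then apply Lemma~\ref{l:from:disc} for the $n+O(\log N)$ location discovery bound. One small slip: Corollary~\ref{cor:move:common} concerns the nontrivial move problem, not leader election (you presumably meant Lemma~\ref{lem:emptiness:leader}), but since you fall back on Theorem~\ref{ABC} anyway this does not affect the argument.
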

\iffull
\begin{proof}
The complexity of the leader election problem and the 
nontrivial move problem follows from 
Proposition~\ref{prop:agree} and Theorem~\ref{ABC}
while the complexity of location
discovery follows from Proposition~\ref{prop:agree} and Lemma~\ref{l:from:disc}.
\end{proof}
\else
\fi

\iffull
Below we provide a slightly modified variant of a solution for the nontrivial
move problem, reducing the complexity from $O(\log N)$ to $O(\log (N/n))$.
\else
There is also a slightly modified variant of a solution for the nontrivial
move problem, reducing the complexity from $O(\log N)$ to $O(\log (N/n))$.
\fi
\begin{proposition}\label{prop:nontrivial}
The nontrivial move problem can be solved in $\Theta(\log(N/n))$ time in
the {\basic} model with odd $n$.
\end{proposition}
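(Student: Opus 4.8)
The plan is to show both an upper bound of $O(\log(N/n))$ and a matching lower bound of $\Omega(\log(N/n))$; together with the fact (Proposition~\ref{prop:agree}) that direction agreement is free for odd $n$, the upper bound essentially reduces to a ``one bit per round'' binary search that exploits the extra freedom afforded by $N$ being much larger than $n$. First I would invoke Proposition~\ref{prop:agree} to assume a common sense of direction after $O(1)$ rounds. The key structural observation for odd $n$ is that a round is nontrivial as soon as $n_C \ne 0$ and $n_C \ne n$ (since $n/2 \notin \NAT$), so the goal is simply: find a round in which the set $X \subseteq A$ of agents moving right satisfies $0 < |X| < n$. I would run a binary-search-style procedure over the high-order $\Theta(\log(N/n))$ bits of the IDs: start with $X = A$ and, using the chunk of bit positions where IDs can differ ``for free'' (because there are at most $n$ agents but $N$ possible IDs, the agents' IDs, viewed in a suitable radix, cannot all agree on a block of roughly $\log(N/n)$ high bits), split $X$ by one bit at a time. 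In each round test via Lemma~\ref{l:opp:ind}/Lemma~\ref{l:ri} whether the current candidate subset has rotation index $0$; if $0 < |X_0| < n$ we have already found a nontrivial move and halt, otherwise one of $X_0, X_1$ is all of $X$ (by Lemma~\ref{l:ri}(b) applied to odd $n$, $\RI(Y)=0$ iff $|Y|\in\{0,n\}$) and we recurse on the nonempty one.

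The main obstacle — and the point that makes the bound $\log(N/n)$ rather than $\log N$ — is arguing that the recursion terminates within $O(\log(N/n))$ steps. The point is a pigeonhole/counting argument: if the procedure failed to produce a nontrivial move in the first $k$ rounds, it would mean that for $k$ successive bit positions all $n$ agents' IDs agreed, i.e. the IDs of all agents lie in a sub-interval of $[N]$ of length $N/2^{k}$; but distinct IDs in such an interval force $2^{k} \ge$ (roughly) $N/n$, hence $k = O(\log(N/n))$. I would phrase this carefully: process the bits of the IDs from most significant to least significant; after seeing $k$ bits on which all agents agree, the agents' IDs are confined to an interval of size $\le N/2^{k}+1$, which, since there are $n$ distinct IDs, forces $k \le \log(N/n)+O(1)$; at the first bit where they disagree, the corresponding split yields $0<|X_0|<|X|\le n$, i.e. a nontrivial move. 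Here one should also note that the agents, knowing $N$ and $n$'s parity but not $n$ exactly, can still run for a fixed $\Theta(\log(N/n))$ rounds since this count depends only on $N$ and the known upper/lower bounds on $n$; if one insists on knowing $n$ exactly this is still fine as $n$ is odd and $>4$, but the cleaner statement uses only $N$ and the guarantee $n \le N$.

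For the lower bound $\Omega(\log(N/n))$ I would use an adversary/indistinguishability argument in the spirit of the distinguisher lower bounds in Section~\ref{sec:basic}: exhibit an ID set $A$ of size $n$ spread over $[N]$ such that any deterministic protocol of fewer than $c\log(N/n)$ rounds, against a suitably chosen $A$, produces only rounds with $n_C \in \{0,n\}$ — concretely, choose the $n$ IDs so that on the bit positions the protocol actually queries in its first few rounds the agents behave identically, which is possible as long as the number of ``constraints'' (rounds) is below $\log(N/n)$. This is the same combinatorial core as Lemma~\ref{l:disc:low}-style arguments; I would not grind through it here but point to the distinguisher machinery of Section~\ref{sec:basic} for the matching lower bound, which is where the real difficulty of the $\Theta$ claim lives.
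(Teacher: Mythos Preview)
Your upper bound is essentially the paper's argument: after Proposition~\ref{prop:agree} (or the single test round that detects mismatched orientations), iterate over bit positions of the IDs and stop at the first bit on which not all agents agree; pigeonhole gives the $O(\log(N/n))$ bound. One cosmetic point: your ``recurse on the nonempty one'' is idle --- if no nontrivial move occurred then one of $X_0,X_1$ equals $X$ and the other is empty, so $X$ never shrinks and you are simply scanning bits of $A$ in order, exactly as in the paper's Algorithm NMove. Also, agents need not know $\log(N/n)$ in advance; they halt when $\pos()\neq 0$, so your worry about ``knowing upper/lower bounds on $n$'' is moot.

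For the lower bound, your instinct (an adversary that keeps all agents behaving identically) is right, and in fact your parenthetical sketch \emph{is} the whole proof. But deferring to ``the distinguisher machinery of Section~\ref{sec:basic}'' and to Lemma~\ref{l:disc:low} is a mistake: Section~\ref{sec:basic} treats \emph{even} $n$ and yields $\Omega(n\log(N/n)/\log n)$, an entirely different bound, while Lemma~\ref{l:disc:low} is a linear-algebra argument for location discovery, not nontrivial move. The paper's lower bound here is three lines, not heavy machinery: assume common sense of direction; for a protocol $\mathcal{A}$, let $X_i\subseteq[N]$ be the largest set of IDs all of which would choose the same direction in each of the first $i$ rounds assuming no nontrivial move has occurred. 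Since each round at most halves this set, $|X_i|\ge N/2^i$; hence for $i<\log(N/n)$ we have $|X_i|>n$, and placing the $n$ agents inside $X_i$ forces every one of the first $i$ rounds to be $(n,0)$ or $(0,n)$. You should state this directly rather than pointing elsewhere.
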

\iffull
\begin{proof}
Since $n$ is odd, the rotation index cannot be equal to $n/2$ and thus it 
is sufficient to obtain any rotation index not equal to $0$.
As observed in Proposition~\ref{prop:agree}, an assignment $\dir_a=$right for
each $a$ gives a nontrivial move if and only if the agents do not share a common sense
of direction. Thus, in a round with such assignment of directions, either
we obtain a nontrivial move (problem solved) or we know that all agents share
the same sense of direction.
In the latter case, it is sufficient to split $A$ in two nonempty subsets $A_1$, $A_2$ such that
the elements of $A_1$ start a round moving right and 
the elements of $A_2$ start a round moving left.
We split agents on the basis of consecutive bits
of the binary encoding of their IDs (of length $\lceil\log N\rceil$).
We succeed after finding $i$ such that the set of agents with $0$ on the $i$th bit 
and the set of agents with $1$ on the $i$th bit are nonempty.
We use the observation that there are at most $\log (N/n)$ bits such
that all agents share the same value of IDs on those bits. Thus, assume
that the sets $\{a\,|\,\ID_a[i]=0\}$ and $\{a\,|\,\ID_a[i]=1\}$ are not empty
for some $i\in[\log N]$.
If agents
share the same sense of directions, and an agent $a$ chooses 
direction right if and only if $\ID_a[i]=0$,
then we get a nontrivial move after inspecting at most $\log(N/n)$ bits.
The following algorithm formalizes this idea.
\begin{algorithm}[]
	\caption{NMove($a$) \Comment{assumption: odd $n$}}
	\label{alg:move:odd}
	\begin{algorithmic}[1]
    \State $\dir_a\gets {\rright}$
    \State \SingleRound
    \State $d_1\gets \pos()$
    \If{$d_1\neq 0$}
        \State return
    \Else
        \State $X\gets$ all agents\Comment{i.e., set $a\in X$}
        \For{$i=1,2,\ldots$}
            \State $Y\gets\{b\,|\, b\in X, \ID_b(i)=0\}$
            \State if $a\in Y$ then dir$\gets \rright$ else dir$\gets\lleft$
            \State \SingleRound
            \State If $\pos()\neq 0$ return current value of dir \Comment{Nontrivial move}
        \EndFor
    \EndIf
    \end{algorithmic}
\end{algorithm}

As for the lower bound, assume that all agents share the same sense of direction and we have
an algorithm $\m{A}$.
Let $X_i$ be the largest set of IDs such that all elements
of $X_i$ choose the same direction in the first $i$ rounds of $\m{A}$, provided there is no nontrivial
round among rounds $1,\ldots,i-1$. Then, $|X_i|\geq N/2^i$. If $i<\log(N/n)$, then $|X_i|>n$ which
implies that $\m{A}$ does not give a nontrivial step if the set of IDs is equal to $X_i$.
\end{proof}
\fi



\section{Basic model with even $n$}\labell{sec:basic}
It is known (Lemma~\ref{l:disc:imp}) that the location discovery problem cannot 
be solved in the basic model when $n$ is even.
However, we can still try to solve other coordination problems. 
\iffull
We show in this section
that their complexity is significantly larger than for the case of odd $n$.
\else
Our results in this section state the their complexity is significantly larger than
for the case of odd $n$.
\fi
To this aim, we define a related combinatorial problem which we believe 
can be of independent interest.
\iffull
\else
Proofs are omitted here, but can be found in the Appendix.
\fi

First, we define a combinatorial notion of a {\em distinguisher}. Then, a relationship
between the size of a distinguisher and the 
complexity of the corresponding nontrivial move problem is established.
Finally, tight bounds on the smallest size of distinghuishers and the complexity
of the nontrivial move problem are showed.

\begin{definition}
We say that a family $\m{S}=\{S_1,\ldots,S_k\}$ of subsets of $[N]$ is a
{\em $(N,n)$-distinguisher} of size $k$ if for each $X_1,X_2\subseteq [N]$ such
that $|X_1|=|X_2|=n$ and $X_1\cap X_2=\emptyset$, there exists $i\in[k]$
such that $|S_i\cap X_1|\neq|S_i\cap X_2|$.
\end{definition}


\begin{definition}
Let $N\in\NAT$ and let $f\,:\,\NAT\times\NAT\to\NAT$ be a nondecreasing function.
A family $\m{S}=S_1,\ldots,S_{f(N,N)}$ of subsets of $[N]$ is a {\em strong $(N,f)$-distinguisher}
if the prefix $S_1,\ldots,S_{f(N,n)}$ of $\m{S}$ is 
a $(N,n)$-distinguisher for each $n\le N$.
\end{definition}

The {\em weak nontrivial move} problem is to assign to each agent $a$ a
direction $\dir_a$ such that if $a$ starts a round in the direction $\dir_a\in\{\rright,\lleft\}$,
then the rotation index $r$ in the round 
is not equal to $0$.
(A round with the rotation index $n/2$ is treated as a
weak nontrivial move, which is not the case in the standard definition of
a nontrivial move.)

\iffull
In what follows, we show 
\else
We first state
\fi 
a reduction between the complexity of the 
weak nontrivial move problem
and the smallest size of a distinguisher.  
\iffull
\else
\fi

\begin{proposition}\labell{prop:reduction}
Let $n>4$ be an even number and $N\geq n$. 
\begin{enumerate}
\item
Assume that a protocol $\m{A}$ solves the weak nontrivial move problem in the basic
model in $O(f(N,n))$ rounds when the value of $n$ is {\em known} to the agents. Then, there
exists a $(N,n/2)$-distinguisher of size $O(f(N,n))$.
\item
Assume that a protocol $\m{A}$ solves the weak nontrivial move problem in the basic
model in $O(f(N,n))$ rounds when the actual value of $n$ is {\em unknown} to the agents.
Then, there
exists a strong $(N,f')$-distinguisher for $f'(N,n/2)=O(f(N,n))$.
\end{enumerate}
\end{proposition}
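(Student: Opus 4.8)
The plan is to argue by contradiction in both parts: assume that no distinguisher of the claimed size exists, produce two disjoint "bad" sets of IDs that look identical to every set in the family underlying $\m{A}$, and then show that when the actual agent set is one of these two sets, $\m{A}$ cannot know the correct answer — yet by Lemma~\ref{l:round}/Lemma~\ref{l:ri} at most one of the two answers gives a weak nontrivial move, a contradiction. The key observation connecting the two worlds is Lemma~\ref{lem:round}: the only thing an agent learns in a round is the rotation index (equivalently, via $\pos()$, whether the index is $0$, $n/2$, $<n/2$, or $>n/2$ — Lemma~\ref{l:opp:ind}), and for a round in which exactly the agents in some set $S$ move clockwise, the rotation index depends only on $|S\cap A|$ where $A$ is the actual agent set. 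So two agent sets $X_1$, $X_2$ of the same size that satisfy $|S\cap X_1|=|S\cap X_2|$ for every queried set $S$ are \emph{indistinguishable} to the agents through that round.

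First I would handle part~(1). Suppose for contradiction that no $(N,n/2)$-distinguisher of size $O(f(N,n))$ exists; I will actually want the contrapositive of the definition, so suppose there is no $(N,n/2)$-distinguisher of size $m := c\cdot f(N,n)$ where $c$ is the constant hidden in the $O(\cdot)$ bound on the running time of $\m{A}$ (so $\m{A}$ runs at most $m$ rounds on instances of size $n$). Run $\m{A}$ "symbolically": in round $1$ the set of agents moving clockwise is determined by the IDs and possibly the (so far empty) history, so it is a fixed set $S_1\subseteq[N]$; since the rotation-index feedback is the only input, after round $1$ every agent is in the same state regardless of which size-$(n/2)$ agent set it belongs to, \emph{as long as all those candidate sets $A$ give the same rotation index in round $1$}. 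Iterating, as long as we restrict attention to agent sets on which every round so far has produced identical feedback, the sets $S_1,\dots,S_m$ queried by $\m{A}$ are well-defined and independent of the actual $A$ inside that class. Now I claim $\m{S}=\{S_1,\dots,S_m\}$ is a $(N,n/2)$-distinguisher: if not, pick disjoint $X_1,X_2\subseteq[N]$ with $|X_1|=|X_2|=n/2$ and $|S_i\cap X_1|=|S_i\cap X_2|$ for all $i$. Let $A_1 = X_1\cup Y$ and $A_2 = X_2\cup Y$ where $Y$ is a common set of $n/2$ further IDs disjoint from $X_1\cup X_2$ (possible since $N\ge n$... here I need $N\ge 3n/2$; if $N<3n/2$ the bound $f(N,n)=\Omega(\log(N/n)/\ldots)$ is $O(1)$ and the statement is vacuous, or one chooses $Y$ overlapping appropriately — I will reconcile the exact parameter range in the write-up). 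Then $|A_1|=|A_2|=n$ and $|S_i\cap A_1| = |S_i\cap X_1| + |S_i\cap Y| = |S_i\cap A_2|$ for every round, so $\m{A}$ behaves identically on the two instances and outputs the same direction assignment $\dir$; but the set of agents moving clockwise under $\dir$ is some $B\subseteq[N]$, and $|B\cap A_1|$ vs $|B\cap A_2|$ — if these differ, $B$ itself is a distinguishing set (contradiction with our choice, after noting $B$ restricted to $X_1$ vs $X_2$ matches $B=S_i$-type argument)... actually the cleaner finish: $\RI(B\cap A_1) = 2|B\cap A_1|\bmod n$ and similarly for $A_2$; these are equal (same feedback), so $\dir$ is a weak nontrivial move for $A_1$ iff for $A_2$. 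That is fine for correctness, so I need to instead exhibit the contradiction at the level of \emph{existence} of the distinguisher, not correctness. So the logic is: the queried family $\m{S}$ \emph{is} a distinguisher because otherwise two indistinguishable instances force $\m{A}$ to fail on one of them — and here is where I use that the output set $B$, together with $S_1,\ldots,S_m$, can be taken as the distinguisher: if $\m{A}$ is correct then for every disjoint pair $X_1,X_2$ either some $S_i$ separates them or the final $B$ does (since on identical-feedback instances $\m{A}$ would give the same $B$, and correctness on both would require $\RI$ with respect to $A_1$ and $A_2$ to differ from $\{0\}$, impossible when $2|B\cap A_1| \equiv 2|B\cap A_2| \pmod n$ with the relation to $X_1,X_2$). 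Thus $\{S_1,\dots,S_m,B\}$ — of size $m+1 = O(f(N,n))$ — is a $(N,n/2)$-distinguisher, proving part~(1).

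For part~(2), the same argument runs, but now $\m{A}$ does not know $n$, so the \emph{same} protocol must work for every even $n\le N$, and on an instance of size $n$ it halts within $c\cdot f(N,n)$ rounds; the queried sets $S_1,S_2,\dots$ form a single infinite (or length-$f(N,N)$) sequence whose length-$f(N,n)$ prefix must, by the part-(1) argument applied at each $n$, be a $(N,n/2)$-distinguisher. Setting $f'(N,n/2) := c\cdot f(N,n)$ (nondecreasing since $f$ is), the sequence $S_1,\ldots,S_{f'(N,N/2)}$ together with the halting-round output sets is exactly a strong $(N,f')$-distinguisher. The main obstacle I anticipate is \emph{bookkeeping the adaptivity correctly}: $S_i$ and the final output $B$ depend on the feedback from rounds $1,\dots,i-1$, so "the queried sets are fixed" only holds within the class of agent sets that have produced identical feedback so far; making the contradiction airtight requires carefully tracking that at the moment two disjoint candidate sets diverge in feedback, that divergence is itself witnessed by one of the queried $S_i$ — which is precisely the inductive heart of the argument — and handling the parameter range (the need for a disjoint "filler" set $Y$ of size $n/2$, hence essentially $N\ge 3n/2$, and arguing the statement is vacuous or trivially true otherwise).
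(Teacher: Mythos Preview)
Your argument has a genuine gap: you never use the adversary's freedom to assign \emph{senses of direction}, and without that the contradiction you are chasing does not materialise. In the basic model an agent's choice in round $i$ is ``go \rright\ / go \lleft'' \emph{in its own orientation}; whether it moves objectively clockwise depends on its private sense of direction. Your sets $S_i$ are really ``the IDs that set $\dir_a=\rright$ in round $i$'', and the rotation index is \emph{not} a function of $|S_i\cap A|$ alone --- it also depends on how the orientations are distributed inside $A$. Your construction $A_1=X_1\cup Y$ versus $A_2=X_2\cup Y$ (with an implicit common orientation) therefore does not yield a failure of $\m{A}$: as you yourself notice, identical feedback forces identical output, and then $\m A$ succeeds on $A_1$ iff it succeeds on $A_2$ --- which is perfectly compatible with $\m A$ being correct. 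No contradiction.

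The paper's proof exploits orientation directly and avoids the filler set entirely. Given disjoint $X_1,X_2\subseteq[N]$ with $|X_1|=|X_2|=n/2$, take the \emph{single} instance $A=X_1\cup X_2$ where the agents in $X_1$ have the ``correct'' sense and those in $X_2$ the opposite one. With $m_1=|S_i\cap X_1|$ and $m_2=|S_i\cap X_2|$, the number moving objectively clockwise is $m_1+(n/2-m_2)$, so the rotation index is $2(m_1-m_2)$; hence round $i$ is a weak nontrivial move only if $m_1\neq m_2$. If $\m A$ is correct it must produce such a round within $O(f(N,n))$ steps for \emph{every} such pair $(X_1,X_2)$, which is exactly the statement that $S_1,S_2,\ldots$ is an $(N,n/2)$-distinguisher. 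The adaptivity worry dissolves too: until the first weak nontrivial move every agent sees $\pos()=0$, so on any ``bad'' instance the sequence $S_1,S_2,\ldots$ is uniquely determined. Part (2) then follows because a single protocol unaware of $n$ must work simultaneously for all even $n\le N$. Your route via two parallel instances and a filler $Y$ (needing $N\ge 3n/2$) is neither necessary nor sufficient; the missing idea is the orientation split $A=X_1\cup X_2$.
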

\iffull
\begin{proof}
First, assume that $n$ is known and $\m{A}$ solves the weak nontrivial move problem.
Observe that, until the first round of $\m{A}$ with a weak nontrivial move, the 
only information available
to each agent is that its starting position in a round is equal to its position at the end
of a round. Thus, its behavior can be defined by a sequence of sets $S_1, S_2, \ldots$,
such that the agent $a$ chooses direction {\rright} in round $i$ (provided no nontrivial
move appeared before) if and only if $a\in S_i$.
Let us fix which sense of direction is ``correct''.
Then, consider the situation in which the set of agents $X_1$ with the 
correct sense of direction and
the set of agents $X_2$ with the incorrect sense of direction satisfy $|X_1|=|X_2|=n/2$.
Let $m_1=|X_1\cap S_i|$, $m_2=|X_2\cap S_i|$.
Then,
the rotation index (mod $n$) in round $i$ is
$$\begin{array}{rcl}
(|X_1\cap S_i|+|X_2\setminus S_i|)-(|X_1\setminus S_i|+|X_2\cap S_i|) & = & 
(m_1+n/2-m_2)-(n/2-m_1+m_2)\\
&&2(m_1-m_2).
\end{array}$$
And therefore the $i$th round of $\m{A}$ gives a (weak) nontrivial move if and only 
if $2(m_1-m_2)\not\in\{0,n\}$, 
which implies $m_1\neq m_2$. On the other hand, $m_1\neq m_2$ is equivalent to the fact that
$S_i$ distinguishes $X_1$ and $X_2$.
In conclusion, the sequence $S_1,S_2,\ldots$ defining $\m{A}$ is a $(N,n/2)$-distinguisher.

For unknown $n$, the result follows from the above reasoning and the fact that $\m{A}$
has to tackle arbitrary even $n\leq N$ which reflects the difference between
a standard $(N,n)$-distinguisher and its strong counterpart.
\end{proof}
\else
\fi
%
\iffull
Now, we provide a lower bound on the size of a strong $(N,n)$-distinguisher with
a simple proof based on a counting argument
(a similar bound in another context was given e.g.\ in \cite{GrebinskiK00}).
Although this result is subsumed by Lemma~\ref{lem:weak}, we provide it 
to give some intuition before a more complicated, and less intuitive proof,
of Lemma~\ref{lem:weak}.
\begin{lemma}\labell{lem:strong}
If $S$ is a strong $(N,f)$-distinguisher for any $N>4$ 
and $f\,:\,\NAT\times\NAT\to\NAT$, then 
$f(N,n)=\Omega\left(\frac{n\log(N/n)}{\log n}\right)$.
\end{lemma}

\begin{proof}
First, we show  that a strong $(N,f)$-distinguisher  $\mathcal{S}$ 
satisfies the property that for each two different sets $X_1,X_2\subset [N]$ such that
$|X_1|=|X_2|=n$, there exists $i\leq f(N,n)$ such that
$|X_1\cap S_i|\neq |X_2\cap S_i|$
(note that $X_1$ and $X_2$ do not have to be disjoint!).
Indeed, assume to the contrary that this is not the case 
for $\mathcal{S}$, and thus $|X_1\cap S_i|=|X_2\cap S_i|$
for some different sets $X_1,X_2$ of size $n>1$ and each $i\in[f(N,n)]$.
Let $Y_1=X_1\setminus X_2$ and $Y_2=X_2\setminus X_1$. Then, $Y_1\cap Y_2=\emptyset$,
$|Y_1|=|Y_2|\leq n$ and $|Y_1\cap S_i|=|Y_2\cap S_i|$ for each $i\in[f(N,n)]$. This implies
that $\mathcal{S}$ is not a strong $(N,f)$-distinguisher, which is a contradiction.

Let $\mathcal{S}=(S_1,\ldots,S_k)$ be a strong $(N,f)$-distinguisher. The above observation implies
that, for any $X\neq X'$, $X,X'\subset[N]$ of size $n$, the sequences $|X\cap S_1|,\ldots,|X\cap S_k|$
and $|X'\cap S_1|,\ldots,|X'\cap S_k|$ are not equal, where
$k=f(N,n)$.
As each $S_i$ gives at most $n+1$ possible values of $|X\cap S_i|$ for $X\subset[N]$ of size $n$,
and there are $N\choose n$ subsets of $[N]$ of size $n$, we obtain the following bound
$$k\geq \log_{n+1}{N\choose n}=\Omega\left(\frac{\log{N\choose n}}{\log(n+1)}\right)=\Omega\left(\frac{n\log(N/n)}{\log n}\right)$$
for $n>1$.
\end{proof}
It turns out that the result of Lemma~\ref{lem:strong} can be strengthened, that is,
we show that the same asymptotic lower bound holds for a standard distinguisher as well.
However, our proof of this fact is much more complicated.  It applies
techniques from~\cite{ClementiMS03}, designed for proving lower bounds on size of
selective families.
We stress here that the lower bound for a strong variant of a distinguisher does not imply an analogous lower bound for
a ``standard'' variant of a distinguisher.
As observed in the proof of Lemma~\ref{lem:strong}, the
prefix of size $f(N,n)$ of a stong $(N,f)$-distinguisher
gives an opportunity to ``distinguish'' \textbf{each} pair of sets
of size $n$. On the other hand, a standard $(N,n)$-distinguisher
is supposed to give a difference only on \textbf{disjoint} sets
of size $n$.
\else
We now establish a lower bound on the size of a $(N,n)$-distinguisher in terms of
the parameters $N$ and $n$.  
\fi
\begin{lemma}\labell{lem:weak}
If $\mathcal{S}$ is a (standard) $(N,n)$-distinguisher for $N>2$ and $n \leq N/128$, 
then the size of $\mathcal{S}$ is $\Omega\left(\frac{n\log(N/n)}{\log n}\right)$.
\end{lemma}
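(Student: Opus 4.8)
The plan is to adapt the lower-bound technique of Clementi, Monti, and Silvestri for selective families to the distinguisher setting. The key difference from the strong-distinguisher bound (Lemma~\ref{lem:strong}) is that here we only get to exploit \emph{disjoint} pairs $X_1,X_2$, so the clean counting argument — distinct count-vectors for all $n$-subsets — is not available. Instead, I would argue by contradiction: assume $\mathcal{S}=(S_1,\ldots,S_k)$ is an $(N,n)$-distinguisher with $k = o\bigl(n\log(N/n)/\log n\bigr)$, and construct two disjoint $n$-sets $X_1,X_2$ that $\mathcal{S}$ fails to distinguish, i.e.\ $|S_i\cap X_1| = |S_i\cap X_2|$ for every $i\in[k]$.

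The construction of the bad pair is the crux, and it is where I expect the main difficulty to lie. The idea is to build $X_1$ and $X_2$ incrementally, adding elements (or small blocks of elements) one at a time while maintaining the invariant that the two partial count-vectors $\bigl(|S_i\cap X_1|\bigr)_{i\le k}$ and $\bigl(|S_i\cap X_1|\bigr)_{i\le k}$ remain equal, or at least controllably close so that a final balancing step can equalize them. Concretely, I would partition $[N]$ into many ``atoms'' according to which of the $S_i$ they belong to: two elements lie in the same atom iff they are in exactly the same sets $S_i$. There are at most $2^k$ atoms, and since $2^k \ll N/n$ for the assumed $k$, a pigeonhole argument yields atoms of size $\ge \Omega(N/2^k)$, hence many atoms containing at least $2n$ elements. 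Picking $n$ elements of such a large atom for $X_1$ and another $n$ elements of the same atom for $X_2$ makes them disjoint and automatically gives $|S_i\cap X_1| = |S_i\cap X_2| = (\text{the }S_i\text{-weight of that atom}) \cdot$ — wait, not quite, since the sets within one atom all have the same membership pattern, so $|S_i \cap X_1|$ is either $n$ or $0$ depending only on whether the atom is inside $S_i$, and likewise for $X_2$; thus they are equal. This already shows that if a single atom has size $\ge 2n$ we are done. So the real content is the regime where all atoms are small, i.e.\ $2^k$ is forced to be large.

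In that regime I would refine the argument: split $[N]$ along $S_1,\ldots,S_k$ into atoms, and think of choosing $X_1,X_2$ as choosing, for each atom $A_j$, how many elements $m_j^{(1)}, m_j^{(2)}$ to take from it into $X_1$ resp.\ $X_2$ — subject to $m_j^{(1)} + m_j^{(2)} \le |A_j|$ (disjointness within an atom), $\sum_j m_j^{(1)} = \sum_j m_j^{(2)} = n$, and the requirement that for each $i$, $\sum_{j:\,A_j\subseteq S_i}\bigl(m_j^{(1)} - m_j^{(2)}\bigr) = 0$. This is a system of $k+1$ homogeneous linear constraints (one per set plus the cardinality constraint; the $X_1$ and $X_2$ total-size constraints can be merged) on the integer variables $d_j := m_j^{(1)} - m_j^{(2)}$, together with range constraints $|d_j| \le |A_j|$. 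A dimension/volume argument then guarantees a nonzero solution with small support or small entries as long as the number of atoms with slack is large compared to $k$; the hypothesis $n \le N/128$ and $k = o(n\log(N/n)/\log n)$ is exactly what makes the parameters line up. I would then convert such a $d$-vector back into an honest disjoint pair of $n$-sets, padding with balanced choices (take equal numbers from "neutral" atoms) to reach the exact cardinality $n$. The delicate parts are (i) making the linear-algebra/pigeonhole counting quantitatively match the claimed bound $\Omega(n\log(N/n)/\log n)$ rather than merely $\Omega(n/\log n)$ or $\Omega(\log(N/n))$, which is precisely why one imports the selective-family machinery of \cite{ClementiMS03} rather than a naive argument, and (ii) handling the boundary effects when atoms are of wildly differing sizes. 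I expect step (i) to be the main obstacle and the place where the proof is genuinely technical.
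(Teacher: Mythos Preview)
Your atom/linear-algebra route has a real gap, and it is precisely the one you flag but do not resolve. The pigeonhole you sketch (two $n$-subsets with the same count-vector $(|S_i\cap X|)_i$, then take symmetric differences) does produce \emph{disjoint} sets $X_1=d^+$, $X_2=d^-$ with $|S_i\cap X_1|=|S_i\cap X_2|$ for all $i$ and $|X_1|=|X_2|$, but that common size is some $n'\le n$, not $n$ itself. A standard $(N,n)$-distinguisher is only obliged to separate disjoint sets of size \emph{exactly} $n$, so this yields the strong-distinguisher bound of Lemma~\ref{lem:strong} and nothing more. Your proposed ``padding with balanced choices from neutral atoms'' does not work in general: once $k\gg \log(N/n)$ (which is the whole regime of interest) atoms can all have size $1$, so there is no room to pad. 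The Siegel-type counting you outline controls the kernel of the $k\times t$ incidence matrix but gives no handle on the $\ell_1$-norm of the positive part of the solution, which is exactly what ``size $=n$'' demands.

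The paper's proof avoids this obstacle by a different mechanism. It passes to the graph $G$ whose vertices are the $2n$-subsets of $[N]$ and whose edges join $Y_1,Y_2$ with $|Y_1\cap Y_2|=n$. If $(Y_1,Y_2)\in E$ then $Y_1\setminus Y_2$ and $Y_2\setminus Y_1$ are disjoint $n$-sets, so the distinguisher separates them, hence $|S_i\cap Y_1|\neq|S_i\cap Y_2|$ for some $i$; thus $Y\mapsto(|S_i\cap Y|)_{i\le m}$ is a proper colouring of $G$ with at most $(2n+1)^m$ colours, giving $\log\chi(G)\le m\log(2n+1)$. The matching lower bound $\log\chi(G)\ge\frac16 n\log(N/(2n))$ comes from $\chi(G)\ge|V|/\alpha(G)$ together with the Frankl--F\"uredi theorem (Fact~\ref{fac:free}): every independent set in $G$ is an $(N,2n,n)$-intersection-free family, hence $\log\alpha(G)\le\frac{11}{6}n\log(N/(2n))$. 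Combining the two inequalities gives $m=\Omega\bigl(n\log(N/n)/\log n\bigr)$. The forbidden-intersection theorem is exactly what pins the intersection (equivalently, the symmetric-difference halves) to size $n$; this is the missing ingredient in your approach.
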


\iffull
\begin{proof}
Let us first stress that the calculations from the previous lemma do not apply here, since
a (``standard'') distinguisher does not have to ``distinguish'' small sets, so it does not have to distinguish
non-disjoint sets of size $n$ either.
\else
\fi

\iffull
In the proof, we use a notion from \cite{ClementiMS03}:
\else
Our proof uses a notion from \cite{ClementiMS03}:
\begin{definition}\cite{ClementiMS03,FranklF85}
Let $l\leq k\leq n$. A family $\mathcal{F}$ of $k$-subsets (i.e.\ subsets of size $k$) of $[N]$ is
$(N,k,l)$-intersection free if $|F_1\cap F_2|\neq l$ for every $F_1,F_2\in\mathcal{F}$.
\end{definition}
\begin{fact}\labell{fac:free}\cite{ClementiMS03,FranklF85}
Let $\mathcal{F}$ be an $(N,k,k/2)$-intersection free family where $k$ is a power of $2$
and $k\leq N/64$. Then,
$$\log|\mathcal{F}|\leq \frac{11k}{12}\log(N/k).$$
\end{fact}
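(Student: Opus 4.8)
The plan is to reduce Fact~\ref{fac:free} to a bound on \emph{codes that avoid a single Hamming agreement}, via a random partition of the ground set into $k$ blocks together with an averaging (``rainbow'') argument; the combinatorial heart is then an extremal bound of Frankl--F\"uredi type on such codes, and the clean constant $\tfrac{11}{12}$ is extracted at the very end by a binomial estimate that uses the hypothesis $k\le N/64$.

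First I would set up the reduction. Assign each element of $[N]$ independently and uniformly to one of $k$ blocks $B_1,\dots,B_k$ (whose sizes then concentrate around $q:=N/k$), and call $F\in\m{F}$ \emph{rainbow} for this assignment if $|F\cap B_i|=1$ for every $i$. A rainbow set $F$ is determined by, hence injectively encoded as, the word $w_F$ recording which element of each block it meets; two rainbow sets satisfy $|A\cap B|=\#\{i:\,w_A(i)=w_B(i)\}$, so set intersection equals the coordinatewise \emph{agreement} of the codewords. Because $\m{F}$ is $(N,k,k/2)$-intersection free, the rainbow members therefore map to a code $\m{C}\subseteq[q]^k$ in which no two distinct words agree in exactly $k/2$ coordinates.

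Next comes the averaging step. A fixed $k$-set is rainbow with probability $k!/k^k$ (its $k$ elements must land in distinct blocks), so in expectation $|\m{F}|\cdot k!/k^k$ members of $\m{F}$ are rainbow, and some assignment attains at least this many. For that assignment the rainbow members form an agreement-avoiding code, giving
\[
|\m{F}|\le \frac{k^k}{k!}\cdot \max\{\,|\m{C}|:\ \m{C}\subseteq[q]^k\text{ avoids agreement }k/2\,\}.
\]
By Stirling $k^k/k!=2^{O(k)}$, so everything reduces to bounding an agreement-avoiding code by $q^{\alpha k}$ for a constant $\alpha$ strictly below $1$.

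The main obstacle is exactly this code bound, which is where the assumptions that $k$ is a power of two and $k\le N/64$ are used and where the Frankl--F\"uredi forbidden-intersection machinery enters. A naive projection onto half of the coordinates yields only $|\m{C}|\le q^{k-1}$ (the fibres of such a projection are pairwise-agreeing codes of length $k/2$, of size at most $q^{k/2-1}$), which is too weak; obtaining a genuine constant-factor saving $\alpha\approx\tfrac34$ requires the restricted-agreement argument of \cite{FranklF85,ClementiMS03}, which exploits divisibility of $k$ by $2$ to control agreement parities. Granting $|\m{C}|\le q^{\alpha k}$, I would finish by combining it with the $2^{O(k)}$ rainbow correction and invoking $q=N/k\ge 64$, so that $\log_2 q\ge 6$ absorbs the additive $O(k)$ loss; tuning constants then gives $\log_2|\m{F}|\le \tfrac{11k}{12}\log_2(N/k)$, as claimed. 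Alternatively, one may quote the Frankl--F\"uredi bound $|\m{F}|\le\binom{N}{k/2}$ directly and estimate the binomial coefficient, which under $k/2\le N/128$ already yields the stated $\tfrac{11}{12}$ without the rainbow reduction.
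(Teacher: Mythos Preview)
The paper does not prove Fact~\ref{fac:free} at all; it is quoted verbatim as a known result from \cite{ClementiMS03,FranklF85} and used as a black box in the proof of Lemma~\ref{lem:weak}. So there is no ``paper's own proof'' to compare against.

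That said, your primary approach has a genuine gap. The rainbow/random-partition reduction is set up correctly, but the step you label ``the main obstacle''---bounding agreement-$k/2$-avoiding codes in $[q]^k$ by $q^{\alpha k}$ with $\alpha<1$---is never carried out; you simply defer it back to \cite{FranklF85,ClementiMS03}. This makes the argument circular: you reduce the statement to one of essentially the same strength and then invoke the very references the fact is quoted from. The naive projection bound you mention, $|\m{C}|\le q^{k-1}$, yields only $\log|\m F|\le (k-1)\log q + O(k)$, which is useless for the claimed $\tfrac{11}{12}$ exponent.

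Your one-line alternative at the end is the actual proof and is how \cite{ClementiMS03} derives the constant: take the Frankl--F\"uredi forbidden-intersection bound (of order $\binom{N}{k/2}$ for an $(N,k,k/2)$-intersection-free family), estimate $\log\binom{N}{k/2}\le \tfrac{k}{2}\log(2eN/k)=\tfrac{k}{2}\log(N/k)+\tfrac{k}{2}\log(2e)$, and use $N/k\ge 64$ so that $\log(N/k)\ge 6$ absorbs the additive term into the gap between $\tfrac12$ and $\tfrac{11}{12}$. That is the whole argument; the rainbow machinery is unnecessary overhead here and, as written, does not close.
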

\iffull
Let $G(V,E)$ be a graph, whose vertices are all $2n$-subsets of $[N]$, where the 
edges connect vertices
corresponding to sets which have exactly $n$ common elements.
That is, $(X_1,X_2)\in E$ for $X_1,X_2\in V$ if and only if $|X_1\cap X_2|=n$.
Let $\alpha(G)$ and $\chi(G)$ denote the size of the largest independent set of $G$
and the chromatic number of $G$, respectively.
We claim that
\begin{eqnarray}
\log \chi(G) & \geq & \frac16 n\log(N/(2n)) \ \ \ \textrm {and}\label{e:color1}  \\
\log \chi(G) & \leq  & |\m{S}|\log(2n+1). \label{e:color2}
\end{eqnarray}
Proof of (\ref{e:color1}):\newline
We use the fact that $\chi(G)\geq \frac{|V|}{\alpha(G)}$.
Moreover, as each independent set of $G$ is a $(N,2n,n)$-intersection free family
of sets,
Fact~\ref{fac:free} implies that
$$\log\alpha(G)\leq \frac{22}{12}n\log(N/(2n)).$$
Therefore
$$\begin{array}{rclcl}
\log \chi(G) & \geq & \log |V|-\log\alpha(G) \\
                & \geq & \log {N\choose 2n}-\frac{22}{12}n\log(N/(2n)) \ \ \ \textrm {and} \\
               & \geq & 2n\log(N/(2n))-\frac{22}{12}n\log(N/(2n)) \\
               & = & \frac{1}{6}n\log(N/(2n)), 
\end{array}
$$
which gives (\ref{e:color1}). In the third inequality, we use the relation 
${a \choose b}\ge \left(\frac{a}{b}\right)^b$.

\noindent Proof of (\ref{e:color2}):\newline
\comment{
Let $\m{S}=(S_1,\ldots,S_m)$ be a $(N,n)$-distinguisher, and let $G_i(V_i,E_i)$ be a graph associated with $S_i$ such that
\begin{enumerate}
\item
$V_i=V$ and
\item
$(X_1,X_2)\in E_i$ if and only if $(X_1,X_2)\in E$ and
$|(X_1\setminus X_2)\cap S_i|\neq |(X_2\setminus X_1)\cap S_i|$
(i.e.\ $S_i$ distinguishes $X_1\setminus X_2$ and $X_2\setminus X_1$).
\end{enumerate}
Now, observe that  the set $\{X\,|\ |S_i\cap X|=p\}$ is independent in $G_i$
for each $p\in[0,n]$. 
Therefore, $\chi(G_i)\leq n+1$ since
the graphs $G_1,\ldots,G_m$ share the same set of vertices, so
$\chi(\bigcup_{i=1}^m G_i)\leq \Pi_{i=1}^m \chi(G_i)$.
The assumption that $\m{S}$ is a $(N,n)$-distinguisher implies that $\bigcup_i E_i=E$.
Therefore,
$$
\begin{array}{rcl}
\log \chi(G)  & \leq & \log \chi(\bigcup_i G_i) \\
                 & \leq & \log\left\{ \Pi_{i=1}^m \chi(G_i)\right\}  \\
                 & =   &\sum_{i=1}^m\log\chi(G_i)\leq m\log(n+1),
\end{array}
$$
which proves (\ref{e:color2}).
}
Let $\m{S}=(S_1,\ldots,S_m)$ be a $(N,n)$-distinguisher.
Observe that for any two sets $X_1,X_2$ such that $|X_1\cap X_2|=n$ there exists $S_i$
such that $|S_i\cap X_1|\neq |S_i\cap X_2|$.
In other words for any tuple $(p_1,\ldots,p_m),p_i\in[0,2n]$
the set $\{X\,:\forall i \ |S_i\cap X|=p_i\}$ is independent in $G$.
Therefore, $\chi(G)\leq (2n+1)^m$.
Thus
$$
\log \chi(G)  \leq m\log(2n+1),
$$
which proves (\ref{e:color2}).

Finally, observe that (\ref{e:color1}) and (\ref{e:color2}) imply the statement
of the lemma.
\end{proof}
\else
\fi
\begin{corollary}\labell{cor:lower:move}
Each algorithm solving the (weak) nontrivial move problem 
requires $\Omega(n\log(N/n)/\log n)$ rounds in the {\basic}
model with known value of $n$.
\end{corollary}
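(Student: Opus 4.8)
The plan is to derive Corollary~\ref{cor:lower:move} by directly chaining Proposition~\ref{prop:reduction}(1) with the distinguisher lower bound of Lemma~\ref{lem:weak}. The argument is short: suppose, for contradiction, that some protocol $\m{A}$ solves the (weak) nontrivial move problem in the {\basic} model with known $n$ in $O(f(N,n))$ rounds, where $f(N,n) = o(n\log(N/n)/\log n)$. Since the standard nontrivial move problem is at least as hard as the weak variant (a genuine nontrivial move is in particular a weak one, so any algorithm for the former also solves the latter), it suffices to rule this out for the weak version.

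First I would invoke Proposition~\ref{prop:reduction}(1): since $\m{A}$ solves the weak nontrivial move problem in $O(f(N,n))$ rounds with $n$ known, there exists an $(N,n/2)$-distinguisher $\m{S}$ of size $O(f(N,n))$. Next I would apply Lemma~\ref{lem:weak} with parameter $m := n/2$ in place of its ``$n$'': provided $m \le N/128$, i.e.\ $n \le N/64$, every $(N,m)$-distinguisher has size $\Omega\!\left(\frac{m\log(N/m)}{\log m}\right) = \Omega\!\left(\frac{n\log(N/n)}{\log n}\right)$, where the last step absorbs the constant factors hidden in replacing $n/2$ by $n$ inside the logarithms (here one uses $n>4$ so that $\log(n/2)=\Theta(\log n)$ and $\log(N/(n/2)) = \Theta(\log(N/n))$ when $n \le N/64$). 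Combining the two bounds gives $O(f(N,n)) = \Omega(n\log(N/n)/\log n)$, contradicting the assumption $f(N,n)=o(n\log(N/n)/\log n)$. Hence $\m{A}$ must run in $\Omega(n\log(N/n)/\log n)$ rounds.

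I do not anticipate a genuine obstacle here — the corollary is essentially a bookkeeping combination of two already-stated results. The one point requiring a little care is the range of validity: Lemma~\ref{lem:weak} is stated for $n \le N/128$, so after the substitution $m=n/2$ the bound holds under $n\le N/64$; for the regime $n=O(N^{1-\epsilon})$ targeted in the paper's main theorem this is automatically satisfied for $N$ large, and the asymptotic claim is understood in that regime. The only other mild technicality is verifying that the rescaling of $n$ by a constant factor inside the expression $n\log(N/n)/\log n$ changes its value only by a constant factor, which is immediate once $n$ is bounded away from both $1$ and $N$. So the whole proof is: reduction (Proposition~\ref{prop:reduction}(1)) $\Rightarrow$ existence of a small $(N,n/2)$-distinguisher $\Rightarrow$ contradiction with Lemma~\ref{lem:weak}.
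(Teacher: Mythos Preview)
Your proposal is correct and matches the paper's own proof, which simply states that the result follows directly from Proposition~\ref{prop:reduction} and Lemma~\ref{lem:weak}. Your additional bookkeeping (the substitution $m=n/2$, the range-of-validity check, and the constant-factor absorption) just spells out details the paper leaves implicit.
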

\iffull
\begin{proof}
The result 
follows directly from Proposition~\ref{prop:reduction}
and Lemma~\ref{lem:weak}.
\end{proof}
\else
\fi
\iffull
Now, we 
show, using the probabilitstic method, 
\else 
It can be shown using the probabilistic method
\fi
that there exists a solution for the 
nontrivial move problem that nearly matches the lower bound from 
Corollary~\ref{cor:lower:move}.
\begin{theorem}\labell{thm:selector}
In the {\basic} model, there exist solutions of the nontrivial move problem working in $O(n\log(N/n)/\log n)$ rounds
for each $n\in[N]$ and $n>4$, 
and also when $n$ is unknown. 
\end{theorem}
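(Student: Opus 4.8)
\textbf{Proof plan for Theorem~\ref{thm:selector}.}
The plan is to exhibit, via the probabilistic method, a single family $\m{S}=\{S_1,\ldots,S_k\}$ that serves as a strong distinguisher in the required sense, and then invoke Proposition~\ref{prop:reduction} (used in the reverse direction) to turn it into a protocol. Concretely: given a distinguisher of the appropriate size, the agents can execute its sets $S_1, S_2, \ldots$ round by round (agent $a$ moves \rright\ in round $i$ iff $a\in S_i$); by the computation in the proof of Proposition~\ref{prop:reduction}, round $i$ has rotation index $2(m_1-m_2)$ where $m_1,m_2$ count the \rright-oriented agents among the two ``orientation classes'' of sizes summing to $n$, and as soon as $|S_i\cap X_1|\neq|S_i\cap X_2|$ for the disjoint halves $X_1,X_2$ this is a weak nontrivial move. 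A weak nontrivial move is promoted to a genuine nontrivial move by the $O(\log N)$ reductions of Section~\ref{sec:red} (Corollary~\ref{cor:move:common} and friends), which is absorbed into the bound. So it suffices to build a strong $(N,f')$-distinguisher with $f'(N,n) = O(n\log(N/n)/\log n)$, handling unknown $n$ by the ``strong'' (prefix-closed) requirement.

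Next I would carry out the probabilistic construction. Fix $n\le N$ and set $k = k(n) = \Theta(n\log(N/n)/\log n)$ with a large enough hidden constant. Choose each $S_i$ by placing each element of $[N]$ into $S_i$ independently with probability $1/2$ (or, to control the counts more tightly, with a suitable constant $p$). For a fixed disjoint pair $X_1,X_2$ of size-$n$ sets, $|S_i\cap X_1|$ and $|S_i\cap X_2|$ are independent $\mathrm{Bin}(n,p)$ variables, so $\Pr[|S_i\cap X_1| = |S_i\cap X_2|] \le \max_j \Pr[\mathrm{Bin}(n,p)=j] = O(1/\sqrt n)$ by the standard anticoncentration (local central limit / Stirling) bound on binomial point masses. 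Hence $\Pr[\m{S}\text{ fails on }(X_1,X_2)] \le (c/\sqrt n)^k = 2^{-\Theta(k\log n)}$. A union bound over the at most $\binom{N}{n}\binom{N-n}{n} \le 2^{2n\log(eN/n)}$ disjoint pairs shows the failure probability is $2^{2n\log(eN/n) - \Theta(k\log n)}$, which is $<1$ once $k \ge C n\log(N/n)/\log n$ for suitable $C$; this gives a $(N,n)$-distinguisher of that size for each individual $n$.

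The remaining step is to assemble these into one \emph{strong} distinguisher and deal with unknown $n$. Here I would run the union bound simultaneously over all $n\le N$: build one infinite-ish sequence $S_1,S_2,\ldots$ of i.i.d.\ random sets, and for each $n$ demand that its prefix of length $f'(N,n) = C n\log(N/n)/\log n$ is an $(N,n)$-distinguisher. For a fixed $n$ this prefix fails with probability $2^{-\Theta(n\log(N/n))}$ as above (choosing $C$ large); summing this over all $n\in[1,N]$ — the tail in $n$ is geometric — keeps the total failure probability below $1$, so a good sequence exists. (Since we only need $n>4$, the finitely many tiny cases $n\in\{5,\dots,$ say $,\lceil N/128\rceil\}$ vs.\ large $n$ can be separated exactly as in Lemma~\ref{lem:move:common}; and the regime $n > N/128$, not covered by the lower bound, is handled trivially since there $n\log(N/n)/\log n = \Theta(n)$ and even the naive $O(\log N)$-or-$O(n)$ protocols of Section~\ref{sec:red} suffice.) This truncated sequence is the desired strong distinguisher; feeding it to the protocol above yields the nontrivial move problem in $O(f'(N,n)) = O(n\log(N/n)/\log n)$ rounds, uniformly in $n$, even when $n$ is unknown.

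The main obstacle is the anticoncentration estimate and making the union bound actually close: one needs $\Pr[\mathrm{Bin}(n,p) = \mathrm{Bin}(n,p)'] = O(1/\sqrt n)$ to be good enough that $k\log n$ beats the $\log\binom{N}{n}\binom{N-n}{n} = \Theta(n\log(N/n))$ entropy term — this is exactly why the $\log n$ appears in the denominator, and why the bound is tight against Lemma~\ref{lem:weak}. Care is also needed for the boundary cases $n$ close to $N$ (where $p=1/2$ may not give enough separation and $\log(N/n)$ degenerates) and for tiny $n$, both of which I would dispatch with the explicit constant-probability arguments already used in Lemma~\ref{lem:move:common} rather than the asymptotic binomial bound.
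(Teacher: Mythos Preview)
Your argument has two genuine gaps. First, you invoke Proposition~\ref{prop:reduction} ``in the reverse direction'', but only the forward direction (protocol $\Rightarrow$ distinguisher) is established there, and the converse does not follow. The formula you quote, rotation index $=2(m_1-m_2)$, is derived in that proposition only under the hypothesis $|X_1|=|X_2|=n/2$; for a general orientation split with $|X_1|+|X_2|=n$ the rotation index is $2(m_1-m_2)-(|X_1|-|X_2|)\bmod n$. A distinguisher constrains $m_1,m_2$ only for disjoint pairs of \emph{equal} size, so it says nothing about unbalanced splits, and there is no reason executing a distinguisher must ever produce a nonzero rotation index in that case. Second, even on balanced splits the distinguisher only gives a \emph{weak} nontrivial move (rotation index $\neq 0$ but possibly $=n/2$), and your promotion via ``Corollary~\ref{cor:move:common} and friends'' does not apply: those reductions need either a common sense of direction or a genuine nontrivial move as input, and Lemma~\ref{l:move:comm} fails precisely when the rotation index is $n/2$, since both orientations then see the same value $n/2$.

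The paper uses the same random construction you propose but analyses it directly, bypassing distinguishers. For a fixed ID set $A$ of size $n$ together with a fixed orientation assignment $A=A_c\cup A_i$, it bounds the probability that a single random round is trivial---i.e., is an $(n_C,n_A)$-round with $n_C\in\{0,n/4,n/2,3n/4,n\}$---by $O(1/\sqrt n)$ via Vandermonde's identity. The union bound then runs over the $\binom{N}{n}2^n$ pairs (ID set, orientation), not over $\binom{N}{n}\binom{N-n}{n}$ disjoint equal-size pairs; this simultaneously handles all orientation splits and yields the full nontrivial move in one shot. Your anticoncentration estimate and the dyadic assembly over $n$ are essentially what the paper does; the error is only in what you union-bound over and in the reliance on the unproved converse of Proposition~\ref{prop:reduction}.
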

\iffull
\begin{proof}
Let us choose a sequence $\m{S}$ of sets $S_1,S_2,\ldots$ probabilistically, such that each
$x\in[N]$ belongs to $S_i$ with probability $1/2$, where all choices are independent.
Then, our algorithm is defined such that, in round $i$, the agents with IDs in $S_i$
choose direction {\rright} and the other ones choose the direction {\lleft}.
We show that the family $\m{S}=(S_1,\ldots,S_k)$ chosen in this way gives a protocol
solving the nontrivial move problem with positive probability, provided the size
$n$ of the network is smaller than $N/3$. 
That is, the following event holds with positive probability:
for each $X\subset [N]$ such that $|X|<N/3$, the nontrivial move appears during
an execution of the prefix of $\m{S}$ of size $O(n\log(N/n)/\log n)$, where $n=|X|$.
Then we build a sequence $\m{C}$ of size $O(N/\log N)$ which gives a nontrivial move
on each $X\subset [N]$ of size at least $N/3$.
Thus, by interleaving $\m{S}$ and $\m{C}$,  the theorem holds thanks to the probabilistic
method.

Let us fix a set of IDs $A\subset[N]$ of size $n$ and assign sense of directions to them
such that $A=A_c\cup A_i$, where $A_c$ is the set of agents with correct sense
of directions, $|A_c|=n_c$ and $|A_i|=n-n_c$.
Recall that a round does {\em not} give a nontrivial move if and only if it is a
$(0,n)$-round, $(n,0)$-round, $(n/2,n/2)$-round, $(3n/4,n/4)$-round,
or a $(n/4,3n/4)$-round. Then, 
for a round defined by $S_i$ as above, we have:
$$\begin{array}{rcll}
\text{Prob}((n/2,n/2)\text{-round})&=&\frac1{2^n}\sum_{j=0}^{\min(n_c,n/2)} {n_c\choose j}{n-n_c\choose n/2-j} \\
                                               &=&\frac1{2^n}{n\choose {n/2}}\leq \frac{c_0}{n^{1/2}}, \\ 
\text{Prob}((0,n)\text{-round})&=&\frac1{2^n}{n_c\choose 0}{n-n_c\choose n-n_c} =\frac1{2^n}, \\
\text{Prob}((n,0)\text{-round})&=&\frac1{2^n}{n_c\choose n_c}{n-n_c\choose 0} =\frac1{2^n}, \\
\text{Prob}((n/4,3n/4)\text{-round})&=&\frac1{2^n}\sum_{j=0}^{\min(n_c,3n/4)} {n_c\choose j}{n-n_c\choose 3n/4-j} \\
                                                 &=&\frac1{2^n}{n\choose n/4}=1/2^{\Theta(n)}, \ \textrm{and} \\
\text{Prob}((3n/4,n/4)\text{-round})&=&\text{Prob}((n/4,3n/4)\text{-round})  \\
                                                 &=&1/2^{\Theta(n)}.\\
\end{array}$$
In the above calculations, we use the relationship that
$\sum_{i=0}^{\min(a,c)} {a\choose i} {b-a\choose c-i}={b\choose c}$ and Stirling's formula 
which determines the constant $c_0$ in the first row.
The above estimations imply that the probability that a round defined by $S_i$
is a trivial move for $|A|=n$ is at most $c_1/\sqrt{n}$ for some 
constant $c_1$, provided $n$ is large enough.
Let us consider all sets of IDs $A$ such that $|A|\in[2^{i-1},2^i)$, 
for $i$ such that $2^i<N/3$.
Let $k=c\frac{2\log{N\choose 2^i}}{i-1}$ for a large enough constant $c$ whose value
will be determined later.
By $E_i$ we denote the event that a sequence of sets $S_1,\ldots,S_k$ does {\em not}
give a nontrivial move for
all sets $A$ whose size is in $[2^{i-1},2^i)$. Then, 
$$\begin{array}{rclcl}
\text{Prob}(E_i) &\leq& \sum_{d=2^{i-1}}^{2^i}(\text{Prob}(\text{triv.\ move on a set of size }d))^k\cdot{N\choose d} 2^d \\
&\leq&
\sum_{d=2^{i-1}}^{2^i}\frac{{N\choose d} 2^d\cdot c_1}{2^{(i-1)k/2}}
\leq c_1\sum_{d=2^{i-1}}^{2^i}\frac{{N\choose d}^2}{{N\choose 2^i}^3}\\
&\leq& c_1\sum_{d=2^{i-1}}^{2^i}\frac{1}{{N\choose 2^i}}\leq
c_1\sum_{d=2^{i-1}}^{2^i}\frac{1}{2^i}<c_1\frac1{2^{i-2}}.
\end{array}$$
%
In the above calculations, we use the following facts:
\begin{itemize}
\item
${N \choose d}2^d$ is the number of possible choices of sets of size $d$, and senses of direction of elements of these sets (used in the first inequality);
\item
$\text{Prob}(\text{triv.\ move on a set of size }d)\leq \frac{c_1}{\sqrt{d}}\leq \frac{c_1}{2^{(i-1)/2}}$ 
(used in the second inequality);
\item
$2^{(i-1)k/2}\geq {N\choose 2^i}^{c}\ge {N\choose 2^i}^{3}$ for $c\ge 3$ (which 
follows from the fact that $k=c\frac{2\log{N\choose 2^i}}{i-1}$; used in the third
inequality);
\item
$2^d\leq {N\choose d}$
for $d\leq N/3$ (used in the third inequality);
\item
${N\choose d}\leq {N\choose 2^i}$  
for $d\leq N/3$ (used in the fourth inequality);
\item
${N\choose 2^i}\ge (N/2^i)^{2^i}\ge 2^i$ if $2^i<N/2$ (used in the fifth inequality).
\end{itemize}
Let $i_0=\lceil \log 4c_1\rceil+1$ and $i_1=\lfloor \log (N/3)\rfloor$.
The above calculations show that, the union of events $E_{i_0}, E_{i_0+1},\ldots,E_{i_1}$
holds with probability $\sum_{i_0}^{i_1}c_1/2^i<1/2$ for $c>3$.
Therefore, by the probabilistic method, the sequence $\m{S}$ gives a nontrivial move for each set of IDs of size
in $[2^{i_0},2^{i_1}]=[4c_1,N/c]$. It remains to tackle the cases that $n<2^{i_0}$ and $n>2^{i_1}$.

As for $n<2^{i_0}$, note that $2^{i_0}$ is a constant independent of $n$.
Thus the number of sets of size $<2^{i_0}$ is polynomial wrt $N$, while the
probability that a round gives a nontrivial move for a given set is larger
than some positive constant independent of $N$. Therefore on a sufficiently
long prefix of $\m{S}$ of length $O(\log N)=O(n\log(N/n)/\log n)$, the nontrivial
move appears with for each set of size
$<2^{i_0}$ with probability $1-1/N$.

Now, we consider the case that the size $n>2^{i_1}>N/3$. 
The number of such sets is upper bounded by $2^N$.
And, for each such set, each round gives a nontrivial move with probability at least
$c'/\sqrt{N}$ for a constant $c'$. By a simple calculation, one can show that
the nontrivial move appears for each such set on a long enough prefix
of $\m{S}$ of size $O(N/\log N)$ with probability $1-1/N$.
More precisely, on a prefix of $c''\log N$, the probability that there is a
set without a nontrivial move is smaller than
$$2^N (c'/\sqrt{N})^{c''\log N}<1/N$$
for a large enough constant $c''$.

\end{proof}
\else
\fi
\iffull
Finally, the above bounds (Cor.~\ref{cor:lower:move}, Th.~\ref{thm:selector})
and the equivalence of complexities of coordination problem (Th.~\ref{ABC})
lead to the following corollary.
\else
\comment{
\tj{
Cor.~\ref{cor:lower:move}, Th.~\ref{thm:selector}
and 
Th.~\ref{ABC}
lead to the following corollary.
}
}
\fi
%
\begin{corollary}\labell{cor:coord:basic}
The time complexity of 
the nontrivial move problem, the leader election problem,
 and the direction agreement problem
in the basic model (with even $n$)
is $\Theta(n\log(N/n)/\log n)$.
\end{corollary}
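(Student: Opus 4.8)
The plan is to assemble Corollary~\ref{cor:coord:basic} purely from machinery already in hand, so the proof is a short orchestration rather than a new argument. First I would invoke Theorem~\ref{ABC}: in the basic model with even $n$, the three coordination problems (nontrivial move, leader election, direction agreement) have asymptotically equal complexity up to an additive $O(\log N)$ term. Since $n>4$, we have $N\geq n>4$, and one checks that $\log N = O(n\log(N/n)/\log n)$ in every regime of interest (the quantity $n\log(N/n)/\log n$ is $\Omega(\log N)$ whenever $N\geq n$ and $n$ is bounded below by a constant, because either $n$ is large, giving the $n/\log n$ factor, or $n$ is a constant, giving a $\Theta(\log N)$ term). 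Hence it suffices to pin down the complexity of a single one of these problems, say the nontrivial move problem, and the additive slack is absorbed.

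Next I would supply the matching bounds for the nontrivial move problem itself. The lower bound $\Omega(n\log(N/n)/\log n)$ comes from Corollary~\ref{cor:lower:move}, which was derived by combining the reduction of Proposition~\ref{prop:reduction} (any $O(f(N,n))$-round solution of the weak nontrivial move problem yields an $(N,n/2)$-distinguisher of size $O(f(N,n))$) with the distinguisher lower bound of Lemma~\ref{lem:weak} ($\Omega(n\log(N/n)/\log n)$ on the size of any $(N,n)$-distinguisher). Since an ordinary nontrivial move is in particular a weak nontrivial move, the same lower bound applies a fortiori. The upper bound $O(n\log(N/n)/\log n)$ is exactly Theorem~\ref{thm:selector}, which constructs (via the probabilistic method) a family of direction-assignments forcing a nontrivial move within that many rounds, uniformly over all $n\in[N]$, $n>4$. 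Putting these together gives $\Theta(n\log(N/n)/\log n)$ for the nontrivial move problem, and then Theorem~\ref{ABC} transports this $\Theta$ bound to leader election and direction agreement as well.

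One small point I would be careful about is the direction of the reduction when transferring bounds through Theorem~\ref{ABC}: the theorem only promises equality up to an \emph{additive} $O(\log N)$, so I must verify that adding $O(\log N)$ does not inflate the bound past $\Theta(n\log(N/n)/\log n)$ — which, as noted above, holds because $\log N$ is itself $O(n\log(N/n)/\log n)$ in this range of parameters (using $n\le N/128$ or more crudely $n\leq N$ together with $n>4$). I expect this bookkeeping to be the only genuine obstacle; everything else is a direct citation. For the lower bound I should also note, as the excerpt's Lemma~\ref{lem:weak} assumes $n\leq N/128$, that the stated $\Theta$ is to be read in the natural parameter range (the interesting case $n=O(N^{1-\epsilon})$), and for larger $n$ the additive $O(\log N)$ and $O(n/\log n)$ terms coincide with the claimed expression anyway.

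\begin{proof}
By Theorem~\ref{ABC}, in the basic model with even $n>4$ the complexities of the nontrivial move, leader election, and direction agreement problems differ only by an additive $O(\log N)$. Since $N\geq n>4$, one has $\log N=O\!\left(\frac{n\log(N/n)}{\log n}\right)$: if $n$ exceeds a fixed constant this follows from the factor $n/\log n=\Omega(\log N)$ (as $\log n\le\log N$), and if $n$ is a constant the right-hand side is itself $\Theta(\log N)$. Hence all three complexities coincide up to constant factors, and it suffices to determine the complexity of the nontrivial move problem.

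For the lower bound, every nontrivial move is in particular a weak nontrivial move, so Corollary~\ref{cor:lower:move} (which combines Proposition~\ref{prop:reduction} with Lemma~\ref{lem:weak}) gives that $\Omega\!\left(\frac{n\log(N/n)}{\log n}\right)$ rounds are necessary. For the upper bound, Theorem~\ref{thm:selector} provides a solution of the nontrivial move problem running in $O\!\left(\frac{n\log(N/n)}{\log n}\right)$ rounds for every $n\in[N]$ with $n>4$. Thus the nontrivial move problem has complexity $\Theta\!\left(\frac{n\log(N/n)}{\log n}\right)$, and by the first paragraph the same bound holds for leader election and for direction agreement.
\end{proof}
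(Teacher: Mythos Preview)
Your proposal is correct and follows essentially the same route as the paper: cite Theorem~\ref{ABC} to equate the three complexities (up to an additive $O(\log N)$), then use Corollary~\ref{cor:lower:move} and Theorem~\ref{thm:selector} for the matching lower and upper bounds on the nontrivial move problem. The paper's own justification is just the one-line citation of these three results, so you are actually more thorough in checking that the additive $O(\log N)$ is absorbed into $\Theta(n\log(N/n)/\log n)$.

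One small remark: the parenthetical ``(as $\log n\le\log N$)'' does not by itself justify $n/\log n=\Omega(\log N)$ --- indeed that inequality can fail for moderate $n$ and very large $N$. What you really need (and what you correctly flag in your informal discussion) is that the claim $\log N=O(n\log(N/n)/\log n)$ holds in the regime $n\le N/128$ where Lemma~\ref{lem:weak} applies; there $\log(N/n)\ge 7$ and a short case split on whether $n\le\sqrt{N}$ suffices. This is a cosmetic fix to the wording rather than a gap in the argument.
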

\iffull
\else
\tj{The above result follows from Cor.~\ref{cor:lower:move}, Th.~\ref{thm:selector}
and 
Th.~\ref{ABC}.
}
\fi
Given the relationship between distinghuishers  and the nontrivial move problem
(Prop.~\ref{prop:reduction}), the lower bound 
from Lemma~\ref{lem:weak}
and Cor.~\ref{cor:coord:basic}, we get the following bound. 
\begin{corollary}\labell{cor:dist:size}
The size of the smallest $(N,n)$-distinguisher for $N\ge n$ is $\Theta(n\log(N/n)/\log n)$.

For each $N\in\NAT$, there exists a strong $(N,f)$-distinguisher for some $f(N,n)\in O(n\log(N/n)/\log n)$. Moreover, if $\m{S}$
is a strong $(N,f)$-distinguisher, then $f(N,n)=\Omega(n\log(N/n)/\log n)$.
\end{corollary}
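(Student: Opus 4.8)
The plan is to assemble Corollary~\ref{cor:dist:size} from pieces already in hand, so the "proof" is really a short bookkeeping argument combining three facts. For the first statement, that the smallest $(N,n)$-distinguisher has size $\Theta(n\log(N/n)/\log n)$: the lower bound is exactly Lemma~\ref{lem:weak} (valid for $n\le N/128$; for $n$ in the range $(N/128,N]$ the quantity $n\log(N/n)/\log n$ is $O(n/\log n)=O(N)$, and a trivial $\Omega(\ )$ bound or a direct padding argument covers that regime, so the asymptotic claim is unaffected). The matching upper bound comes from Theorem~\ref{thm:selector} via Proposition~\ref{prop:reduction}: a protocol solving the nontrivial move problem in $O(f(N,n))$ rounds with known $n$ yields, by Proposition~\ref{prop:reduction}(1), an $(N,n/2)$-distinguisher of size $O(f(N,n))$; plugging in $f(N,n)=n\log(N/n)/\log n$ from Theorem~\ref{thm:selector} and reindexing $n\mapsto 2n$ (which changes the bound only by a constant factor) gives an $(N,n)$-distinguisher of size $O(n\log(N/n)/\log n)$. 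Together these two bounds give the $\Theta$.

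For the second statement, on strong distinguishers: existence with $f(N,n)\in O(n\log(N/n)/\log n)$ follows from the "unknown $n$" half of the same two theorems. Theorem~\ref{thm:selector} asserts a nontrivial-move solution of complexity $O(n\log(N/n)/\log n)$ that works \emph{even when $n$ is unknown}; feeding this into Proposition~\ref{prop:reduction}(2) produces a strong $(N,f')$-distinguisher with $f'(N,n/2)=O(n\log(N/n)/\log n)$, i.e.\ $f'(N,m)=O(m\log(N/m)/\log m)$ after the change of variable. The lower bound $f(N,n)=\Omega(n\log(N/n)/\log n)$ for any strong $(N,f)$-distinguisher is immediate from the first statement: by definition the length-$f(N,n)$ prefix of a strong $(N,f)$-distinguisher is an ordinary $(N,n)$-distinguisher, so its size is at least the minimum size of an $(N,n)$-distinguisher, which we just showed is $\Omega(n\log(N/n)/\log n)$.

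I would present this as: first invoke Lemma~\ref{lem:weak} for the lower bound; then invoke Theorem~\ref{thm:selector} together with Proposition~\ref{prop:reduction}(1) for the upper bound, noting the harmless constant-factor loss from replacing $n$ by $n/2$ and from any rounding; conclude the $\Theta$ for ordinary distinguishers; then handle the strong case by applying Proposition~\ref{prop:reduction}(2) with the unknown-$n$ solution of Theorem~\ref{thm:selector} for existence, and the prefix observation for the matching lower bound. The only genuinely delicate point — and it is minor — is checking that the parameter shift $n\leftrightarrow n/2$ and the hypothesis ranges ($n>4$, $n\le N/128$ in Lemma~\ref{lem:weak}, $n\in[N]$ in Theorem~\ref{thm:selector}) line up so that the asymptotic statement holds uniformly; since $n\log(N/n)/\log n$ is, up to constants, insensitive to scaling $n$ by a bounded factor, this causes no real trouble and I would dispatch it in one or two sentences. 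No new combinatorics is needed; this corollary is purely a corollary.
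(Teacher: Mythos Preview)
Your proposal is correct and matches the paper's own derivation: the paper states just before the corollary that it follows from Proposition~\ref{prop:reduction}, Lemma~\ref{lem:weak}, and Corollary~\ref{cor:coord:basic} (the latter being Theorem~\ref{thm:selector} plus Corollary~\ref{cor:lower:move}), which is exactly the set of ingredients you combine. The one minor variation is that for the strong-distinguisher lower bound the paper also has the direct counting argument of Lemma~\ref{lem:strong} available, whereas you deduce it from the prefix observation together with Lemma~\ref{lem:weak}; both routes are valid and equally short.
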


\iffull
\section{Lazy model with even $n$}\labell{sec:lazy}
\input{lazy.tex}
\else
\tj{We note that the bound from Cor.~\ref{cor:coord:basic} also holds for the lazy model 
(the proofs are omitted). 
It follows from the fact that complexities of
the weakly non-trivial move in the {\basic} model and 
the non-trivial in the {\lazy} are asymptotically equal.
}
\fi

\section{Perceptive model without common sense of direction}\labell{sec:perc}
Since the \basic\ model is too weak for the task of position discovery (when $n$ is even),
we considered the \lazy\ model. Although one can solve position discovery in this model, the
overhead cost for this problem is $\Omega(n\log(N/n)/\log n)$.
In \cite{FriedetzkyGGM12}, it is shown that position discovery
can be solved in the {\perceptive} model (i.e., when the position
of the first collision in a round can be detected while each
agent has to start the round moving to the {\rright} or {\lleft}).
%
In this section, we inspect efficiency of coordination problems as well as position discovery in this model.
First, we show that the {\perceptive} model gives an opportunity to exchange information between neighbors
on a ring (Section~\ref{sub:sec:comm}). Then, we use this feature to build algorithms for the nontrivial move problem 
which brake the lower bounds working in the {\basic} model and the {\lazy} model (Section~\ref{sub:sec:move:perc}).
Finally, using these solutions as tools, we provide a solution for the positions discovery problem
in time $n/2+o(n)$ provided $\log N=o(\sqrt{n})$ which is optimal up to the $o(n)$ term (Section~\ref{sub:sec:disc:perc}).

\subsection{Communication on a ring}\labell{sub:sec:comm}
First, we discuss the following {\em neighbors discovery} task in which
each agent $a$ should:
\begin{itemize}
\item
\vspace{-5pt}
learn (relative) location of its left neighbor $\Lleft(a)$ and its right neighbor $\Rright(a)$;
\item
\vspace{-5pt}
determine whether $\Lleft(a)$ and $\Rright(a)$ have the same sense of direction as $a$ has.
\end{itemize}
\vspace{-5pt}
Algorithm~\ref{alg:neighbor} solves this problem based on the fact that each two
IDs differ on at least one bit. (Some calculations performed by agents are not
explicitly described in the algorithm, they are discussed later.)
In Algorithm~\ref{alg:neighbor}, each execution of {\SingleRound} is followed by {\ReversedRound} in which
each agent starts a round with the direction opposite to its local direction {\dir}. We omit this detail in the pseudocode. However, let us stress here that this gives a guarantee that each agent starts each application of {\SingleRound} at exactly the same position as its position before the execution of the
algorithm (so, its distances to neighbours are the same as well). 

\begin{algorithm}[]
	\caption{NeighborDiscovery($a$)}
	\label{alg:neighbor}
	\begin{algorithmic}[1]
    \State $D_{\lleft}\gets\emptyset$; $D_{\rright}\gets\emptyset$ \Comment{distances to collisions}
    \For{$i=1,2,\ldots,\log N$}
        \For{$j\in[0,1]$}
            \For{$k\in\{\lleft,\rright\}$}
                \If{$\ID_a[i]=j$} $\dir\gets k$
                \Else   \ $\dir\gets$ direction opposite to $k$ 
                \EndIf
                \State \SingleRound
                \State After a round: $D_k\gets D_k\cup\{\coll()\}$ 
            \EndFor
        \EndFor
    \EndFor
    \State $\dir_a\gets \rright$ \Comment{All agents choose direction {\rright}}
    \State \SingleRound; $D_\rright\gets D_\rright\cup\{\coll()\}$ 
    \State $\dir_a\gets \lleft$
    \State \SingleRound; $D_\lleft\gets D_\lleft\cup\{\coll()\}$ 
    \State Location of $\Rright(a)\gets 2 \min (D_{\rright})$
    \State Location of $\Lleft(a)\gets 2 \min (D_{\lleft})$
    \end{algorithmic}
\end{algorithm}

\begin{proposition}\label{p:alg1}
Algorithm~\ref{alg:neighbor} gives solution to neighbors discovery in $O(\log N)$ rounds.
\end{proposition}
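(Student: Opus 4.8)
The plan is to verify two things: (i) the claimed correctness of the distance computations, namely that at the end of the algorithm $2\min(D_{\rright})$ equals the geometric distance from $a$ to its right neighbour $\Rright(a)$, and symmetrically for the left neighbour; and (ii) that from the data $D_{\rright},D_{\lleft}$ collected during the loop, agent $a$ can also decide whether each neighbour has the same sense of direction as itself. The round count $O(\log N)$ is immediate: the double-nested inner loops over $j\in[0,1]$ and $k\in\{\lleft,\rright\}$ contribute a constant factor, the outer loop runs $\log N$ times, each iteration uses one {\SingleRound} (plus one {\ReversedRound}, also a constant factor), and there are $O(1)$ additional rounds at the end; so the total is $O(\log N)$.

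For (i), the key tool is Proposition~\ref{prop:bounce}: if agent $b_0$ starts a round moving in direction $\dir$ and the very next agent $b_1$ in that direction starts moving \emph{toward} $b_0$ (i.e.\ opposite to $\dir$), then the first collision of $b_0$ occurs at relative distance $x_0/2$, where $x_0$ is the geometric gap between $b_0$ and $b_1$. First I would argue that, across the whole loop, there is at least one round in which $a$'s right neighbour $\Rright(a)$ moves opposite to $a$ while $a$ moves {\rright}: since $\ID_a \neq \ID_{\Rright(a)}$, they differ on some bit $i$, and in the iteration $(i,j,k)$ with $j=\ID_a[i]$ and $k=\rright$, agent $a$ sets $\dir=\rright$ while $\Rright(a)$ sets $\dir$ to the opposite direction — here one must be careful that the agents may not share a sense of direction, so "opposite of $k$" is interpreted in each agent's own frame, but the {\SingleRound}/{\ReversedRound} bookkeeping and the model's collision rules make the objective geometry come out right; in that round $\coll() = x/2$ where $x = $ dist$(a,\Rright(a))$, the smallest possible first-collision distance to the right. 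Conversely, in \emph{every} round of the algorithm in which $a$ moves {\rright}, its first collision to the right is at distance $\geq x/2$ (a collision with $\Rright(a)$ can only be at distance $x/2$ if $\Rright(a)$ comes toward $a$, and any collision involving agents further away is at least that far, by an induction identical to the one proving Proposition~\ref{prop:bounce}). Also, in the final two "all {\rright}" / "all {\lleft}" rounds no useful new minimum is produced but they guarantee $D_{\rright},D_{\lleft}\neq\emptyset$ even in degenerate ID configurations. Hence $\min(D_{\rright}) = x/2$ and $2\min(D_{\rright}) = x$, as claimed; the left case is symmetric.

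For (ii), I would observe that $a$ can \emph{also} reconstruct which of the two values $\ID_{\Rright(a)}[i]$ takes, for each bit $i$, by watching in which of the four sub-rounds of iteration $i$ the minimal right-collision distance $x/2$ is (or is not) realised: $\Rright(a)$ moves toward $a$ precisely in the sub-rounds where its chosen direction is opposite to $a$'s, which is determined by the relation between $\ID_a[i]$, $\ID_{\Rright(a)}[i]$, and whether the two agents' senses of direction agree. Reading off the pattern of sub-rounds in which $\coll()=x/2$ occurs lets $a$ recover both $\ID_{\Rright(a)}$ bit by bit and the sign bit (same/opposite orientation); the orientation of $\Lleft(a)$ is obtained the same way from $D_{\lleft}$. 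The main obstacle I anticipate is exactly this last bookkeeping: carefully separating the effect of "$a$ and its neighbour disagree on bit $i$" from "$a$ and its neighbour have opposite senses of direction", since both flip who-moves-toward-whom, and making sure the {\ReversedRound} resets keep all the geometric gaps fixed so that the single value $x/2$ is comparable across all $4\log N + 2$ rounds. Once the case analysis is tabulated this is routine, but it is where all the actual content of the proof lives.
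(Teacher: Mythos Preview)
There is a genuine gap in part (i). Your claim that the bit-difference iteration of the loop forces $a$ and $\Rright(a)$ to move toward each other, and that the final all-$\rright$/all-$\lleft$ rounds therefore produce ``no useful new minimum'', is false when $a$ and $\Rright(a)$ have opposite senses of direction. In the iteration $(i,j,k)$ with $\ID_a[i]\neq\ID_{\Rright(a)}[i]$, $j=\ID_a[i]$, $k=\rright$, agent $a$ moves $\rright$ in its own frame and $\Rright(a)$ moves $\lleft$ in \emph{its own} frame; if the two frames are opposite these are the same objective direction, so the first collision of $a$ is \emph{not} at $x/2$. Your handwave that ``the model's collision rules make the objective geometry come out right'' is exactly where the argument breaks. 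Indeed, if the senses are opposite and the two IDs happen to differ on every bit (which is possible for suitable $N$), a short case check shows that in \emph{no} loop round do $a$ and $\Rright(a)$ start toward each other. The all-$\rright$ round after the loop is not a technicality: it is the round that realises $\coll()=x/2$ in the opposite-sense case (each agent goes $\rright$ in its own frame, hence opposite-sense neighbours head toward one another). The paper's proof splits on exactly this dichotomy.

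Your plan for part (ii) is also more elaborate than needed and inherits the same gap: in the bad case above the loop yields no round with $\coll()=x/2$, so the ``pattern of sub-rounds'' you intend to read off is empty and your case analysis cannot start. The paper's argument is one line: once $x$ is known, inspect only the all-$\rright$ round; there $\coll()=x/2$ if and only if $\Rright(a)$ has the opposite sense of direction to $a$.
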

\iffull
\begin{proof}
Consider $a$ and $a'=\Rright(a)$. First, we show that, in some round,
they start moving towards each other (which gives the distance to collision
equal to halve of their distance, the smallest possible). If they have {\em opposite} sense of direction,
then this happens in line 10 of the algorithm. Otherwise (i.e., they have the
same sense of direction), they start moving towards every other for such $i,j,k$
that $\ID_a[i]\neq \ID_{a'}[i]$, $j=\ID_a[i]$ and $k=\rright$.

When $a$ already knows its distance to $\Rright(a)$, the distance to collision
in line 10 gives information whether $a$ and $\Rright(a)$ have the same sense of direction.
(Namely, their senses of directions are opposite iff the result of $\coll()$ is equal to halve of
the distance between them.)
\end{proof}
\else
\fi

\comment{
\begin{proposition}\label{p:alg1}
One can modify Algorithm~\ref{alg1} such that each agent learns ID of its left neighbor
and ID of its right neighbor in $O(\log N)$ rounds.
\end{proposition}
\begin{proof}
Recall that, after Algorithm~\ref{alg1}, each agent knows sense of direction of its neighbors
and distances $d_l,d_r$ to them.
Let $a_1$ be an agent, and $a_2=\Rright(a_1)$.
Observe that, in Algorithm~\ref{alg1}, $a_1$ and $a_2$ start moving towards
each other in line 7 for particular $i$ iff:
\begin{itemize}
\item
$a_1$ and $a_2$ have the same sense of direction, $\ID_{a_1}[i]\neq \ID_{a_2}[i]$, $j=\ID_{a_1}[i]$, $k=\rright$; OR
\item
$a_1$ and $a_2$ have opposite sense of direction, $\ID_{a_1}[i]= \ID_{a_2}[i]$, $j=\ID_{a_1}[i]$, $k=\rright$.
\end{itemize}
\end{proof}
}

\begin{proposition}\label{prop:communication}
If each agent knows:
\begin{itemize}
\item
\vspace{-5pt}
locations of its neighbors (relative to its initial location); AND
\item
\vspace{-5pt}
sense of direction of its neighbors (with respect to its own sense of direction);
\end{itemize}
then each agent can transmit one bit of information
to its neighbors in time $O(1)$.
\end{proposition}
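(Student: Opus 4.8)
The plan is to have each agent $a$ encode its one-bit message $m_a \in \{0,1\}$ into the \emph{direction} it chooses in a pair of rounds, and to have its neighbors decode $m_a$ from the positions of their first collisions (via $\coll()$), using the fact that each agent already knows, from the hypothesis, exactly where its neighbors sit and whether they share its sense of direction. First I would fix notation: for an agent $a$, write $r(a) = \Rright(a)$ and $\ell(a) = \Lleft(a)$, and let $x_a$ be the known geometric distance from $a$ to $r(a)$ in $a$'s clockwise direction (so $x_a$ is also the distance from $r(a)$ to $a$ going anticlockwise from $r(a)$'s point of view, which $r(a)$ also knows). The key mechanism is Proposition~\ref{prop:bounce}: if $a$ starts a round moving in direction $\dir$ and the very next agent in that direction starts in the opposite direction, then $a$'s first collision occurs at distance exactly (half the gap to that neighbor); more generally the first collision reveals which of the two neighbors is "coming toward" $a$.

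The concrete protocol I would use is: to send $m_a$ to its right neighbor $r(a)$, agent $a$ runs a {\SingleRound} (followed by a {\ReversedRound} to restore positions, exactly as in Algorithm~\ref{alg:neighbor}) in which it sets $\dir_a = \rright$ if $m_a = 1$ and $\dir_a = \lleft$ if $m_a = 0$ — but "right" and "left" here must be \emph{translated into $r(a)$'s frame}, which $a$ can do since it knows whether $r(a)$ has the same sense of direction. Symmetrically, in a second {\SingleRound} it encodes the bit destined for $\ell(a)$. From $r(a)$'s side: $r(a)$ knows $x_a$, and in the round devoted to its left-going messages it sets its own direction so as to move toward $a$; then $\coll()$ equals $x_a/2$ iff $a$ was moving toward $r(a)$ (i.e. $a$ chose the direction pointing from $a$ to $r(a)$), and $\coll() \ne x_a/2$ (it is larger, being governed by the next agent past $a$) otherwise. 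Thus $r(a)$ reads off one bit. To avoid interference between the left-stream and right-stream, and between what $a$ sends and what $\ell(a)$ or $r(a)$ sends, I would use a constant number of rounds with a fixed schedule (e.g. on "even" positions agents listen while "odd" positions speak, then swap — but since there is no global parity available, one instead simply dedicates separate rounds to each of the four direction/side combinations, as Algorithm~\ref{alg:neighbor} already does, giving $O(1)$ rounds total).

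The correctness argument has two ingredients. First, the \emph{translation} step: $a$ knows, for each neighbor, whether that neighbor's clockwise agrees with $a$'s, so "the objective direction from $a$ toward $r(a)$" is something both $a$ and $r(a)$ can name in their own coordinates; this is what lets the sender choose a direction that the receiver can interpret. Second, the \emph{decoding} step: by Proposition~\ref{prop:bounce}, in a round where $r(a)$ moves toward $a$, the value $\coll()$ obtained by $r(a)$ is $x_a/2$ precisely when $a$ also moves toward $r(a)$, and is strictly larger than $x_a/2$ when $a$ moves away from $r(a)$ (since then $r(a)$'s first collision is at the midpoint of a strictly longer chain of same-direction agents, or $r(a)$ is pushed and collides even later) — in either case $r(a)$ distinguishes the two cases because it knows $x_a$ exactly. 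The only subtlety, and the step I expect to be the main obstacle, is ruling out a "coincidence": we must ensure that in the designated decoding round the configuration of the \emph{other} agents (who are simultaneously sending their own bits or are idle/fixed) cannot conspire to make $\coll()$ equal $x_a/2$ when $a$ is \emph{not} moving toward $r(a)$. This is handled by the scheduling: in the round where $r(a)$ decodes $a$'s bit, we simply fix \emph{all} agents other than $a$ and $r(a)$'s immediate configuration to move in a single prescribed direction (as in the "executed on $A'$" convention), so that the only free variable affecting $r(a)$'s first collision is $a$'s choice; then $\coll() = x_a/2$ iff $a$ and $r(a)$ move toward each other, with no ambiguity. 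Bundling the four needed rounds (and their reversals) gives an $O(1)$-round protocol, completing the proof.
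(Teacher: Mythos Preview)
Your decoding mechanism is the right one: the receiver checks whether $\coll()$ equals half the known gap to its neighbor. But your scheduling argument has a genuine gap. You propose that ``in the round where $r(a)$ decodes $a$'s bit, we simply fix all agents other than $a$ and $r(a)$'s immediate configuration to move in a single prescribed direction.'' This cannot be realized in $O(1)$ rounds: every agent is simultaneously a sender (to both neighbors) and a receiver (from both neighbors), and there is no global $2$-coloring available---as you yourself note---that would let you alternate speaker/listener roles. If only one agent's direction is free in a given round, you need $\Theta(n)$ rounds. The same tension shows up earlier: you ask $r(a)$ to ``set its own direction so as to move toward $a$,'' but in that very round $r(a)$ is also encoding its own bit toward $r(r(a))$, so its direction is already committed and cannot be chosen independently.

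The paper's proof sidesteps both issues with one observation. Run just two {\SingleRound}s (each followed by {\ReversedRound}): in the first, every agent $a$ sets $\dir_a=\rright$ iff its bit $z_a=1$; in the second, $\dir_a=\rright$ iff $z_a=0$. Now the receiver $a$, wishing to read $z_b$ for $b=\Rright(a)$, looks at whichever of the two rounds its \emph{own} bit forced it to move right (toward $b$): if $z_a=1$ that is the first round, if $z_a=0$ the second. In that round $a$ is guaranteed to be heading toward $b$, and the collision is at the midpoint iff $b$ is heading toward $a$; knowing $b$'s sense of direction, $a$ converts this into $z_b$. No scheduling, no freezing of other agents---the receiver's own bit selects the correct round after the fact.

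Finally, your ``coincidence'' worry dissolves without any extra machinery. If $a$ moves toward $b$ but $b$ moves away, then by Proposition~\ref{prop:bounce} the first collision of $a$ occurs at half the distance to the first oppositely-moving agent \emph{beyond} $b$, which is strictly greater than half the $a$--$b$ gap since all inter-agent distances are positive. Hence $\coll()$ equals half the gap precisely when $b$ moves toward $a$, regardless of what the remaining agents do.
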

\iffull
\begin{proof}
Assume that each agent $a$ is going to transmit bit $z_a$. Consider a {\em phase}
which consists of four rounds:
\begin{itemize}
\item Round $1,2$: if $z_a=1$ then $\dir_a\gets \rright$ else $\dir_a\gets \lleft$;

\SingleRound, \ReversedRound

\item Round $3,4$: if $z_a=0$ then $\dir_a\gets \rright$ else $\dir_a\gets \lleft$;

\SingleRound, \ReversedRound
\end{itemize}
Consider an agent $a$ and its right neighbor $b=\rright(a)$. We show that $a$ is able
to determine $z_b$ (the bit transmitted by $b$) based on results of Rounds $1$ and $3$ (i.e., on moments of first
collisions). We say that a round is {\em clear} if $a$ and $b$ collide in the middle point
between their original locations, which means that they start a round moving towards
each other.
We have two cases:

\noindent Case 1: the sense of direction of $a$ and $b$ agree:
 \begin{enumerate}
 \item[(a)]
 $z_a=1$: if Round 1 is clear, then $z_b=0$, otherwise $z_b=1$.
 \item[(b)]
 $z_a=0$: if Round 3 is clear, then $z_b=1$, otherwise $z_b=0$.
 \end{enumerate}

\noindent Case 2: the sense of direction of $a$ and $b$ do {\em not} agree:
 \begin{enumerate}
 \item[(a)]
 $z_a=1$: if Round 1 is clear, then $z_b=1$, otherwise $z_b=0$.
 \item[(b)]
 $z_a=0$: if Round 3 is clear, then $z_b=0$, otherwise $z_b=1$.
 \end{enumerate}

Communication with $\lleft(a)$ can be performed in an analogous way.

\end{proof}
\else
\tj{\vspace{-5pt}
The statement of Prop.~\ref{prop:communication} can be obtained such that each
agent starts  round 1 (2, resp.) moving left/right depending on the transferred  bit. Then, the distances to the first collision in both rounds give information about the bits of  neighbors.}
\fi
Since agents can learn location of their neighbors and their sense(s) of direction
in $O(log N)$ rounds (see Proposition~\ref{p:alg1}), Proposition~\ref{prop:communication}
leads to the following corollary.

\begin{corollary}\labell{c:sim}
There exists a possibility to exchange one bit of information
between each two neighbors in the {\perceptive} model  in time $O(1)$, 
after a $O(\log N)$ preprocessing.
\end{corollary}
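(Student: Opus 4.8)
Corollary \ref{c:sim} combines the two preceding results, so the plan is essentially to state the composition explicitly and verify that the claimed round complexities add up correctly. First I would invoke Proposition~\ref{p:alg1}: running Algorithm~\ref{alg:neighbor} lets every agent learn the relative location of $\Lleft(a)$ and $\Rright(a)$, together with whether each neighbor shares $a$'s sense of direction, in $O(\log N)$ rounds. This is exactly the information listed as the hypothesis of Proposition~\ref{prop:communication}. I would emphasize (as the paper does in the remark preceding Algorithm~\ref{alg:neighbor}) that each {\SingleRound} inside the preprocessing is paired with a {\ReversedRound}, so that after the preprocessing phase every agent is back at its original position and the inter-agent distances are unchanged — hence the geometry on which Proposition~\ref{prop:communication} relies is preserved.

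Then I would apply Proposition~\ref{prop:communication}: once the neighbor locations and relative senses of direction are known, a single bit can be passed to both neighbors in $O(1)$ rounds, using the four-round phase in which $a$ starts moving left or right according to the transmitted bit and reads off whether the first collision occurs at the midpoint. The total cost is therefore $O(\log N)$ for the one-time preprocessing plus $O(1)$ per bit-exchange round, which is precisely the statement of the corollary. I would phrase the conclusion as: after an $O(\log N)$ preprocessing step, any single round of a "neighbor-communication" model — in which each agent simultaneously sends one bit to each of its two ring neighbors — can be simulated in $O(1)$ rounds of the {\perceptive} model.

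There is no real obstacle here; the only thing to be careful about is bookkeeping the reversibility. Specifically, I would note that Proposition~\ref{prop:communication} is a black box that needs the neighbor data to remain valid, and Algorithm~\ref{alg:neighbor} (with the implicit {\ReversedRound}s) guarantees this. One subtlety worth a sentence: the communication phase itself must also be reversed if the algorithm intends to keep agents in place for subsequent phases — but since each {\SingleRound} in the bit-exchange phase is likewise followed by a {\ReversedRound}, this is automatic and costs only a constant factor. So the proof is a two-line composition: preprocess via Proposition~\ref{p:alg1} in $O(\log N)$ rounds, then repeatedly apply Proposition~\ref{prop:communication} at $O(1)$ rounds per bit, which yields the claimed bound.
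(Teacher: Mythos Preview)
Your proposal is correct and matches the paper's approach exactly: the paper derives the corollary in one sentence by composing Proposition~\ref{p:alg1} (the $O(\log N)$ preprocessing) with Proposition~\ref{prop:communication} (the $O(1)$ bit exchange), and you do the same. Your additional remarks about the {\ReversedRound} bookkeeping are accurate and make explicit what the paper leaves implicit.
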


The above corollary gives opportunity to simulate any distributed algorithm on a ring
in message passing model (i.e., when each pair of neighbors can exchange a message in
one round of computation).
%
%
However, the time efficiency of such simulations is limited by the fact that only one bit
of information is exchanged between neighbors in a round.
%

Let {\em information dissemination task} with parameters $d$ and $p$ be to disseminate
a message $m_a$ with $p$ bits by each agent $a$ to all agents in ring distance $\leq d$
from $a$.
\begin{corollary}\labell{cor:disseminate}
Information dissemination task in which agents are supposed
to transmit messages of length $p$ on the ring distance $d$ can be accomplished in time $O(p\cdot d)$.
\end{corollary}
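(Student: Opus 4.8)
The plan is to reduce the information dissemination task to repeated single-bit exchanges between neighbors, using Corollary~\ref{c:sim}, and then pipeline those exchanges to avoid a multiplicative blow-up in the number of message bits.

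First I would set up the protocol as follows. After the $O(\log N)$ preprocessing from Corollary~\ref{c:sim} (Proposition~\ref{p:alg1} together with Proposition~\ref{prop:communication}), each agent knows the location and relative sense of direction of both its neighbors, so a single bit can be sent simultaneously in both ring directions in $O(1)$ rounds. The key observation is that the information dissemination task is just a standard message-forwarding problem on a ring, and on a ring a message originating at $a$ that must reach all agents within ring distance $d$ travels along two arcs (clockwise and anticlockwise), each of length at most $d$ hops. I would have each agent maintain, for each direction $k\in\{\lleft,\rright\}$, a queue of bits to forward in that direction: initially the $p$ bits of its own message $m_a$, and subsequently any bits it receives that still need to propagate further. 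In each ``super-step'' (a constant number of rounds), every agent pops one bit from each of its two outgoing queues and transmits it to the corresponding neighbor, and appends each bit it receives (from the appropriate side) to the same-direction outgoing queue, tagged so it can tell whose message it belongs to and how far it has traveled. Since agents move synchronously and the order of bits within a message is preserved by FIFO queues, each agent eventually reconstructs every message $m_b$ with $b$ within ring distance $d$.

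The main step is the timing analysis, and here I would invoke the standard pipelining argument for store-and-forward routing on a line/ring with bounded-degree nodes: the last bit of a message needs $p$ super-steps to be fully injected plus at most $d$ super-steps to travel $d$ hops, so any particular message is fully delivered to all its intended recipients within $O(p+d)$ super-steps; and because the traffic is ``local'' (each link in a fixed direction only ever carries the $p$-bit messages of agents whose dissemination arcs cross it, but a bit injected by $a$ in direction $k$ is dropped once it has traveled $d$ hops), no link is congested for more than $O(p\cdot d)$ super-steps total — in fact a cleaner bound follows by noting that along a fixed direction the bits behave like $p$ independent ``waves'', each wave being a shift register of length $d$, so after $O(p\cdot d)$ super-steps every bit that will ever cross a given link has done so. Multiplying by the constant number of rounds per super-step (from Proposition~\ref{prop:communication}) and absorbing the one-time $O(\log N)$ preprocessing (which is dominated once $p\cdot d=\Omega(\log N)$, and otherwise can be charged to the trivial $O(\log N)\subseteq O(p\cdot d)$ case only if $p d\ge\log N$; in general the running time is $O(p\cdot d+\log N)$, and the stated $O(p\cdot d)$ is the intended leading term), we obtain the claimed $O(p\cdot d)$ bound.

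The hard part will be making the pipelining/congestion argument fully rigorous despite the two subtleties peculiar to this model: (i) agents do not share a global sense of direction, so ``clockwise'' for one agent may be ``anticlockwise'' for a neighbor — but this is exactly what the preprocessing resolves, since each agent knows the relative orientation of each neighbor and can therefore consistently relabel an incoming ``left-queue'' bit as belonging to the globally-coherent propagation direction; and (ii) only one bit crosses each link per super-step, so one must confirm that the aggregate demand on any single link over the whole execution is $O(p\cdot d)$ and not, say, $\Theta(n p)$ — this follows because a bit is discarded after exactly $d$ forwarding hops, so the set of source agents whose bits ever traverse a fixed link in a fixed direction has size at most $d$, contributing at most $p\cdot d$ bit-transmissions across that link, which bounds the length of its queue and hence the total number of super-steps. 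I would therefore structure the proof as: (1) describe the queued-forwarding protocol and the per-super-step primitive from Corollary~\ref{c:sim}; (2) prove correctness (every target message is reconstructed, using FIFO order and the distance tags); (3) prove the $O(p\cdot d)$ super-step bound via the congestion argument above; (4) conclude by multiplying by the constant super-step length.
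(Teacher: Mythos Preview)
Your approach is correct and reaches the right bound, but it is considerably more elaborate than the paper's. Since the target is $O(p\cdot d)$ rather than $O(p+d)$, the paper does not pipeline at the bit level at all: each agent simply forwards whole $p$-bit messages one hop at a time (first its own, then those arriving from each side), and $d$ hops at $O(p)$ super-steps each give $O(p\cdot d)$ directly. Your congestion argument in effect rediscovers why pipelining cannot beat $O(p\cdot d)$ here---every directed link must carry the $p$-bit messages of $d$ distinct sources---so the queued protocol buys nothing over the na\"{\i}ve scheme for this corollary (it \emph{does} pay off in the next corollary, Corollary~\ref{cor:sdisseminate}, where sources are at ring distance $\geq d$ and the bound drops to $O(p+d)$). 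One minor caution: attaching a per-bit ``distance tag'', as you suggest, would inflate each transmitted bit by a $\Theta(\log d)$ factor and spoil the bound; fortunately the FIFO order and the synchronous schedule already determine message boundaries and hop counts, so the tags should be dropped. The $O(\log N)$ preprocessing is stated separately in Corollary~\ref{c:sim} and is treated as already paid in all subsequent uses.
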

A solution claimed in the above corollary might we designed such that
first all agents transmit own messages, then messages arriving from their left neighbors and finally messages arriving from their right
neighbors.

Assume that $A'\subset A$ is a set of {\em marked} agents such that each agent
knows whether it is marked or not and the ring distance between any different $a,a'\in A'$
is at least $d$. Moreover, each $a\in A'$ has a message $M_a$ of size $\leq m$.
The {\em sparsed information dissemination task} with parameters $A',d$ and $m$
is to deliver the message of each $a\in A'$ to all agents in the ring distance
$\leq d$ from $a$. 
For an agent in $A'$, we denote this task by Diss($M_a,d$).
Using the procedure exchanging a bit of information between
each pair of neighbors in time $O(1)$, we obtain the following result.
\begin{corollary}\labell{cor:sdisseminate}
Sparsed information dissemination task in which agents in distances $\geq d$ are supposed
to transmit messages of length $p$ on the ring distance $d$ can be accomplished in time $O(p+d)$.
\end{corollary}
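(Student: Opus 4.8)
\textbf{Proof plan for Corollary~\ref{cor:sdisseminate}.}
The plan is to combine the one-bit neighbour exchange primitive (Corollary~\ref{c:sim}) with a pipelining argument, exploiting the sparsity assumption that marked agents are at ring distance $\geq d$ from one another. First I would set up the $O(\log N)$ preprocessing from Proposition~\ref{p:alg1}, so that afterwards every agent knows the (relative) locations and senses of direction of its two ring neighbours; from then on, by Corollary~\ref{c:sim}, each round (or constant block of rounds) lets every agent push one chosen bit to its left neighbour and one to its right neighbour. Since there is a global common ``ring orientation'' available only implicitly, I would fix one direction of propagation at a time: first propagate every message ``to the right'' for $d$ hops, then ``to the left'' for $d$ hops; each $a\in A'$ prepares its $p$-bit message $M_a$ (including, if needed, an $O(\log d)$-bit hop counter so agents know when to stop forwarding).

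The core of the argument is the pipelined forwarding. In the right-propagation phase, think of the bits of each message $M_a$ as a stream that is shifted one position along the ring per step: at step $t$, an agent holding the $j$-th bit of some message forwards it to its right neighbour, while simultaneously the originator injects the next bit of its own message into the stream. Because messages originate only at the marked agents, and any two marked agents are at ring distance $\geq d$, the streams emanating from distinct marked agents never overlap within the window of $d$ hops that each is allowed to travel — so there is no contention, and each agent only ever has to relay a single message's bits. The time for one phase is therefore $p$ (to inject all bits of a length-$p$ message into the pipe) plus $d$ (for the last bit to travel $d$ hops), i.e.\ $O(p+d)$; doing the right phase then the left phase keeps the total at $O(p+d)$, and adding the $O(\log N)$ preprocessing is absorbed since in the intended regime $\log N$ is small relative to $p+d$ (and in any case the statement is about the dissemination phase itself). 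The hop-counter overhead adds only $O(\log d)$ bits to each message, changing $p$ to $p+O(\log d)$, which does not affect the asymptotics when $p=\Omega(\log d)$; alternatively, each relaying agent can deduce its hop-distance from $a$ during the preprocessing/first passage, avoiding the counter entirely.

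The main obstacle I expect is arguing the no-contention claim cleanly despite the lack of a shared global orientation: ``right'' means different things to agents with opposite senses of direction, so I must be careful that the forwarding rule is consistently defined. The fix is to first run a direction-agreement-style normalisation using the neighbour senses-of-direction learned in preprocessing (each agent knows whether each neighbour agrees with it), so that ``forward the bit in the direction it came from'' is unambiguous along any path; then the sparsity hypothesis guarantees each agent lies on at most one such active path of length $\leq d$ at any time, giving the clean pipeline and the $O(p+d)$ bound.
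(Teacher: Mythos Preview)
Your pipelining argument is the right idea and matches the paper's intended approach, but you miss the one technical wrinkle the paper actually singles out. The paper's entire justification for this corollary is a single remark: in this model the one-bit exchange primitive of Corollary~\ref{c:sim} forces \emph{every} agent to send a bit in \emph{every} round, so an agent that has not yet been reached by the pipeline has no way to say ``I have nothing to forward yet.'' The paper fixes this by a constant-factor encoding (send $00$/$11$ for a data bit $0$/$1$, and $01$ for ``no bit''); with that, each agent simply relays whatever symbol it last received in the given direction, and the usual $p+d$ pipeline analysis goes through. Your write-up implicitly assumes agents can stay silent until the stream reaches them, which is exactly what is not available here.

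A couple of smaller points. The hop-counter you introduce is unnecessary once you adopt the encoding: the right-propagation phase is globally timed for $O(p+d)$ steps, every agent just forwards the incoming symbol, and the sparsity hypothesis guarantees that along any path of length $\le d$ only one marked agent's stream appears, so there is nothing to disambiguate. Your direction-agreement worry is also heavier than needed: after the preprocessing of Proposition~\ref{p:alg1} each agent knows which physical neighbour is which and whether their senses agree, so ``forward the symbol that arrived from your left neighbour to your right neighbour'' is already well-defined locally along any chain, without any global orientation. Finally, the $O(\log N)$ preprocessing is not part of the bound here; Corollary~\ref{c:sim} already separates it out, and the dissemination routines assume it has been done.
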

In a solution to the sparsed information dissemination we have to tackle
the fact that an agent has no direct way to convey a message of the type ``I have nothing to transmit (yet)''. One can solve this issue by a simple encoding, e.g., $00/11$ encodes $0/1$, while $01$ encodes ``no bit to transmit''.

\subsection{Nontrivial Move}\labell{sub:sec:move:perc}
As we know, the nontrivial move problem is intuitively to break balance between the number
of agents moving clockwise and anticlockwise.
In our solution we use $(N,k)$-selective families from \cite{ClementiMS03}.
\begin{definition}
Let $n<N$. A family $\m{F}$ of subsets of $[N]$ is 
$(N, n)$-selective if, for every non empty subset $Z$ of $[N]$ such that $|Z|\leq n$, there is a
set $F$ in $\m{F}$ such that $|Z \cap F|=1$.
\end{definition}
Clementi et al.\ \cite{ClementiMS03} showed that 
for any $N>2$ and $n\leq N$, there exists an $(N,n)$-selective family of size
$O(n \log(N/n))$.

Let a {\em local leader} for some fixed number $d$ be an agent $a$ with the largest ID among agents
in the ring distance $d$ from $a$.
In Algorithm~\ref{alg:move:perceptive} we present a solution to the nontrivial move
problem by establishing local leaders for exponentially growing distances $d=2^k$ and trying
to execute $(N,2^k)$-selective family on those leaders. As the number of local leaders is
$\leq n/2^k$, it becomes smaller than $2^k$ for $k>\frac12\log n$ and gives a nontrivial
move after $O(2^{\frac12\log n}\log N)=O(\sqrt{n}\log N)$ rounds.

\begin{algorithm}[]
	\caption{NMoveS($a$)}
	\label{alg:move:perceptive}
	\begin{algorithmic}[1]
    \State $\dir_a\gets \rright$; set the status of $a$ as a local leader;
		\State \SingleRound
    \State If the current directions give a nontrivial move: return
    \State Establish $1$-bit communication\Comment{Cor.~\ref{c:sim}}
    \For{$k=0,1,2,\ldots,\log N$}
        \State Sparsed dissemination of ID$_a$ of local leaders on distance $2^k$\Comment{Corollary~\ref{cor:sdisseminate}}
        \If{$\ID_a=\max(N_a(2^k))$}
            \State set the status of $a$ as the {\em local leader}
        \Else
            \State set the status of $a$ as {\em not leader}
        \EndIf
        \State Execute a $(N,2^k)$-selective family $\m{F}$ on the local leaders
        \State \textbf{if} a nontrivial move appears during execution of $\m{F}$ \textbf{then} return
    \EndFor
    \end{algorithmic}
\end{algorithm}

\begin{lemma}\labell{lem:ntmp}
The algorithm NMoveS solves the nontrivial move problem in $O(\sqrt{n}\log N/\log n)$ rounds
in the {\perceptive} model.
\end{lemma}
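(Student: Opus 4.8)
My plan is to establish correctness and the round bound for Algorithm~\ref{alg:move:perceptive} separately. For correctness, I would argue that the algorithm always produces a nontrivial move. Observe that the set of local leaders for distance $d$ is nonempty (some agent has the maximum ID in its $d$-neighborhood; in fact the global maximum ID agent is always a local leader), and as $k$ grows the number of local leaders for distance $2^k$ is at most $n/2^k$ because any two local leaders for distance $d$ must be at ring distance more than $d$ apart, so there can be at most $\lfloor n/d\rfloor$ of them. Hence once $2^k \geq n$, i.e. $k \geq \log n$, there is exactly one local leader. The key point is then: when the number of local leaders is at most $2^k$, executing an $(N,2^k)$-selective family $\m{F}$ on the local leaders is guaranteed to hit a round in which exactly one local leader moves {\rright} (by the defining property of selective families applied to the nonempty set $Z$ of local-leader IDs, which has $|Z| \leq 2^k$), while all other agents move {\lleft} — wait, more carefully, all \emph{non}-local-leaders move {\lleft}. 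In such a round $n_C$ differs from $n_C'$ of the "all leaders move right" round by exactly $\pm 1$ as we toggle a single leader, so among the relevant rounds the rotation index takes at least two values differing by $2 \pmod n$; since $n > 4$, not both can lie in $\{0, n/2\}$, so some round is a nontrivial move. (This is the same argument as in Lemma~\ref{l:leader:move}, applied to the single leader once $k = \log n$, but the selective-family machinery lets us possibly terminate earlier.) I should also handle the preliminary round in line 2–3 and note that each agent can detect a nontrivial move via Lemma~\ref{l:opp:ind} combined with the fact that all agents experience the same rotation index (Lemma~\ref{lem:round}).

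The main obstacle I anticipate is twofold. First, detection: an agent needs to know, at the end of each round, whether that round was a nontrivial move, so that all agents can consistently decide whether to return or continue — but by Lemma~\ref{lem:round} every agent sees the same rotation index behavior, and by Lemma~\ref{l:opp:ind} each can classify it as $0$, $n/2$, $<n/2$, or $>n/2$ in $O(1)$ extra rounds, so this adds only a constant factor. Second, and more delicate: the agents do not share a common sense of direction, so "local leader" must be defined ID-wise (which is orientation-independent) and the dissemination of IDs on distance $2^k$ must work without a common orientation. Here I would invoke Corollary~\ref{c:sim} and Corollary~\ref{cor:sdisseminate} — the $1$-bit communication primitive from the {\perceptive} model works precisely because, by Proposition~\ref{p:alg1}, each agent first learns its neighbors' relative locations \emph{and} their senses of direction, so it can correctly interpret received bits regardless of orientation mismatch. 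I must check that the local-leader set computed this way is genuinely orientation-independent: it is, because "ID $a$ is the maximum among IDs within ring distance $2^k$" does not reference direction.

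For the round bound, I would sum the costs over iterations $k = 0, 1, \ldots$ up to termination, which happens no later than $k = \lceil \tfrac12 \log n \rceil$: at that point the number of local leaders is at most $n / 2^{k} \leq 2^{k}$, so the $(N, 2^k)$-selective family applies and forces a nontrivial move. Wait — I need the selective family to apply when the \emph{number} of active leaders is $\leq 2^k$, and indeed $n/2^k \leq 2^k$ exactly when $k \geq \tfrac12 \log n$. In iteration $k$: the sparsed dissemination of $O(\log N)$-bit IDs over distance $2^k$ costs $O(\log N + 2^k)$ by Corollary~\ref{cor:sdisseminate}; executing an $(N, 2^k)$-selective family of size $O(2^k \log(N/2^k)) = O(2^k \log N)$ costs $O(2^k \log N)$ rounds. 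Summing over $k = 0$ to $\tfrac12 \log n$, the geometric series is dominated by its last term, giving $O(2^{\frac12 \log n} \log N) = O(\sqrt{n}\,\log N)$ rounds. To get the sharper $O(\sqrt{n}\log N / \log n)$ claimed in the statement, I would instead let $d$ grow by a larger factor each iteration — replacing $2^k$ with $(\log N)^k$ or more precisely choosing the growth factor so that only $O(\log n / \log\log N)$... hmm, this needs more care; I'd revisit whether the statement intends the factor $\log(N/n)$ rather than $\log n$ in the denominator, or whether a coarser geometric progression (e.g. stepping $d$ through powers of $\log N$) yields the extra $1/\log n$ saving. Modulo pinning down that optimization, the argument is complete: correctness from the selective-family and single-leader properties, detection in $O(1)$ per round, and the geometric sum for the time bound.
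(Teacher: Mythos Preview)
Your approach matches the paper's proof essentially step for step: bound the number of local leaders at level $k$ by $n/2^k$, observe that the $(N,2^k)$-selective family applies once $k\ge \tfrac12\log n$, argue that isolating a single agent changes the rotation index by $\pm 2$ relative to the all-\rright\ round tested in lines~2--3 (hence yields a nontrivial move since $n>4$), and sum the geometric series $\sum_{k\le\frac12\log n}O(2^k\log N)=O(\sqrt{n}\log N)$.

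You are also right to be suspicious of the $/\log n$ in the stated bound. The paper's own proof concludes with exactly the computation you give, obtaining $O(\sqrt{n}\log N)$, and the paper's results table (Table~\ref{tab:gen}) likewise records $O(\sqrt{n}\log N)$ for the perceptive model. The extra $/\log n$ in the lemma statement appears to be a typo; no optimization of the geometric growth factor is carried out in the paper, and you need not invent one.

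One small point where both your write-up and the paper are slightly loose: when the selective family hits $|L\cap F_i|=1$, the number of agents that deviate from the all-\rright\ baseline is $|L\setminus F_i|=|L|-1$, not $1$ (under the paper's convention that non-leaders move \rright\ and leaders follow $F_i$). The clean fix is to also execute the complementary round $[N]\setminus F_i$, in which exactly the single selected leader deviates; this at most doubles the family size and so does not affect the asymptotics. Your informal pivot to ``the same argument as in Lemma~\ref{l:leader:move}'' is the right instinct, but make sure it is anchored to the round where a \emph{single} agent differs from the already-tested all-\rright\ configuration.
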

\iffull
\begin{proof}
As for correctness, the number of local leaders in the $i$th iteration of the
for-loop is $O(n/2^i)$. Thus, for $k\geq \log\sqrt{n}$, the number of local leaders
is not larger than $2^k$ and therefore the $(N,2^k)$-selective family breaks symmetry
between them which gives a nontrivial move. 


Below, we present the above intuition in more detail.
Let us fix some ``objective correct'' sense of direction.
Let $A_C,A_I$ be the subsets of agents with correct and 
incorrect sense of direction, respectively.
If the algorithm does not finish its execution in line 3,
the rotation index is in $\{0,n/2\}$ for a round with all $\dir_a$ equal to
{\lleft}. If exactly one agent changes its initial direction in
a round, the rotation index will increase by $2$ or $-2$.
That is, it will not be in $\{0,n/2\}$ since $n>4$,
and a nontrivial move will be obtained.
Let $L$ be the set of local leaders in the $k$th iteration of the for-loop.
If $k\geq \log \sqrt{n}$, the selective family in line 11 will select in some
round exactly one element from the set of local leaders. This follows
from the fact that the set of local leaders is not larger than $2^k$
in such case thus the $(N,2^k)$-selective family is sufficient to select
exactly one element from $L$.

Regarding time complexity, the $k$th iteration of the for-loop requires
$O(2^k\log N)$ rounds, which gives
$$O\left(\sum_{k=1}^{(\log n)/2} 2^k\log N\right)=O(\sqrt{n}\log N)$$
rounds. 
\comment{
Regarding time complexity, the $k$th iteration of the for-loop requires
$O(2^k\log(N/2^k)/k)$ rounds, which gives
$$O\left(\sum_{k=1}^{(\log n)/2} 2^k\log(N/2^k)/k\right)=O(\sqrt{n}\log N)$$
rounds. 
}
\end{proof}
\else
\fi

\comment{
\begin{corollary}
The leader election problem can be solved in time $O(\sqrt{n}\log(N))$.
\end{corollary}
\begin{proof}
\tj{BEDZIE WYNIKAC Z REDUKCJI I TU NIEPOTRZEBNE}
First, we solve the nontrivial move problem (Lemma~\ref{lem:ntmp}). Using a nontrivial round, the common
sense of direction can be established in $O(1)$ rounds. Finally, the leader can
be elected in $O(\log N)$ rounds by Lemma~\ref{lem:emptiness:leader}.
\end{proof}
}

\subsection{Position Discovery in the {\perceptive} model}\labell{sub:sec:disc:perc}

In this section we design an efficient solution for the position
discovery in the {\perceptive} model. Using results from the previous
section and Theorem~\ref{ABC},
we can assume that the leader is elected and the common sense of direction is established
in $O(\sqrt{n}\log N)$ rounds.
Throughout this section, we use a labeling of agents such that $a_1$ is the label of the leader and $a_i$ is the
label of the $i$th agent on the ring in the clockwise direction from the leader.

We solve the position discovery problem in two stages. First, each agent determines
its right ring distance to the leader (i.e., its {\em label}; note that a label denotes
the distance to the leader, not ID).  
In order to achieve this goal in the standard message passing model on a ring, linear
time is necessary. In order to perform this task faster, we use arithmetic relationships
between distances to collisions (\coll()) and  distances traversed in consecutive rounds (\pos()).
For appropriately designed protocol, an agent in ring distance $\leq d^2$ from the leader will
be able to learn its ring distance in $O(d\log N)$ rounds.
Then, using the knowledge about ring distances of agents to the leader, the position discovery
will be finally solved in the following way. Let $x_1,\ldots,x_n$ be the original distances between agents.
Here, we plan movements of agents in such a way that, for each agent and each round,
the distance to collision in the round and the distance traversed in the round gives a linear equation
over $x_1,\dots,x_n$ which is linearly independent from equations
derived before. In this way each round provides two new equations and $n/2$ rounds are sufficient
to determine the actual values of $x_1,\dots,x_n$,
since they give a system of $n$ independent linear equations over
$n$ variables.

\subsubsection{Ring distances}
Now, we design the RingDist protocol in which each agent learns
its right ring distance to the leader. 
Throughout this section, ring distance denotes the right ring distance from
the leader. We call it a {\em label} of an agent and denote the agent in ring
distance $i$ by $a_i$.

Let Shift($l$) for $l\in\NAT$ be a round in which
$\dir_{a_{i}}=\rright$ for each $i\in[l]$ and
$\dir_{a_{i}}=\lleft$ for $i>l$.
Moreover, Shift($-l$) is a round with directions of agents
opposite to their direction in Shift($l$).
Observe that the rotation index of Shift($l$) is equal to 
$$(l-(n-l))\ \modd n\equiv 2l\modd n.$$

RingDist works under assumption that (exactly) one distinguished
agent has the status leader (it is denoted $a_1$).
Each agent but the leader starts an execution of a protocol with unspecified
ring distance. The idea of Algorithm~\ref{alg:move:perceptive} is that the
agents gradually learn their ring distances in the following way:
\begin{itemize}
\item
The agents in ring distance $\leq 4$ learn their distances in step 1
(the same applies to the agents $\geq n-4$, although they learn merely their
relative values, without knowing $n$).
\item
In the $i$th iteration of the for-loop, the agents $a_k,a_{k+k},\ldots,a_{k+k^2}$ for $k=2^i$
learn their ring distances in the following way (see Fig.~\ref{fig:ring:dist}).
\begin{figure}[h]
\begin{center}
  \includegraphics[scale=0.45]{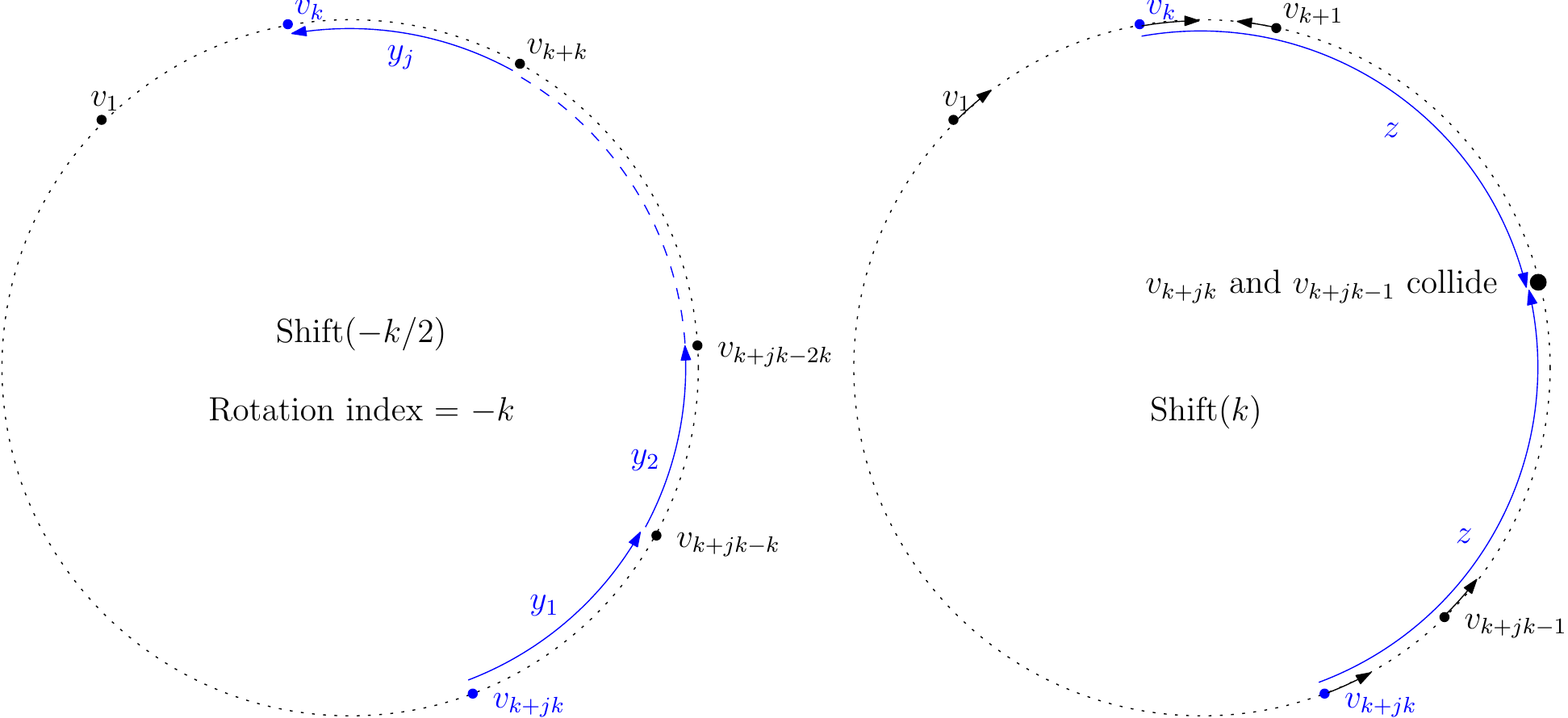}
  \caption{An illustration for Algorithm~\ref{alg:move:perceptive}. The agent's
	label is not $a_k+jk$ for any $j\leq k$ iff $2z$ is not equal to any of the sums
	$\sum_{i=1}^j y_i$.}
\label{fig:ring:dist}
\end{center}
\end{figure}
For each $l>k$, the value of $\coll()$ in Shift($k/2$) is equal to $z=(x_{l-k}+\cdots+x_{l-1})/2$
(see Prop.~\ref{prop:bounce} for $b_0=a_l$, $\dir=\lleft$ and thus
$b_i=a_{(l-i)\modd n}$).
On the other hand, if one applies Shift($-k/2$) several times,
the values of $\pos()$ in the $j$th executions of Shift($-k/2$) is equal to
$y_j=x_{l-jk}+\cdots+x_{l-(j-1)k+1}$, since the rotation index of Shift($-k/2$) is equal to $-k$.
Using these relationships, we see that there exists $j$ such that $2z=y_1+\cdots+y_j$
iff $l=k+k\cdot j$.
This observation is exploited in RingDist in order to determine ring distances of $a_1,\ldots a_{k+k^2}$
in the $i$th iteration of the main for-loop for $k=2^i$.

\item
The remaining agents $a_j$ for $j\leq k+k^2$ learn their distances
in the execution of 
line 8, as each agent knowing
its ring distance propagates it in the distance $2k$.
\end{itemize}
Then, it remains to guarantee that the for-loop is finished when all agents
know their ring distances and $2^i=O(\sqrt{n})$.
To this aim, we execute CheckCompleteness.
Note that the agent $a_n$ knows
that it is the last one already at the beginning (without knowing $n$), as it is
the left neighbour of the leader.
CheckCompleteness is a round in which all agents different from $a_n$ move
{\lleft}, while $a_n$ moves {\rright} iff it already knows its own
right ring distance (which in turn implies that every other agent knows its ring distance
as well).
Thus, the rotation index of this round is not zero iff each agent knows its ring distance.
\begin{algorithm}[]
	\caption{RingDist($a$)}
	\label{alg:move:perceptive}
	\begin{algorithmic}[1]
		\State \textbf{if} $a=a_1$: Diss(''leader'',4)\iffull\Comment{The leader $a_1$ broadcasts its message on ring distance $4$}\else\tj{\Comment{$a_1$ broadcasts on dist.\ $4$}}\fi
    \For{$i=1,2,\ldots,\log N$}
        \State $k\gets 2^i$
        \State For $j=1,\ldots,k$: Shift($-k/2$); $y_j\gets\pos()$
        \State Repeat $k$ times: Shift($k/2$) \iffull\Comment{Reverse the result of $k\times$ Shift($-k/2$)}\else\tj{\Comment{Reverse res. of l.\ 4}}\fi
        \State Shift($k$); $z\gets \coll()$; Shift($-k$)
        \State \textbf{if} $2z=y_1+\cdots+y_j$ for some $j$
				and $a\not\in\{a_1,\ldots,a_k\}$: 
				\State\phantom{abc}Set the ring distance of $a$ to $k+jk$; mark $a$\iffull\Comment{i.e., $a\gets a_{k+jk}$}\else\tj{\State\phantom{abc} (i.e., $a$ is $a_{k+jk}$ and $a$ is marked)}\fi
        \If{$a=a_{k+jk}$ for $j\leq k$ and $a$ marked}
					\State Diss($k+jk,k$)
					\iffull\Comment{Marked agents broadcast their ring dist.\ on distance $k$}\else\tj{ (i.e., marked agents \State broadcast their ring dist.\ on distance $k$)}\fi
				\EndIf
\State If CheckCompleteness: return\iffull\Comment{See description of the alg. for details}\fi
    \EndFor
    \end{algorithmic}
\end{algorithm}
In the following, we show more formally that the above described idea works. First, we make
an observation following from the definition of Shift (the rotation index of Shift($l$) is $2l$)
and Proposition~\ref{prop:bounce}.
\begin{proposition}\label{prop:shift}
Let $k=2^i$ for $i\leq \log N$.
Assume that agents $a_1,\ldots,a_k$ know their labels before the $i$th
iteration of the for-loop (and other agents know that they do not belong
to $\{a_1,\ldots,a_k\}$).
Then, for $l>k$ the values of $z$, $y_1,\ldots,y_k$ recorded by the agent $a_l$ satisfy the following
conditions in the iteration $i$ of the for-loop:
\begin{itemize}
\item $y_j=x_{l-kj}+x_{l-kj+1}+\cdots+x_{l-k(j-1)-1}$;
\item $z=(x_k+\cdots+x_{l-1})/2$.
\end{itemize}
\end{proposition}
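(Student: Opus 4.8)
The plan is to follow the single agent $a_l$ round by round through the $i$th iteration of the main loop of \textsc{RingDist}, using Lemma~\ref{lem:round} to locate it after each round and then reading the recorded quantities $y_1,\dots,y_k$ and $z$ off the definitions of $\pos()$ and $\coll()$ (the latter via Proposition~\ref{prop:bounce}). The inductive hypothesis enters only to guarantee that every agent can in fact execute the rounds Shift($\pm k/2$) and Shift($\pm k$) correctly: an agent of label at most $k$ knows its label, and an agent of label larger than $k$ knows it is not among $a_1,\dots,a_k$; in both cases this suffices to fix its direction in each of those rounds. Note also that Shift($m$) is an $(m,n-m)$-round, so by Lemma~\ref{lem:round} its rotation index is $2m\ \modd n$, hence Shift($-m$) has rotation index $-2m\ \modd n$; thus one Shift($\pm k/2$) shifts every agent by $\pm k$ positions and one Shift($\pm k$) by $\pm 2k$ positions. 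Throughout I use the invariant, maintained by the protocol, that each agent begins the iteration at its original position.

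First I would determine the $y_j$'s. Before the $j$th copy of Shift($-k/2$) in line~4, the $j-1$ earlier copies contribute total rotation index $-(j-1)k$, so by Lemma~\ref{lem:round} the agent $a_l$ occupies the original location $p_{l-(j-1)k}$ of $a_{l-(j-1)k}$ (indices read modulo $n$), and after this round it occupies $p_{l-jk}$. Hence in that round $a_l$ is carried between the original positions of $a_{l-(j-1)k}$ and $a_{l-jk}$, and by the definition of $\pos()$ the value $y_j$ it records is the sum of the $k$ consecutive gaps $x_{l-kj}+x_{l-kj+1}+\cdots+x_{l-k(j-1)-1}$ separating these $k+1$ agents. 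This is the first bullet (and, incidentally, it makes the telescoping $y_1+\cdots+y_j=x_{l-kj}+\cdots+x_{l-1}$ — which is why line~7 tests $2z=y_1+\cdots+y_j$ — transparent).

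For $z$ I would first observe that the $k$ copies of Shift($k/2$) in line~5 contribute total rotation index $+k^2$, exactly undoing the $-k^2$ accumulated in line~4, so at the start of line~6 the agent $a_l$ is back at its original position $p_l$. In the round Shift($k$) of line~6 the agent $a_l$ has label $l>k$ and hence starts moving anticlockwise; going anticlockwise from $p_l$ it passes $a_{l-1},a_{l-2},\dots,a_{k+1}$, all of which also start moving anticlockwise, until it is met by the first agent moving toward it. Applying Proposition~\ref{prop:bounce} with $b_0=a_l$, direction anticlockwise, and $b_i=a_{(l-i)\ \modd n}$, the first collision of $a_l$ occurs at distance $\tfrac12\bigl(x_k+x_{k+1}+\cdots+x_{l-1}\bigr)$ from $p_l$, i.e.\ $z=\tfrac12(x_k+\cdots+x_{l-1})$; the trailing Shift($-k$) merely restores $p_l$ and is irrelevant to the statement.

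The argument is elementary but finicky, and the only genuine obstacles are bookkeeping: keeping all labels reduced modulo $n$ so that the stated closed forms are the intended reading of index ranges that may wrap around; being consistent about the direction in which $\pos()$ and the first‑collision position are measured; and, in the computation of $z$, pinning down exactly which low‑index agent first reflects the incoming wave, since that is what decides whether the sum defining $z$ begins at $x_k$ or at $x_{k+1}$. Once these conventions are fixed, both identities are immediate from Lemma~\ref{lem:round} and Proposition~\ref{prop:bounce}.
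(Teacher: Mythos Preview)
Your argument is correct and matches the paper's intended route: the paper states that the proposition ``follows from the definition of Shift (the rotation index of Shift($l$) is $2l$) and Proposition~\ref{prop:bounce}'' without spelling anything out, and you have supplied exactly those details---tracking $a_l$'s position via Lemma~\ref{lem:round} to read off each $y_j$, then applying Proposition~\ref{prop:bounce} with $b_0=a_l$ in the round Shift($k$) to obtain $z$. Your remarks on the bookkeeping hazards (modular index wraparound, the direction convention for $\pos()$, and identifying $a_k$ as the first opposing agent so that the sum in $z$ begins at $x_k$) are apt and are precisely the places where the paper's own exposition is loose.
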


\iffull
The following corollary is an immediate consequence of Proposition~\ref{prop:shift}.
\fi
\begin{corollary}\labell{cor:shift}
The condition $2z=y_1+\cdots+y_j$ is satisfied for an agent $a\not\in\{a_1,\ldots,a_k\}$ iff $a$ is in the right ring distance
$k+jk$ from the leader (i.e., $a=a_{k+jk}$).
\end{corollary}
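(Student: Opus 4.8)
The plan is to read the corollary off Proposition~\ref{prop:shift} by substituting its two formulas into the equation $2z = y_1 + \cdots + y_j$. Write $a = a_l$; the hypothesis $a \notin \{a_1,\dots,a_k\}$ together with the standing inductive assumption of Proposition~\ref{prop:shift} (that $a_1,\dots,a_k$ already know their labels at the start of iteration $i$, with $k = 2^i$) means $l > k$ and that the quantities $z, y_1, \dots, y_k$ recorded by $a$ satisfy the formulas of that proposition.

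First I would observe that the blocks $y_1,\dots,y_j$ abut. From $y_m = x_{l-km} + \cdots + x_{l-k(m-1)-1}$, the index range of $y_{m+1}$ stops exactly one step before that of $y_m$ starts, so $y_1 + \cdots + y_j = x_{l-jk} + x_{l-jk+1} + \cdots + x_{l-1}$ (indices mod $n$): a run of $jk$ consecutive gaps ending at $x_{l-1}$. By the same proposition $2z = x_k + x_{k+1} + \cdots + x_{l-1}$: a run of $l-k$ consecutive gaps, also ending at $x_{l-1}$. All gaps $x_1,\dots,x_n$ are strictly positive and sum to $1$, and the $2z$-run omits at least the gap $x_l$, so $2z < 1$; hence both runs are proper arcs of the cycle (and should the $y$-side wrap around so as to cover every gap, it would be $\ge 1 > 2z$, so such a $j$ cannot satisfy the equation anyway). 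Two proper arcs made of strictly positive gaps that share the endpoint $x_{l-1}$ and run backward from it have equal sum if and only if they are the same arc, i.e.\ have the same length. Therefore $2z = y_1 + \cdots + y_j$ iff $l - k = jk$, i.e.\ $l = k + jk$, i.e.\ $a = a_{k+jk}$; the converse implication is this computation reversed. Since $j \mapsto y_1 + \cdots + y_j$ is strictly increasing (each $y_m$ is a sum of positive gaps), at most one $j$ can work, so the assignment of a ring distance to $a$ in Algorithm~\ref{alg:move:perceptive} is well defined.

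The one point that needs a little care — and the main, if minor, obstacle — is excluding cyclic wraparound in the index runs, since that is exactly what legitimizes the step ``equal sums of positive gaps $\Rightarrow$ equal runs''. This is settled by $2z < \sum_{i=1}^n x_i = 1$, which forces every arc occurring in the equation to be a proper arc of the cycle; with that in hand the corollary is an immediate substitution into Proposition~\ref{prop:shift}.
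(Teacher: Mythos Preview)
Your proof is correct and follows the same route as the paper, which simply declares the corollary an immediate consequence of Proposition~\ref{prop:shift}. Your substitution argument is exactly what ``immediate'' unpacks to here, and the extra care you take about cyclic wraparound (using $2z<1$ and strict positivity of the gaps to force both index runs to be proper, nested arcs ending at $x_{l-1}$) fills in a detail the paper leaves implicit.
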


\begin{lemma}\labell{lem:ring}
Assume that the leader is elected and all agents share common sense of direction.
Then, each agent $a$ determines its ring distance during the algorithm RingDist and the
algorithm lasts $O(\sqrt{n}\log N)$ rounds.
\end{lemma}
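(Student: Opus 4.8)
The plan is to prove correctness and the time bound separately, relying on the structural facts already established. For \emph{correctness}, I would argue by induction on the iteration counter $i$ that at the start of iteration $i$ (with $k=2^i$), every agent $a_j$ with $j\le k$ knows its label, and every other agent knows it is not among $a_1,\dots,a_k$. The base case $i=1$ is handled by line~1: the leader disseminates on ring distance $4$, so agents $a_1,\dots,a_4$ (and, symmetrically, $a_{n-3},\dots,a_n$, who recognize they are near the leader from the left) learn their labels, and by Corollary~\ref{cor:disseminate} this costs $O(1)$ rounds. For the inductive step, Proposition~\ref{prop:shift} gives the exact values of $z$ and the $y_j$'s recorded in iteration $i$, and Corollary~\ref{cor:shift} then guarantees that the test $2z=y_1+\cdots+y_j$ fires for $a$ exactly when $a=a_{k+jk}$; hence after line~8 the agents $a_{2k},a_{3k},\dots,a_{k+k^2}$ learn their labels. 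Line~9--10 (sparse dissemination on distance $k$, using Corollary~\ref{cor:sdisseminate}) then fills in all agents $a_m$ for $m\le k+k^2$, since every such $m$ is within distance $k$ of some already-labeled $a_{k+jk}$ or of $a_1,\dots,a_k$. This re-establishes the inductive hypothesis with $k$ replaced by $2k=2^{i+1}$.

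Next I would argue \emph{termination}: once $2^i\ge\sqrt n$, we have $k+k^2\ge n$, so after iteration $i$ every agent knows its label; the CheckCompleteness round in line~11 detects this (its rotation index is nonzero iff $a_n$, the left neighbor of the leader, knows its own ring distance, which by the inductive claim happens iff all agents do), and the algorithm returns. Thus the loop runs for at most $\lceil\tfrac12\log n\rceil+O(1)$ iterations.

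For the \emph{time bound}, I would tally the cost of iteration $i$ with $k=2^i$: line~4 is $k$ applications of Shift, line~5 another $k$, line~6 is $O(1)$ rounds, line~7--8 are local (no rounds), line~9--10 is a sparse dissemination of an $O(\log N)$-bit label on distance $k$, costing $O(\log N + k)$ by Corollary~\ref{cor:sdisseminate}, and line~11 is $O(1)$. So iteration $i$ costs $O(k+\log N)=O(2^i+\log N)$ rounds. Summing over $i$ from $1$ to $\tfrac12\log n+O(1)$ gives
\[
\sum_{i=1}^{\frac12\log n+O(1)} O\!\left(2^i+\log N\right)=O\!\left(\sqrt n\right)+O\!\left(\log n\cdot\log N\right)=O\!\left(\sqrt n\,\log N\right),
\]
which together with the $O(\sqrt n\log N)$ preprocessing for leader election and direction agreement (invoked in the hypothesis, via Theorem~\ref{ABC} and Lemma~\ref{lem:ntmp}) gives the claimed bound.

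The main obstacle I anticipate is not the arithmetic but pinning down the inductive invariant cleanly at the boundary: I must make sure the "and other agents know that they do not belong to $\{a_1,\dots,a_k\}$" half of the hypothesis is genuinely maintained (an agent that has not yet been assigned a label must be able to certify it is \emph{not} one of the first $k$), and I must double-check the edge effects near the leader — in particular that wrap-around in the indices $x_{l-kj},\dots$ of Proposition~\ref{prop:shift} does not cause a false positive of the test $2z=y_1+\cdots+y_j$ for an agent whose true label exceeds $k+k^2$. The cleanest way to dispatch this is to observe that for such an agent the partial sums $y_1+\cdots+y_j$ only ever reach $x_k+\cdots+x_{k+k^2-1}<2z$, so the test simply never fires, and the agent correctly concludes "not yet labeled" and waits for a later iteration or for the dissemination step.
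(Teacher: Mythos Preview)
Your proof is correct and follows essentially the same approach as the paper: the same inductive invariant (agents $a_1,\dots,a_k$ labeled at the start of iteration $i$ with $k=2^i$), the same appeal to Proposition~\ref{prop:shift} and Corollary~\ref{cor:shift} for the test $2z=y_1+\cdots+y_j$, the same use of sparse dissemination to fill in the gaps, and the same geometric summation for the time bound. Your per-iteration cost $O(2^i+\log N)$ is in fact slightly sharper than the paper's $O(2^i\log N)$, and your explicit check that the test never fires falsely for agents with label exceeding $k+k^2$ is extra care the paper leaves implicit in Corollary~\ref{cor:shift} (note a small index slip there: the maximal partial sum for such an $a_l$ is $x_{l-k^2}+\cdots+x_{l-1}$, not $x_k+\cdots+x_{k+k^2-1}$, though your inequality $<2z$ is correct).
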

\iffull
\begin{proof}
Before the for-loop, the agents in ring distance $\leq 4$ are aware of their ring distance,
while the agent $a_n$ knows that it is ``the last one''. 
We show by induction that, after the $i$th iteration of the main for-loop, the agents
$a_1,\ldots,a_{k^2}$ know their labels for $k=2^i$.
As the base step is obvious, assume inductively that before the $i$th iteration,
the agents $a_1,\ldots, a_{(2^{i-1})^2}$ know their labels. Thus, in particular
$a_1,\ldots,a_{k}$ know their labels for $k=2^i$ since $2^i<(2^{i-1})^2$.
This assures that agents are able to perform all steps in the $i$ iteration
of the main for-loop.
Then, Proposition~\ref{prop:shift}
and Corollary~\ref{cor:shift} imply that each agent
$a_{k+jk}$ for $j\in[k]$ and $k=2^i$ becomes aware of its ring distance
before 
dissemination of distances in line 10.
In line 10 agents $a_k,a_{2k},\ldots,a_{(k-1)k},a_{k^2}$
\tj{broadcast} information about their ring distances to agents in their ring distance
$\leq k$.
Thus, the remaining agents $a_l$ for $l\leq k+k^2+k$ learn their ring distances
from the agents $a_{k+k},\ldots,a_{k+2k},\ldots,a_{k+k^2}$.
Finally, for the smallest $i$ such that $2^i+(2\cdot 2^i)^2+2^i\geq n$, the for-loop
is finished and all agents know their ring distances.
In order to execute the $i$th iteration, $O(2^i\log N)$ rounds are sufficient
(by Corollary~\ref{cor:sdisseminate}, dissemination of distances by marked agents
can be done in $O(2^i+\log N)=O(2^i\log N)$ rounds).
Time complexity of the whole procedure is
$$O(\sum_{i=1}^{\frac12\log n} 2^i\log N)=O(\sqrt{n}\log N).$$
\end{proof}
\else
\fi

\subsubsection{Position discovery}
In this section we describe a solution for the position discovery
problem based on protocols presented before.
Recall that, given  the common sense of direction and the leader, one can obtain a round
with rotation index $2$ by assigning the direction $\lleft$ to all agents but the leader.
If $n$ is odd, this gives a solution to the position discovery problem in $n$ rounds.
The goal of this section is to get advantage of information provided by positions of 
the first collision in a round, in order to decrease time from $n$ to $n/2$ and manage
the case that $n$ is even.

Here, we assume that the leader $a_1$ is elected, the agent(s) in the right ring distance
$i$ from the leader is $a_{i+1}$ and each agent $a_i$ knows $i$ (see Corollary~\ref{cor:shift}). Moreover, we assume that
$n$ is even (one can easily build a similar solution for odd $n$).
Let $x_i$ denote the distance on the ring between the agent $a_i$ and the agent $a_{i+1}$
(or $a_1$ if $i=n$). (Note that this is the geometric distance on the ring, not the ring distance!)

Let Convolution($j$) be a round in which the agents' directions are as 
\iffull
follows (see Figure~\ref{fig:red:strong}):
\else
follows:
\fi
\begin{itemize}
\item[]
$\dir_{a_{2i-1}}=\rright$, $\dir_{a_{2i}}=\lleft$ for each $i\in[n/2]$, with an
exception: $\dir_{a_{2j}}=\rright$.
\end{itemize}
\iffull
\begin{figure}[H]
\begin{center}
  \includegraphics[scale=0.45]{conv.pdf}
  \caption{An illustration of Convolution$(j)$ and Pivot$(n)$. In Convolution($j$), the blue agents are the only one which do not collide with a neighbor in the middle of the distance between them. In Pivot$(j)$, blue agents are the only one which collide with a neighbor in the middle of distance between them.}
\label{fig:red:strong}
\end{center}
\end{figure}
\else
\fi

Let Pivot($j$) be a round in which the agents' directions are as follows:
$\dir_{a_{j+1}}=a_{j+2}=\cdots=a_{j+n/2}=\lleft$ and
$\dir_{a_{j}}=\dir_{a_{j-1}}=\cdots=\dir_{a_{j-n/2+1}}=\lleft$.
(Here, the subscript indices are calculated modulo $n$ and $a_0$ is identified
with $a_n$.)
Observe that the rotation index of Convolution($i$) is equal to $2$ and the rotation
index of Pivot($i$) is equal to $0$ for each $i$.
\begin{algorithm}[]
	\caption{Distances($a$)}
	\label{alg:disc:perc}
	\begin{algorithmic}[1]
    \For{$i=1,2,\ldots,n/2$}
        \State Convolution$(\frac{n-2(i-1)}{2})$
    \EndFor
    \State Pivot($n$); Pivot($n-1$); Pivot($n-2$);
    \end{algorithmic}
\end{algorithm}
In the following, we show that information collected during an
execution of Algorithm~\ref{alg:disc:perc} can determine original
positions of all other agents. 
\begin{proposition}\label{prop:linear}
After the for-loop of Algorithm~\ref{alg:disc:perc}, the following
conditions hold:
\begin{enumerate}
\item[(a)]
the agent $a_{2i-1}$ 
can
determine the values of
$x_{1}, x_2, \ldots,x_{n-2}$ and $x_{n-1}+x_n$.
\item[(b)]
the agent $a_{2i}$ 
can determine the values of
$x_1,x_2,\ldots,x_{n-3}$, $x_n$, and $x_{n-2}+x_{n-1}$
\end{enumerate}
for each $i\in[n/2]$.
\end{proposition}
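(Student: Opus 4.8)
The plan is to track, round by round of the for-loop, exactly which linear combination of the $x_i$'s each agent learns from its two measurements $\pos()$ and $\coll()$, and then argue that after $n/2$ rounds these combinations determine the claimed quantities. First I would fix the round index $i$ and set $j = \frac{n-2(i-1)}{2} = \frac{n}{2} - i + 1$, so that Convolution$(j)$ is run. Recall that in Convolution$(j)$ every odd-indexed agent moves {\rright} and every even-indexed agent moves {\lleft}, \emph{except} that $a_{2j}$ moves {\rright}. Since the rotation index of every Convolution round is $2$ (as stated just before Algorithm~\ref{alg:disc:perc}), by Lemma~\ref{lem:round} each agent $a_m$ ends round $i$ at the position where $a_{m+2}$ started, so $\pos() = x_m + x_{m+1}$ for agent $a_m$ (indices mod $n$). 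The key point is that this identity is only \emph{useful as a new equation} when agent $a_m$ can name the indices involved — which it can, since it knows its own label $m$ and $n$ is known to be even — but the same pair of measurements appears in every round; the genuinely new information comes from $\coll()$.

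The heart of the argument is the collision analysis via Proposition~\ref{prop:bounce}. In Convolution$(j)$ the ``anomalous'' agent $a_{2j}$ points {\rright} while its clockwise neighbor $a_{2j+1}$ also points {\rright}, so locally around position $2j$ there is a block of consecutive agents moving in the same direction before hitting an oppositely-moving agent; applying Proposition~\ref{prop:bounce} to the relevant $b_0$ gives each affected agent a first-collision distance equal to a \emph{prefix sum} $(x_a + x_{a+1} + \cdots + x_b)/2$ whose right endpoint is pinned to the position of the anomaly $2j$. As $i$ runs from $1$ to $n/2$, the anomaly position $2j = n - 2(i-1)$ sweeps through $n, n-2, n-4, \ldots, 2$, so the right endpoints of these prefix sums sweep through all the even boundary positions. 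For an odd-indexed agent $a_{2i-1}$, combining the $\pos()$ equations $x_m + x_{m+1}$ (which chain consecutive pairs) with the sweeping $\coll()$ prefix-sum equations yields, after elimination, the individual values $x_1,\dots,x_{n-2}$ and the single residual combination $x_{n-1}+x_n$ (the last two stay coupled because no round separates the boundary between $a_{n-1}$, $a_n$, $a_1$ for an odd agent — that is precisely what Pivot$(n)$, Pivot$(n-1)$, Pivot$(n-2)$ are for in line~4, and those are outside the for-loop so they do not enter part~(a)/(b)). For an even-indexed agent $a_{2i}$ the roles of the endpoints shift by one, so one recovers $x_1,\dots,x_{n-3}$, $x_n$ individually and the residual $x_{n-2}+x_{n-1}$; this is the asymmetry between (a) and (b).

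Concretely the write-up would proceed: (i) compute $\pos()$ and $\coll()$ for a generic agent in a generic Convolution$(j)$ round using Lemma~\ref{lem:round} and Proposition~\ref{prop:bounce}; (ii) observe the system of equations accumulated over $i=1,\dots,n/2$ has, for each parity class of agent, a coefficient matrix of full rank $n-1$ on the variables $x_1,\dots,x_n$ subject to $\sum x_i = 1$, with exactly the one stated pair left unresolved — this is a triangular/telescoping elimination, not a brute determinant computation; (iii) read off (a) and (b). The main obstacle I anticipate is the bookkeeping of modular indices near the ``seam'' at $a_n/a_1$: one must be careful that the collision in Convolution$(j)$ for agents near position $2j$ really does have the claimed endpoints, and that as $j$ decreases the family of prefix sums obtained is rich enough to isolate every $x_m$ with $m \le n-2$ (resp.\ $m\le n-3$) while provably \emph{failing} to separate the last coupled pair — establishing that failure (rather than just the successes) is what justifies needing the three Pivot rounds afterward, and it requires checking that no Convolution round's collision pattern ever straddles the $a_{n-1},a_n,a_1$ junction in the relevant way.
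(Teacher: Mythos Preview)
Your plan has a genuine gap: you conflate the \emph{label} $2j=n-2(i-1)$ of the anomalous agent with its absolute \emph{position} on the ring, and from this you conclude that the anomaly sweeps through positions $n,n-2,\ldots,2$, producing a family of prefix sums with varying right endpoints. This is not what happens. Because every Convolution round has rotation index $2$, after $i-1$ rounds agent $a_k$ sits at the original position of $a_{k+2(i-1)}$; in particular the anomalous agent $a_{n-2(i-1)}$ always sits at the original position of $a_n$. Since directions are assigned by label and labels preserve parity under the shift by $2(i-1)$, the direction pattern \emph{in absolute positions} is identical in all $n/2$ rounds. Hence for every $m<n/2$ the agents currently at positions $p_{2m-1}$ and $p_{2m}$ move toward each other and collide at distance $x_{2m-1}/2$; there are no long prefix sums except for the two agents adjacent to position $p_n$.

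The mechanism the paper actually uses is therefore the opposite of yours: the anomaly is stationary and the \emph{agents} sweep. Agent $a_k$ occupies positions $p_k,p_{k+2},\ldots,p_{k+n-2}$ over the $n/2$ rounds, and in each round (except the one where it is adjacent to the anomaly at $p_n$) its $\coll()$ equals $x_{2m-1}/2$ for the current odd position index $2m-1$, while $\pos()$ equals $x_\ell+x_{\ell+1}$ for its current position $\ell$. Thus an odd-labeled agent collects $x_1,x_3,\ldots,x_{n-3}$ from $\coll()$ and $x_1+x_2,\,x_3+x_4,\ldots,x_{n-1}+x_n$ from $\pos()$, giving (a) directly; the even case is the same shifted by one, giving (b). Your triangular-elimination step (ii) would then be unnecessary, and your anticipated ``main obstacle'' about modular bookkeeping near the seam disappears once you realise the anomaly never moves.
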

\iffull
\begin{proof}
First, observe that
the agent $a_k$ stays at the original position
of $a_{k+2(i-1)}$ before the $i$th execution of line 2, since the rotation index
of Convolution is equal to $2$. Therefore, the only agent with ``swapped'' direction
stays in each round of the for loop at the original position of $a_{j}$
for $j=\frac{n-2(i-1)}{2}+2(i-1)=\frac{n}2$.
As a result, the agent $a_k$ learns in the $(i+1)$st execution of line 2 exactly the same values as $a_{k+2i}$
in the first execution of line 2.
Consider this first round. As $a_{2j-1}$ and $a_{2j}$ start
the round moving towards each other for each $j\in[n/2-1]$ (i.e., $j< n/2$),
they collide after traversing the distance $x_{2j-1}/2$ and therefore learn $\coll()=x_{2j-1}/2$.
On the other hand, the rotation index of Convolution with any
parameter is $2$ and thus each agent $a_k$ learns $\pos()=x_k+x_{k+1}$.
All these observations lead to the following conclusion:
\begin{itemize}
\item
Each agent with odd label 
learns the values of
$x_{2j-1}+x_{2j}$ for each $j\in[n/2]$ (registered as $\pos()$ in consecutive rounds)
and $x_{2k-1}$ (corresponding to $\coll()$) for each $k\in[1,n/2-1]$.
\item
Each even agent 
learns the values of
$x_{2(j-1)}+x_{2j-1}$ for each $j\in[n/2]$
and $x_{2k-1}$ for each $k\in[1,n/2-1]$.
\end{itemize}
Given, the above, each agent can deduce the values enumerated
in the proposition.
E.g., each agent with add label solves the 
system of equations $x_{2j-1}+x_{2j}=d_j$ for $j\leq n/2$
and $x_{2j-1}=c_j$ for $j<n/2$, where $d_j$ and $c_j$ are
the values observed as $\pos()$ and $\coll()$ in various rounds.
\end{proof}
\else
\tj{The above proposition uses the fact that each execution of 
Convolution gives information about the sum of two consecutive
$x_i$'s (as the distance between an agent's position at the beginning and the end of a round) and about a particular $x_j$ (the distance to the first collision is equal to halve of $x_j$ for some $j$).}
\fi
Now, observe that
\begin{itemize}
\item
The agent $a_{2i-1}$ has the first collision during Pivot($n$) in distance
$x_n/2+(x_1+\cdots+x_{2i-2})/2$ for each $i\in[n/4]$.
As $a_{2i-1}$ knows $x_1,\ldots,x_{2i-2}$ and $x_{n-1}+x_n$ by Proposition~\ref{prop:linear}(a), it can determine $x_n/2$ from Pivot$(n)$ and 
therefore also $x_{n-1}$.
\item
The agent $a_{2i-1}$ has the first collision during Pivot($n-1$) in distance
$x_{n-1}/2+(x_{2i-1}+\cdots+x_{n-2})/2$ for each $i\in[n/4+1,n/2-1]$.
As it knows $x_{2i-1},\ldots,x_{n-2}$ and $x_{n-1}+x_n$, it can
determine $x_{n-1}$ from Pivot$(n-1)$ and then $x_n$ as well.

A similar reasoning works for $a_{n-1}$ as well.
\end{itemize}
By combining the above with Prop.~\ref{prop:linear}(a), one can conclude that each 
agent $a_i$ with odd label $i$ knows original positions of all agents. A similar argument
applies for even agents and executions of Pivot($n-1$) and Pivot($n-2$),
since Prop.~\ref{prop:linear}(b) can be seen as Prop.~\ref{prop:linear}(a) ``shifted'' by $-1$.

\iffull
Thus, we obtain the following conclusion.
\else
\fi
\begin{lemma}\labell{lem:distances}
The protocol Distances (Alg.~\ref{alg:disc:perc}) solves the position discovery problem, provided
the leader ($a_1$) is elected, agents share common sense of direction
and each agent knows its ring distance.
\end{lemma}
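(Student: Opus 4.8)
The plan is to argue that after an execution of Distances every agent has recovered all $n$ consecutive gap lengths $x_1,\dots,x_n$, and that this suffices for location discovery: an agent that additionally knows its own ring distance then knows the initial position of every other agent relative to itself (and, since $n$ is even and ring distances are known, the value $n$ is available, so all the indices below make sense).

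The first step is to read off what the for-loop achieves. A Convolution round has rotation index $2$, so the $n/2$ iterations shift the configuration by $n\equiv 0$ places; hence the agents are back at their original positions when the Pivot rounds begin, and each Pivot round, having rotation index $0$, leaves positions fixed. Consequently every $\coll()$ recorded in the second phase is a sum of the \emph{original} gaps, and Proposition~\ref{prop:linear} applies verbatim: an odd-label agent $a_{2i-1}$ knows $x_1,\dots,x_{n-2}$ together with the single sum $x_{n-1}+x_n$, and an even-label agent $a_{2i}$ knows $x_1,\dots,x_{n-3}$, $x_n$, and the sum $x_{n-2}+x_{n-1}$. In either case the agent is short exactly one linear relation to pin down all of $x_1,\dots,x_n$.

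The second step supplies that missing relation from the three Pivot rounds, using the collision-distance formulas stated just before the lemma (which follow from Proposition~\ref{prop:bounce} applied with the appropriate starting agent and direction). For an odd label $2i-1$ with $i\le n/4$, the first collision of $a_{2i-1}$ in Pivot($n$) is at distance $\tfrac12(x_n+x_1+\cdots+x_{2i-2})$; knowing $x_1,\dots,x_{2i-2}$ the agent extracts $x_n$ and then $x_{n-1}=(x_{n-1}+x_n)-x_n$. For $n/4<i\le n/2-1$, the first collision of $a_{2i-1}$ in Pivot($n-1$) is at distance $\tfrac12(x_{n-1}+x_{2i-1}+\cdots+x_{n-2})$, giving $x_{n-1}$ and hence $x_n$; the single remaining odd-label agent $a_{n-1}$ is handled the same way. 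Since Proposition~\ref{prop:linear}(b) is Proposition~\ref{prop:linear}(a) with all indices shifted by $-1$, the even-label agents are treated symmetrically with Pivot($n-1$) and Pivot($n-2$) in place of Pivot($n$) and Pivot($n-1$); this in particular covers the leader's left neighbour $a_n$, whose label $n$ is even. The key point is that each agent can match a recorded $\coll()$ value with the correct one of these sums precisely because it knows its own ring distance.

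Combining the two steps, every agent knows all of $x_1,\dots,x_n$, and, knowing its label $i$, it can compute the clockwise distance from itself to $a_j$ as the corresponding cyclic partial sum of the $x_k$; thus it knows the initial position of every other agent relative to its own, which is exactly location discovery (with only $n/2+O(1)$ rounds beyond the preprocessing). I expect the main obstacle to be the bookkeeping of the second step: verifying each Pivot collision-distance formula precisely — which agents collide at the midpoint of their gap and which do not — and checking that the ranges $i\le n/4$ and $n/4<i\le n/2-1$, together with the explicitly listed boundary agents, exhaust all labels of each parity, with $n>4$ ruling out the degenerate small cases.
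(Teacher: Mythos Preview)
Your argument is correct and mirrors the paper's own reasoning: you invoke Proposition~\ref{prop:linear} for the for-loop, then use the Pivot collision formulas (split into the ranges $i\le n/4$ and $n/4<i\le n/2-1$ for odd labels, with the ``shift by $-1$'' observation for even labels) to supply the one missing equation. Your additions---that the $n/2$ Convolution rounds return every agent to its original position before the Pivot rounds, and the explicit final step that knowledge of all $x_i$ together with one's own label yields location discovery---are helpful clarifications the paper leaves implicit.
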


\iffull
By combining the subroutines described before, we get the following result.
\fi
\begin{theorem}
The position discovery problem can be solved in the perceptive model
in $n/2+O(\sqrt{n}\log N)$ rounds.
\end{theorem}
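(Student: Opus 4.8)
The statement should follow by concatenating the three building blocks developed in this section — a coordination preprocessing phase, then RingDist, then Distances — and checking that the preprocessing cost is $O(\sqrt n\log N)$ so that the $n/2$ term dominates. For the preprocessing I would first run NMoveS, which by Lemma~\ref{lem:ntmp} solves the nontrivial move problem in $O(\sqrt n\log N/\log n)$ rounds in the perceptive model; as part of this (or as a separate $O(\log N)$-round step, via Proposition~\ref{p:alg1}) the one-bit neighbour-communication primitive of Corollary~\ref{c:sim} is set up, and this primitive is exactly what the dissemination subroutines inside both NMoveS and RingDist rely on. From a nontrivial move, a common sense of direction follows in $O(1)$ rounds (Lemma~\ref{l:move:comm}) and a unique leader $a_1$ is elected in $O(\log N)$ rounds (Lemma~\ref{l:ntm:le}); abstractly, Theorem~\ref{ABC} delivers all the coordination primitives within $O(\sqrt n\log N)$ rounds once nontrivial move is solved.

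\textbf{From coordination to position discovery.} Next I would run RingDist (Algorithm~\ref{alg:move:perceptive}); by Lemma~\ref{lem:ring}, in the leader-rooted labelling every agent $a_i$ learns its index $i$ (its right ring distance to the leader) in $O(\sqrt n\log N)$ rounds. With a leader, a common orientation, and all labels known, the hypotheses of Lemma~\ref{lem:distances} hold, so I would finally run Distances (Algorithm~\ref{alg:disc:perc}): its main loop performs $n/2$ rounds of Convolution, each of rotation index $2$ (so the agents complete exactly one full tour of the ring), followed by $3$ rounds of Pivot, each of rotation index $0$. By Proposition~\ref{prop:linear} and the discussion of the Pivot rounds following it, the values $\pos()$ and $\coll()$ recorded over these $n/2+3$ rounds assemble, for each agent, into a full-rank system of $n$ linear equations in the gap lengths $x_1,\dots,x_n$, from which the agent recovers the initial position of every other agent relative to its own. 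Summing the phases gives $O(\sqrt n\log N)+O(\sqrt n\log N)+n/2+O(1)=n/2+O(\sqrt n\log N)$. For odd $n$ the same scheme is used with $\lceil n/2\rceil$ convolution rounds and a matching set of pivot rounds, and the parity of $n$ is known to the agents.

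\textbf{Where the work is.} The crux I expect is verifying that half a tour of the ring really yields a full-rank system, so that one avoids the $n$ rounds a rotation-index-$2$ round naively suggests: each Convolution contributes a fresh equation $x_k+x_{k+1}=\pos()$ together with an equation reading off a single gap, $x_j=2\coll()$, and the three Pivot rounds supply, for every agent, the one remaining independent equation needed to split the still-unresolved pair — this is precisely the content of Proposition~\ref{prop:linear} and the Pivot argument after it. A secondary point worth making explicit is that Shift, Convolution and Pivot are defined purely in terms of the agents' labels and have known rotation indices, so agents execute them with no further communication and their $\pos()$/$\coll()$ readings remain meaningful no matter how the preprocessing phase permuted the agents along the ring; everything else is a routine composition of the lemmas already established in this section.
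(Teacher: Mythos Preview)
Your proposal is correct and matches the paper's approach: the paper simply states that the theorem follows ``by combining the subroutines described before,'' and you have spelled out precisely that combination --- NMoveS (Lemma~\ref{lem:ntmp}) followed by direction agreement and leader election via Theorem~\ref{ABC}, then RingDist (Lemma~\ref{lem:ring}), then Distances (Lemma~\ref{lem:distances}) --- with the correct round counts summing to $n/2+O(\sqrt{n}\log N)$.
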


\iffull
\section{On distinguishing parity of $n$ and small values of $n$}\labell{sec:special}
\input{distinguish.tex}
\else
\fi

\iffull
\section{Randomized Algorithms}\labell{sec:random}
\input{random.tex}
\else
\fi

\iffull
\section{Conclusions and open problems}\labell{sec:open}
In this paper, we evaluated complexity of position discovery
and some coordination problems in synchronous geometric
ring networks in considered settings.
\iffull
In the {\basic} and {\lazy} model, we have shown an exponential gap in complexity 
of coordination problems between the cases that $n$ is odd
and even.
Moreover, we established a relationship of these problems with
a purely combinatorial structure called a distinguisher.
In all cases in which the location discovery problem is feasible,
we provided efficient, (almost) optimal solutions.
\fi
An interesting open problem is to give explicit constructions
for the nontrivial move problem for even $n$ in the {\basic}
and {\lazy} model. (In this paper, their existence
is merely shown by the probabilistic method.)
The exact complexity of the nontrivial move problem in the {\perceptive}
model is also not known; we have only provided the upper bound
$O(\sqrt{n}\log N)$.
\iffull
We also do not know whether the additive term $\frac{n}{c}$
in solutions for the position discovery for anonymous networks
is necessary  (see Theorems~\ref{th:random:pd}, \ref{th:random:coord}). 
\fi
\fi

\iffull
\else
\fi
\section*{Acknowledgements}

This work was supported by Polish National Science Centre grant
DEC-2012/06/M/ST6/00459. 


\iffull
\bibliographystyle{IEEEtran}
\bibliography{references}

\begin{thebibliography}{10}
\providecommand{\url}[1]{#1}
\csname url@samestyle\endcsname
\providecommand{\newblock}{\relax}
\providecommand{\bibinfo}[2]{#2}
\providecommand{\BIBentrySTDinterwordspacing}{\spaceskip=0pt\relax}
\providecommand{\BIBentryALTinterwordstretchfactor}{4}
\providecommand{\BIBentryALTinterwordspacing}{\spaceskip=\fontdimen2\font plus
\BIBentryALTinterwordstretchfactor\fontdimen3\font minus
  \fontdimen4\font\relax}
\providecommand{\BIBforeignlanguage}[2]{{%
\expandafter\ifx\csname l@#1\endcsname\relax
\typeout{** WARNING: IEEEtran.bst: No hyphenation pattern has been}%
\typeout{** loaded for the language `#1'. Using the pattern for}%
\typeout{** the default language instead.}%
\else
\language=\csname l@#1\endcsname
\fi
#2}}
\providecommand{\BIBdecl}{\relax}
\BIBdecl

\bibitem{ASY}
H.~Ando, I.~Suzuki, and M.~Yamashita, ``Formation and agreement problems for
  synchronous mobile robots with limited visibility,'' in \emph{Intelligent
  Control}, 1995, 
  pp. 453--460.

\bibitem{ArkinFM00}
\BIBentryALTinterwordspacing
E.~M. Arkin, S.~P. Fekete, and J.~S.~B. Mitchell, ``Approximation algorithms
  for lawn mowing and milling,'' \emph{Comput. Geom.}, vol.~17, no. 1-2, pp.
  25--50, 2000. 
\BIBentrySTDinterwordspacing

\bibitem{AW-book}
H.~Attiya and J.~Welch, \emph{Distributed Computing}.\hskip 1em plus 0.5em
  minus 0.4em\relax McGraw-Hill, 1998.

\bibitem{BenderS94}
M.~A. Bender and D.~K. Slonim, ``The power of team exploration: Two robots can
  learn unlabeled directed graphs,'' in \emph{FOCS}.\hskip 1em plus 0.5em minus
  0.4em\relax IEEE Computer Society, 1994, pp. 75--85.

\bibitem{ChalopinFMS10}
\BIBentryALTinterwordspacing
J.~Chalopin, P.~Flocchini, B.~Mans, and N.~Santoro, ``Network exploration by
  silent and oblivious robots,'' in 
	{WG} 2010, 
	LNCS 6410, 2010, pp. 208--219. 
\BIBentrySTDinterwordspacing

\bibitem{CieliebakFPS12}
M.~Cieliebak, P.~Flocchini, G.~Prencipe, and N.~Santoro, ``Distributed
  computing by mobile robots: Gathering,'' \emph{{SIAM} J. Comput.}, vol.~41,
  no.~4, pp. 829--879, 2012.

\bibitem{ClementiMS03}
A.~E.~F. Clementi, A.~Monti, and R.~Silvestri, ``Distributed broadcast in radio
  networks of unknown topology,'' \emph{Theor. Comput. Sci.}, vol. 302, no.
  1-3, pp. 337--364, 2003.

\bibitem{Cooley05}
B.~Cooley and P.K.~Newton, ``Iterated impact dynamics of $N$-beads on a ring,''
\emph{SIAM Review}, vol.~47, no.~2, pp. 273--300, 2005.

\bibitem{CzyzowiczDKP14}
\BIBentryALTinterwordspacing
J.~Czyzowicz, S.~Dobrev, E.~Kranakis, and E.~Pacheco, ``Survivability of swarms
  of bouncing robots,'' in \emph{LATIN} 2014, 
	LNCS 8392, 
	pp.
  622--633. 
\BIBentrySTDinterwordspacing

\bibitem{CzyzowiczGKKPP12}
J.~Czyzowicz, L.~G\k{a}sieniec, A.~Kosowski, E.~Kranakis, O.~M. Ponce, and
  E.~Pacheco, ``Position discovery for a system of bouncing robots,'' 
	{DISC} 2012,
	LNCS 7611. 
	Springer, 2012, pp. 341--355.

\bibitem{CzyzowiczKP13}
\BIBentryALTinterwordspacing
J.~Czyzowicz, E.~Kranakis, and E.~Pacheco, ``Localization for a system of
  colliding robots,'' in 
	{ICALP} 2013, 
	LNCS 7966, 
	Springer, 2013, pp. 508--519. 
\BIBentrySTDinterwordspacing

\bibitem{CzyzowiczLP11}
\BIBentryALTinterwordspacing
J.~Czyzowicz, A.~Labourel, and A.~Pelc, ``Optimality and competitiveness of
  exploring polygons by mobile robots,'' \emph{Inf. Comput.}, vol. 209, no.~1,
  pp. 74--88, 2011. 
\BIBentrySTDinterwordspacing

\bibitem{DasFSY10}
S.~Das, P.~Flocchini, N.~Santoro, and M.~Yamashita, ``On the computational
  power of oblivious robots: forming a series of geometric patterns,'' in
	{PODC} 2010, 
  {ACM}, 2010, pp. 267--276.

\bibitem{DengKP98}
\BIBentryALTinterwordspacing
X.~Deng, T.~Kameda, and C.~H. Papadimitriou, ``How to learn an unknown
  environment {I:} the rectilinear case,'' \emph{J. {ACM}}, vol.~45, no.~2, pp.
  215--245, 1998. 
\BIBentrySTDinterwordspacing

\bibitem{FlocchiniPSW08}
P.~Flocchini, G.~Prencipe, N.~Santoro, and P.~Widmayer, ``Arbitrary pattern
  formation by asynchronous, anonymous, oblivious robots,'' \emph{Theor.
  Comput. Sci.}, vol. 407, no. 1-3, pp. 412--447, 2008.

\bibitem{FraigniaudGKP06}
\BIBentryALTinterwordspacing
P.~Fraigniaud, L.~G\k{a}sieniec, D.~R. Kowalski, and A.~Pelc, ``Collective tree
  exploration,'' \emph{Networks}, vol.~48, no.~3, pp. 166--177, 2006. 
\BIBentrySTDinterwordspacing

\bibitem{FranklF85}
P.~Frankl and Z.~F{\"u}redi, ``Forbidding just one intersection,'' \emph{J.
  Comb. Theory, Ser. A}, vol.~39, no.~2, pp. 160--176, 1985.

\bibitem{FriedetzkyGGM12}
T.~Friedetzky, L.~G\k{a}sieniec, T.~Gorry, and R.~Martin, ``Observe and remain
  silent (communication-less agent location discovery),'' in \emph{MFCS}, 
	LNCS 7464, 
	Springer, 2012, pp.
  407--418.

\bibitem{GrebinskiK00}
V.~Grebinski and G.~Kucherov, ``Optimal reconstruction of graphs under the
  additive model,'' \emph{Algorithmica}, vol.~28, no.~1, pp. 104--124, 2000.

\bibitem{HoffmannIKK01}
\BIBentryALTinterwordspacing
F.~Hoffmann, C.~Icking, R.~Klein, and K.~Kriegel, ``The polygon exploration
  problem,'' \emph{{SIAM} J. Comput.}, vol.~31, no.~2, pp. 577--600, 2001.
\BIBentrySTDinterwordspacing

\bibitem{kkm-book}
E.~Kranakis, D.~Krizanc, and E.~Markou, \emph{The Mobile Agent Rendezvous
  Problem in the Ring}.\hskip 1em plus 0.5em minus 0.4em\relax Morgan and
  Claypool Publishers, 2010.

\bibitem{lynch-book}
N.~Lynch, \emph{Distributed Algorithms}.\hskip 1em plus 0.5em minus 0.4em\relax
  Morgan Kaufmann Publishers, 1996.

\bibitem{Susca07}
S.~Susca, F.~Bullo, and S.~Mart\'{i}nez, ``Synchronization of beads on a ring,''
\emph{CDC} 2007, pp. 4845--4850.

\bibitem{SuzukiY99}
I.~Suzuki and M.~Yamashita, ``Distributed anonymous mobile robots: Formation of
  geometric patterns,'' \emph{{SIAM} J. Comput.}, vol.~28, no.~4, pp.
  1347--1363, 1999.


\end{thebibliography}
\else

\fi

\newpage
\section*{Appendix: Proofs omitted from Section~\ref{sec:basic}}

\noindent {\em Proof of Proposition~\ref{prop:reduction}}

First, assume that $n$ is known and $\m{A}$ solves the weak nontrivial move problem.
Observe that, until the first round of $\m{A}$ with a weak nontrivial move, the 
only information available
to each agent is that its starting position in a round is equal to its position at the end
of a round. Thus, its behavior can be defined by a sequence of sets $S_1, S_2, \ldots$,
such that the agent $a$ chooses direction {\rright} in round $i$ (provided no nontrivial
move appeared before) if and only if $a\in S_i$.
Let us fix which sense of direction is ``correct''.
Then, consider the situation in which the set of agents $X_1$ with the 
correct sense of direction and
the set of agents $X_2$ with the incorrect sense of direction satisfy $|X_1|=|X_2|=n/2$.
Let $m_1=|X_1\cap S_i|$, $m_2=|X_2\cap S_i|$.
Then,
the rotation index (mod $n$) in round $i$ is
$$\begin{array}{lll}
& & (|X_1\cap S_i|+|X_2\setminus S_i|) - (|X_1\setminus S_i|+|X_2\cap S_i|)  \\
                                       & & =  (m_1+n/2-m_2)-(n/2-m_1+m_2)\\
& & =  2(m_1-m_2).
\end{array}$$
And therefore the $i$th round of $\m{A}$ gives a (weak) nontrivial move if and only 
if $2(m_1-m_2)\not\in\{0,n\}$, 
which implies $m_1\neq m_2$. On the other hand, $m_1\neq m_2$ is equivalent to the fact that
$S_i$ distinguishes $X_1$ and $X_2$.
In conclusion, the sequence $S_1,S_2,\ldots$ defining $\m{A}$ is a $(N,n/2)$-distinguisher.

For unknown $n$, the result follows from the above reasoning and the fact that $\m{A}$
has to tackle arbitrary even $n\leq N$ which reflects the difference between
a standard $(N,n)$-distinguisher and its strong counterpart.
\IEEEQEDclosed
%

Before giving the proof of Lemma~\ref{lem:weak}, we provide a lower 
bound on the size of a strong $(N,n)$-distinguisher with
a simple proof based on a counting argument
(a similar bound in another context was given e.g.\ in \cite{GrebinskiK00}).
Although this result is subsumed by Lemma~\ref{lem:weak}, we provide it 
to give some intuition before a more complicated, and less intuitive proof,
of Lemma~\ref{lem:weak}.
\begin{lemma}\labell{lem:strong}
If $S$ is a strong $(N,f)$-distinguisher for any $N>4$ 
and $f\,:\,\NAT\times\NAT\to\NAT$, then 
$f(N,n)=\Omega\left(\frac{n\log(N/n)}{\log n}\right)$.
\end{lemma}

\begin{proof}
First, we show  that a strong $(N,f)$-distinguisher  $\mathcal{S}$ 
satisfies the property that for each two different sets $X_1,X_2\subset [N]$ such that
$|X_1|=|X_2|=n$, there exists $i\leq f(N,n)$ such that
$|X_1\cap S_i|\neq |X_2\cap S_i|$
(note that $X_1$ and $X_2$ do not have to be disjoint!).
Indeed, assume to the contrary that this is not the case 
for $\mathcal{S}$, and thus $|X_1\cap S_i|=|X_2\cap S_i|$
for some different sets $X_1,X_2$ of size $n>1$ and each $i\in[f(N,n)]$.
Let $Y_1=X_1\setminus X_2$ and $Y_2=X_2\setminus X_1$. Then, $Y_1\cap Y_2=\emptyset$,
$|Y_1|=|Y_2|\leq n$ and $|Y_1\cap S_i|=|Y_2\cap S_i|$ for each $i\in[f(N,n)]$. This implies
that $\mathcal{S}$ is not a strong $(N,f)$-distinguisher, which is a contradiction.

Let $\mathcal{S}=(S_1,\ldots,S_k)$ be a strong $(N,f)$-distinguisher. The above observation implies
that, for any $X\neq X'$, $X,X'\subset[N]$ of size $n$, the sequences $|X\cap S_1|,\ldots,|X\cap S_k|$
and $|X'\cap S_1|,\ldots,|X'\cap S_k|$ are not equal, where
$k=f(N,n)$.
As each $S_i$ gives at most $n+1$ possible values of $|X\cap S_i|$ for $X\subset[N]$ of size $n$,
and there are $N\choose n$ subsets of $[N]$ of size $n$, we obtain the following bound
$$k\geq \log_{n+1}{N\choose n}=\Omega\left(\frac{\log{N\choose n}}{\log(n+1)}\right)=\Omega\left(\frac{n\log(N/n)}{\log n}\right)$$
for $n>1$.
\end{proof}
It turns out that the result of Lemma~\ref{lem:strong} can be strengthened,
to give Lemma~\ref{lem:weak}.  
However, our proof of this fact is much more complicated.  It applies
techniques from~\cite{ClementiMS03}, designed for proving lower bounds on size of
selective families.
We stress here that the lower bound for a strong variant of a distinguisher does not imply an analogous lower bound for
a ``standard'' variant of a distinguisher.
As observed in the proof of Lemma~\ref{lem:strong}, the
prefix of size $f(N,n)$ of a stong $(N,f)$-distinguisher
gives an opportunity to ``distinguish'' \textbf{each} pair of sets
of size $n$. On the other hand, a standard $(N,n)$-distinguisher
is supposed to give a difference only on \textbf{disjoint} sets
of size $n$.

\noindent {\em Proof of Lemma~\ref{lem:weak}}

Let us first stress that the calculations from the previous lemma do not apply here, since
a (``standard'') distinguisher does not have to ``distinguish'' small sets, so it does not have to distinguish
non-disjoint sets of size $n$ either.

Let $G(V,E)$ be a graph, whose vertices are all $2n$-subsets of $[N]$, where the 
edges connect vertices
corresponding to sets which have exactly $n$ common elements.
That is, $(X_1,X_2)\in E$ for $X_1,X_2\in V$ if and only if $|X_1\cap X_2|=n$.
Let $\alpha(G)$ and $\chi(G)$ denote the size of the largest independent set of $G$
and the chromatic number of $G$, respectively.
We claim that
\begin{eqnarray}
\log \chi(G) & \geq & \frac16 n\log(N/(2n)) \ \ \ \textrm {and}\label{e:color1}  \\
\log \chi(G) & \leq  & |\m{S}|\log(2n+1). \label{e:color2}
\end{eqnarray}
Proof of (\ref{e:color1}):\newline
We use the fact that $\chi(G)\geq \frac{|V|}{\alpha(G)}$.
Moreover, as each independent set of $G$ is a $(N,2n,n)$-intersection free family
of sets,
Fact~\ref{fac:free} implies that
$$\log\alpha(G)\leq \frac{22}{12}n\log(N/(2n)).$$
Therefore
$$\begin{array}{rclcl}
\log \chi(G) & \geq & \log |V|-\log\alpha(G) \\
                & \geq & \log {N\choose 2n}-\frac{22}{12}n\log(N/(2n)) \ \ \ \textrm {and} \\
               & \geq & 2n\log(N/(2n))-\frac{22}{12}n\log(N/(2n)) \\
               & = & \frac{1}{6}n\log(N/(2n)), 
\end{array}
$$
which gives (\ref{e:color1}). In the third inequality, we use the relation 
${a \choose b}\ge \left(\frac{a}{b}\right)^b$.

\noindent Proof of (\ref{e:color2}):\newline
\comment{
Let $\m{S}=(S_1,\ldots,S_m)$ be a $(N,n)$-distinguisher, and let $G_i(V_i,E_i)$ be a graph associated with $S_i$ such that
\begin{enumerate}
\item
$V_i=V$ and
\item
$(X_1,X_2)\in E_i$ if and only if $(X_1,X_2)\in E$ and
$|(X_1\setminus X_2)\cap S_i|\neq |(X_2\setminus X_1)\cap S_i|$
(i.e.\ $S_i$ distinguishes $X_1\setminus X_2$ and $X_2\setminus X_1$).
\end{enumerate}
Now, observe that  the set $\{X\,|\ |S_i\cap X|=p\}$ is independent in $G_i$
for each $p\in[0,n]$. 
Therefore, $\chi(G_i)\leq n+1$ since
the graphs $G_1,\ldots,G_m$ share the same set of vertices, so
$\chi(\bigcup_{i=1}^m G_i)\leq \Pi_{i=1}^m \chi(G_i)$.
The assumption that $\m{S}$ is a $(N,n)$-distinguisher implies that $\bigcup_i E_i=E$.
Therefore,
$$
\begin{array}{rcl}
\log \chi(G)  & \leq & \log \chi(\bigcup_i G_i) \\
                 & \leq & \log\left\{ \Pi_{i=1}^m \chi(G_i)\right\}  \\
                 & =   &\sum_{i=1}^m\log\chi(G_i)\leq m\log(n+1),
\end{array}
$$
which proves (\ref{e:color2}).
}
Let $\m{S}=(S_1,\ldots,S_m)$ be a $(N,n)$-distinguisher.
Observe that for any two sets $X_1,X_2$ such that $|X_1\cap X_2|=n$ there exists $S_i$
such that $|S_i\cap X_1|\neq |S_i\cap X_2|$.
In other words for any tuple $(p_1,\ldots,p_m),p_i\in[0,2n]$
the set $\{X\,:\forall i \ |S_i\cap X|=p_i\}$ is independent in $G$.
Therefore, $\chi(G)\leq (2n+1)^m$.
Thus
$$
\log \chi(G)  \leq m\log(2n+1),
$$
which proves (\ref{e:color2}).

Finally, observe that (\ref{e:color1}) and (\ref{e:color2}) imply the statement
of the lemma.
\IEEEQEDclosed

\noindent {\em Proof of Corollary~\ref{cor:lower:move}}

The result 
follows directly from Proposition~\ref{prop:reduction}
and Lemma~\ref{lem:weak}.
\IEEEQEDclosed

\noindent {\em Proof of Theorem~\ref{thm:selector}}

Let us choose a sequence $\m{S}$ of sets $S_1,S_2,\ldots$ probabilistically, such that each
$x\in[N]$ belongs to $S_i$ with probability $1/2$, where all choices are independent.
Then, our algorithm is defined such that, in round $i$, the agents with IDs in $S_i$
choose direction {\rright} and the other ones choose the direction {\lleft}.
We show that the family $\m{S}=(S_1,\ldots,S_k)$ chosen in this way gives a protocol
solving the nontrivial move problem with positive probability, provided the size
$n$ of the network is smaller than $N/3$. 
That is, the following event holds with positive probability:
for each $X\subset [N]$ such that $|X|<N/3$, the nontrivial move appears during
an execution of the prefix of $\m{S}$ of size $O(n\log(N/n)/\log n)$, where $n=|X|$.
Then we build a sequence $\m{C}$ of size $O(N/\log N)$ which gives a nontrivial move
on each $X\subset [N]$ of size at least $N/3$.
Thus, by interleaving $\m{S}$ and $\m{C}$,  the theorem holds thanks to the probabilistic
method.

Let us fix a set of IDs $A\subset[N]$ of size $n$ and assign sense of directions to them
such that $A=A_c\cup A_i$, where $A_c$ is the set of agents with correct sense
of directions, $|A_c|=n_c$ and $|A_i|=n-n_c$.
Recall that a round does {\em not} give a nontrivial move if and only if it is a
$(0,n)$-round, $(n,0)$-round, $(n/2,n/2)$-round, $(3n/4,n/4)$-round,
or a $(n/4,3n/4)$-round. Then, 
for a round defined by $S_i$ as above, we have:
$$\begin{array}{rcll}
\text{Prob}((n/2,n/2)\text{-round})&=&\frac1{2^n}\sum_{j=0}^{\min(n_c,n/2)} {n_c\choose j}{n-n_c\choose n/2-j} \\
                                               &=&\frac1{2^n}{n\choose {n/2}}\leq \frac{c_0}{n^{1/2}}, \\ 
\text{Prob}((0,n)\text{-round})&=&\frac1{2^n}{n_c\choose 0}{n-n_c\choose n-n_c} =\frac1{2^n}, \\
\text{Prob}((n,0)\text{-round})&=&\frac1{2^n}{n_c\choose n_c}{n-n_c\choose 0} =\frac1{2^n}, \\
\text{Prob}((n/4,3n/4)\text{-round})&=&\frac1{2^n}\sum_{j=0}^{\min(n_c,3n/4)} {n_c\choose j}{n-n_c\choose 3n/4-j} \\
                                                 &=&\frac1{2^n}{n\choose n/4}=1/2^{\Theta(n)}, \ \textrm{and} \\
\text{Prob}((3n/4,n/4)\text{-round})&=&\text{Prob}((n/4,3n/4)\text{-round})  \\
                                                 &=&1/2^{\Theta(n)}.\\
\end{array}$$
In the above calculations, we use the relationship that
$\sum_{i=0}^{\min(a,c)} {a\choose i} {b-a\choose c-i}={b\choose c}$ and Stirling's formula 
which determines the constant $c_0$ in the first row.
The above estimations imply that the probability that a round defined by $S_i$
is a trivial move for $|A|=n$ is at most $c_1/\sqrt{n}$ for some 
constant $c_1$, provided $n$ is large enough.
Let us consider all sets of IDs $A$ such that $|A|\in[2^{i-1},2^i)$, 
for $i$ such that $2^i<N/3$.
Let $k=c\frac{2\log{N\choose 2^i}}{i-1}$ for a large enough constant $c$ whose value
will be determined later.
By $E_i$ we denote the event that a sequence of sets $S_1,\ldots,S_k$ does {\em not}
give a nontrivial move for
all sets $A$ whose size is in $[2^{i-1},2^i)$. Then, 
$$\begin{array}{rclcl}
\text{Prob}(E_i) &\leq& \sum_{d=2^{i-1}}^{2^i}(\text{Prob}(\text{triv.\ move on a set of size }d))^k\cdot{N\choose d} 2^d \\
&\leq&
\sum_{d=2^{i-1}}^{2^i}\frac{{N\choose d} 2^d\cdot c_1}{2^{(i-1)k/2}}
\leq c_1\sum_{d=2^{i-1}}^{2^i}\frac{{N\choose d}^2}{{N\choose 2^i}^3}\\
&\leq& c_1\sum_{d=2^{i-1}}^{2^i}\frac{1}{{N\choose 2^i}}\leq
c_1\sum_{d=2^{i-1}}^{2^i}\frac{1}{2^i}<c_1\frac1{2^{i-2}}.
\end{array}$$
%
In the above calculations, we use the following facts:
\begin{itemize}
\item
${N \choose d}2^d$ is the number of possible choices of sets of size $d$, and senses of direction of elements of these sets (used in the first inequality);
\item
$\text{Prob}(\text{triv.\ move on a set of size }d)\leq \frac{c_1}{\sqrt{d}}\leq \frac{c_1}{2^{(i-1)/2}}$ 
(used in the second inequality);
\item
$2^{(i-1)k/2}\geq {N\choose 2^i}^{c}\ge {N\choose 2^i}^{3}$ for $c\ge 3$ (which 
follows from the fact that $k=c\frac{2\log{N\choose 2^i}}{i-1}$; used in the third
inequality);
\item
$2^d\leq {N\choose d}$
for $d\leq N/3$ (used in the third inequality);
\item
${N\choose d}\leq {N\choose 2^i}$  
for $d\leq N/3$ (used in the fourth inequality);
\item
${N\choose 2^i}\ge (N/2^i)^{2^i}\ge 2^i$ if $2^i<N/2$ (used in the fifth inequality).
\end{itemize}
Let $i_0=\lceil \log 4c_1\rceil+1$ and $i_1=\lfloor \log (N/3)\rfloor$.
The above calculations show that, the union of events $E_{i_0}, E_{i_0+1},\ldots,E_{i_1}$
holds with probability $\sum_{i_0}^{i_1}c_1/2^i<1/2$ for $c>3$.
Therefore, by the probabilistic method, the sequence $\m{S}$ gives a nontrivial move for each set of IDs of size
in $[2^{i_0},2^{i_1}]=[4c_1,N/c]$. It remains to tackle the cases that $n<2^{i_0}$ and $n>2^{i_1}$.

As for $n<2^{i_0}$, note that $2^{i_0}$ is a constant independent of $n$.
Thus the number of sets of size $<2^{i_0}$ is polynomial wrt $N$, while the
probability that a round gives a nontrivial move for a given set is larger
than some positive constant independent of $N$. Therefore on a sufficiently
long prefix of $\m{S}$ of length $O(\log N)=O(n\log(N/n)/\log n)$, the nontrivial
move appears with for each set of size
$<2^{i_0}$ with probability $1-1/N$.

Now, we consider the case that the size $n>2^{i_1}>N/3$. 
The number of such sets is upper bounded by $2^N$.
And, for each such set, each round gives a nontrivial move with probability at least
$c'/\sqrt{N}$ for a constant $c'$. By a simple calculation, one can show that
the nontrivial move appears for each such set on a long enough prefix
of $\m{S}$ of size $O(N/\log N)$ with probability $1-1/N$.
More precisely, on a prefix of $c''\log N$, the probability that there is a
set without a nontrivial move is smaller than
$$2^N (c'/\sqrt{N})^{c''\log N}<1/N$$
for a large enough constant $c''$.
\IEEEQEDclosed

\end{document}